\numberwithin{equation}{section}
\theoremstyle{plain}
\newtheorem{thm}{Theorem}[section]
\newtheorem{cor}[thm]{Corollary}
\newtheorem{lem}[thm]{Lemma}
\newtheorem{prop}[thm]{Proposition}
\newtheorem{conj}[thm]{Conjecture}
\theoremstyle{definition}
\newtheorem{defin}[thm]{Definition}
\newtheorem{rem}[thm]{Remark}
\DeclareMathOperator*{\Res}{Res}
\numberwithin{equation}{section}
\def\mainmatter{\def\baselinestretch{1}\normalfont \setlength{\parskip}{0.5em}}
\newcommand{\bC}{\mathbb{C}}
\newcommand{\bZ}{\mathbb{Z}}
\newcommand{\bP}{\mathbb{P}}
\newcommand{\cS}{\mathcal{S}}
\newcommand{\cP}{\mathcal{P}}
\newcommand{\cA}{\mathcal{A}}
\newcommand{\cB}{\mathcal{B}}
\newcommand{\cR}{\mathcal{R}}
\newcommand{\cH}{\mathcal{H}}
\newcommand{\sQ}{\mathscr{Q}}
\newcommand{\sA}{\mathscr{A}}
\newcommand{\sB}{\mathscr{B}}
\newcommand{\sI}{\mathscr{I}}
\renewcommand{\section}{\@startsection
{section}
{1}
{\z@}
{-\baselineskip}
{0.8\baselineskip}
{\centering\scshape\large}} 
\renewcommand{\subsection}{\@startsection
{subsection}
{2}
{\z@}
{-0.8\baselineskip}
{0.5\baselineskip}
{\normalfont \bf \normalsize}} 
\renewcommand{\subsubsection}{\@startsection
{subsubsection}
{3}
{\z@}
{-0.8\baselineskip}
{0.5\baselineskip}
{\normalfont \bf \normalsize}} 
\begin{document}
\title{Refined Topological Recursion Revisited \\
--- properties and conjectures}

\author{Kento Osuga}
\address{Graduate School of Mathematical Sciences, University of Tokyo,\\
3-8-1 Komaba, Meguro, Tokyo, 153-8914, Japan}
\email{osuga@ms.u-tokyo.ac.jp}

\begin{abstract}
For any (possibly singular) hyperelliptic curve, we give the definition of a hyperelliptic refined spectral curve and the hyperelliptic refined topological recursion, generalising the formulation for a special class of genus-zero curves by Kidwai and the author, and also improving the proposal by Chekhov and Eynard. Along the way, we uncover a fundamental geometric structure underlying the hyperelliptic refined topological recursion and investigate its properties --- parts of which remain conjectural due to computational difficulties. Moreover, we establish a new recursion valid in the so-called Nekrasov-Shatashivili limit and prove existence of the corresponding quantum curve.

\end{abstract}

\newpage\maketitle
\setcounter{tocdepth}{1}\tableofcontents\mainmatter

\newpage
\section{Introduction}
The purpose of the present paper is two-fold. One is to introduce the notion of the hyperelliptic refined topological recursion for any (possibly singular) hyperelliptic curve, generalising the work of Kidwai and the author which only considers a special class of genus-zero curves \cite{KO22}, and also clarifying subtleties that are not addressed in the Chekhov-Eynard proposal \cite{CE06}. In the process, we uncover a fundamental geometric structure underlying the hyperelliptic refined topological recursion which we call hyperelliptic refined loop equations. The other purpose is to define a new recursion valid in the so-called Nekrasov-Shatashivili limit and prove existence of the corresponding quantum curve. Some properties remain conjectural, and we perhaps need a new approach to prove them --- we will address why computational techniques utilised in the present paper, or in e.g. \cite{EO07,KO22}, are not easy to apply in the refined setting on a higher-genus curve.

\emph{Topological recursion} has become an important research subject in both mathematics and physics. Although it originated from the study of matrix models \cite{CE05}, the formalism has been mathematically developed with more generalities in \cite{EO07,EO08,BE12,BS15}, and therefore, topological recursion is now applicable to a variety of enumerative problems far beyond matrix models. The list of applications includes semi-simple cohomological field theory invariants \cite{NS11,DOSS12,Mil12}, and notably the remodeling theorem  \cite{BKMP07,BKMP08,EO12,FLZ16} for Gromov-Witten invariants of toric Calabi-Yau threefolds. Recently, an algebraic reformulation of topological recursion has been intensively studied with a new notion called \emph{Airy structures} \cite{KS17,ABCO17,BBCCN18,BCHORS19}, with the presence of a refinement parameter \cite{BBCC21,O21}. Topological recursion has also been utilised to construct so-called \emph{quantum curves} which are closely related to isomonodromy systems \cite{GS11,BE16,Iwa19}. There are many more influential works in topological recursion which we are not able to list due to length constraints, and we refer to the readers e.g. \cite{E14} and references therein for more achievements of topological recursion and related subjects.

Despite the fast and rich development of topological recursion, however, a \emph{refinement} of topological recursion has been a long-standing problem. The first attempt was carried out by Chekhov and Eynard \cite{CE06} in line with $\beta$-deformed matrix models, and they proposed the refined recursion formula. Shortly after, however, Chekhov reported in \cite{C10} a necessity of modifications to the Chekhov-Eynard formula because Mari\~{n}o and Pasquetti observed an inconsistent behaviour when the proposed refined formula was applied to a genus-one curve (see \cite[Section 1]{C10}). Since then, there have been several related attempts such as \cite{CEM10,BMS10,M12,MS15} in order to properly refine the formalism of topological recursion, but to author's best knowledge, none of those proposed recursions can be applied beyond $\beta$-deformed matrix models, or not even to $\beta$-deformed matrix models unless the associated spectral curve is very simple. 

Our aim is to break through this phase and to geometrically define a hyperelliptic refined spectral curve on which we define the \emph{hyperelliptic refined topological recursion}. For a special class of genus-zero curves, Kidwai and the author have initiated this direction and written down a refined recursion formula in \cite{KO22} (strongly inspired by the Chekhov-Eynard formula), with applications to refined quantum curves. The present paper extends their approach with more generalities and with more details. Importantly, we study a sequence of equations among multidifferentials which turn out to be a fundamental geometric structure underlying the refinement. Another achievement of the present paper is to establish a new recursion relevant in the Nekrasov-Shatashivili limit, to which the Chekhov-Eynard-Orantin topological recursion \cite{CE05,CEO06,EO07} has no access.

This paper is organised as follows. In Section \ref{sec:Def}, we give the definition of a hyperelliptic refined spectral curve, hyperelliptic refined loop equations, and the hyperelliptic refined topological recursion. We also compare our formula with the proposed one by Chekhov and Eynard. Then in Section \ref{sec:properties}, we investigate properties of the hyperelliptic refined topological recursion in detail, where some of the properties remain conjectural due to computational difficulties. In Section \ref{sec:NS}, we define a new recursion relevant to the Nekrasov-Shatashivili limit, and prove existence of the corresponding quantum curve. We comment on technical computations and interesting observations in Appendix \ref{sec:appendix}.

\subsection{Summary of Results}

Formalising a refinement of topological recursion itself is one of the achievements of the present paper. In the process, the most important step is to geometrically describe hyperelliptic refined loop equations (Definition \ref{def:GLE}) which are a fundamental structure underlying the hyperelliptic refined topological recursion. Although it is an essential geometric requirement, from now on we often drop ``hyperelliptic'' for brevity. Understanding the geometry of refined loop equations directly connects to properties of the refined topological recursion. Moreover, it also gives us a hint of how to refine the higher-ramified Bouchard-Eynard recursion \cite{BE12}. Since the whole story involves a few new notions, we leave all the details to Section \ref{sec:Def}. Nevertheless, let us give an intuitive summary of our main results, with a remark that theorems stated in this section are paraphrases of formal statements in the body. 

Roughly speaking, our proposal of a refinement of topological recursion goes as follows:
\begin{description}
\item[Input] We call initial data of the refined topological recursion a \emph{hyperelliptic refined spectral curve} (Definition \ref{def:RSC}) which we denote by $\cS_{{\bm\mu},{\bm\kappa}}$. It is a (compactified and normalised) hyperelliptic curve $(\Sigma,x,y)$ with a Torelli marking, i.e. a choice of canonical basis $\cA_i,\cB_i\in H_1(\Sigma,\bZ)$ for $i\in\{1,..,\tilde g\}$, together with new parameters ${\bm\mu},{\bm\kappa}$ that appear only in the refined setting.
\item[Equation] Given $\cS_{{\bm\mu},{\bm\kappa}}$, we define a sequence of equations among multidifferentials which depend on the refinement parameter $\sQ$, and we call them \emph{refined loop equations} (Definition \ref{def:GLE}). By assuming existence of a solution of refined loop equations, we will prove that its unique solution can be recursively constructed by a certain contour integral formula (\eqref{RTR1} in Theorem \ref{thm:RTR}).
\item[Recursion] We then change our perspective and take the derived recursion formula as the defining equation for multidifferentials $\omega_{g,n}$ for $2g,n\in\bZ_{\geq0}$. The formalism is recursive in $2g-2+n$, hence we call it the \emph{hyperelliptic refined topological recursion} (Definition \ref{def:RTR}). $\omega_{g,n}$ depends polynomially on the refinement parameter $\sQ$, up to degree $2g$, thus one may expand as:
\begin{equation}
    \omega_{g,n}=\sum_{k=0}^{2g}\sQ^k\omega_{g,n}^{(k)}.
\end{equation}
\end{description}

Our formalism is purely geometric without referring to $\beta$-deformed matrix models in physics. Moreover, we have new degrees of freedom (${\bm\mu},{\bm\kappa}$) in the definition of a refined spectral curve that do not appear in the matrix model setting. Existence of a solution of refined loop equations remains to be proven when $\Sigma\neq\bP^1$, and we hope to return to this problem in the near future. We mostly focus on hyperelliptic curves, but a possible extension to higher-ramified curves will be discussed in Section \ref{sec:comments}.

The refined formulae in \cite{KO22} or \cite{CE06} look different from the recursion formula in the present paper (Definition \ref{def:RTR}), but one can show that they are equivalent for the class of genus-zero curves \cite{KO22} considers. However, we remark that there are subtleties in the proposed formula by Chekhov and Eynard \cite{CE06} when $\Sigma\neq\bP^1$, due to contributions from nontrivial $H_1(\Sigma,\bZ)$, and their recursion formula does not solve refined loop equations. Comparisons between the two recursions will be carefully discussed in Section \ref{sec:CE}. We note that multidifferentials in the unrefined sector of our formalism, i.e. $\omega_{g,n}^{(0)}$, obeys the Chekhov-Eynard-Orantin recursion as expected (Proposition \ref{prop:Q=0}).

$\omega_{g,n}$ for $2g+n-2>0$ satisfy several interesting properties. Even though some of them are similar to unrefined analogues, their proofs are more involved and one encounters difficulties if one only repeats the same strategy as e.g. \cite{EO07} or \cite{KO22}. This may suggest a necessity of new techniques to prove all the statements. For now, let us list important ones:
\begin{thm}[{Proposition \ref{prop:1/2,2} and Theorem \ref{thm:existence}}]\label{thm:intro1}
\hfill

\begin{itemize}
     \item[{\bf (A):}] For any refined spectral curve, there exists a unique solution of the refined loop equation of type $(g,n)$ when $2g-2+n=1$, and its solution $\omega_{g,n}$  is constructed by the refined topological recursion. In particular, $\omega_{\frac12,2}$ is a symmetric bidiffierential. 
    \item[{\bf (B):}] When $\Sigma=\bP^1$, there exists a unique solution of refined loop equations for all $2g-2+n>0$, and its solution $\omega_{g,n}$ is constructed by the refined topological recursion
\end{itemize}
    \end{thm}

\begin{conj}[Conjecture \ref{conj:existence}]\label{conj:intro4}
{\rm Theorem \ref{thm:intro1} (B)} holds even when $\Sigma\neq\mathbb{P}^1$.
\end{conj}

The recursion formula for $\omega_{g,n}^{(0)}$ is self-closed, i.e., $\omega_{g,n}^{(0)}$ can be determined without knowing $\omega_{g,n}^{(k)}$ for any $k>0$, and this is not a surprise. Theorem \ref{thm:intro1} (A) shows that we have fixed the issue reported in \cite{C10}, providing evidence that the refined topological recursion formula of the present paper is valid. Recall that $\omega_{g,n}$ polynomially depends on $\sQ$ up to degree $2g$. Now a critical observation is that the recursion formula for $\omega_{g,n}^{(2g)}$ is also self-closed, i.e. no information about $\omega_{g,n}^{(k)}$ for $k<2g$ is necessary to define $\omega_{g,n}^{(2g)}$ recursively. We assign a special symbol to the top $\sQ$-degree term as $\varpi_{g,n}:=\omega_{g,n}^{(2g)}$ and call the self-closed recursion for $\varpi_{g,n}$ the \emph{$\sQ$-top recursion} (Definition \ref{def:NS}). It turns out that the $\sQ$-top recursion is relevant to the Nekrasov-Shatashivili limit -- in contrast, the unrefined sector is often referred to as the self-dual limit.

Moreover, another crucial observation is that the $\sQ$-top recursion is no longer a recursion in $2g-2+n$ but rather it is recursive in $g$ and $n$ separately. Therefore, it is not quite a `topological' recursion and in fact we intentionally removed `topological' from the name. In particular, the recursion for $\varpi_{g,1}$ does not involve $\varpi_{g,n+2}$. This phenomenon makes it possible for us to prove existence of the $\sQ$-top part of the refined loop equation of type $(g,1)$ for all $2g\in\bZ_{\geq0}$, hence Conjecture \ref{conj:intro4} will be upgraded to a theorem for $\varpi_{g,1}$ (it still remains conjectural for other $\varpi_{g,n+2}$). Furthermore, one can show that $\varpi_{g,1}$ satisfies the Riccati-type recursion in the context of the WKB analysis, implying that there exists an associated 2nd-order ordinary differential equation. Although technical details are only addressed in Section \ref{sec:NS}, let us state it as a theorem to emphasise:
\begin{thm}[Theorem \ref{thm:NS}]\label{thm:intro5}
Given $\cS_{\bm\mu,\bm\kappa}$, let $\varpi_{g,1}$ be differentials constructed by the $\sQ$-top recursion. For an appropriate choice of an open contour $\gamma\subset\Sigma$, there exists a formal 2nd-order ordinary differential equation of the following form (up to gauge transformation and overall rescaling):
\begin{equation}
\left(\epsilon_1^2\frac{d^2}{dx^2}+\bar Q^{\sQ\text{-{\rm top}}}(x;\epsilon_1)\right)\psi^{\sQ\text{-{\rm top}}}_{\gamma}(x;\epsilon_1)=0,
\end{equation}
where $\bar Q^{\sQ\text{-{\rm top}}}\in\bC(x)\llbracket\epsilon_1\rrbracket$, and $\psi^{\sQ\text{-{\rm top}}}_{\gamma}(x)$ is defined as a formal object in terms of a local inverse function of $x:\Sigma\to\bP^1$, by:
\begin{equation}
\psi^{\sQ\text{-{\rm top}}}_{\gamma}(x;\epsilon_1):=\exp\left(\sum_{g\in\frac12\bZ_{\geq0}}\epsilon_1^{2g-1}\int_{\gamma}^{x}\omega_{g,1}^{\sQ\text{-{\rm top}}}\right).
\end{equation}
\end{thm}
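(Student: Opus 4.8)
The plan is to derive the quantum curve from the $\sQ$-top recursion by the same WKB strategy that works for the unrefined case, but carried out on the higher-genus hyperelliptic curve and with careful bookkeeping of the new parameters $\bm\mu,\bm\kappa$. First I would unpack the $\sQ$-top recursion for $\varpi_{g,1}=\omega_{g,1}^{\sQ\text{-top}}$ established in Section \ref{sec:NS}, and use the fact (stated in the introduction) that the recursion for $\varpi_{g,1}$ does not involve any $\varpi_{g',n}$ with $n\geq 2$: this is exactly what makes a closed WKB-type recursion possible. I would set $\Psi(x):=\psi^{\sQ\text{-top}}_{\gamma}(x;\epsilon_1)$ and compute $\epsilon_1\,d\Psi/\Psi$, which by definition equals $\sum_{g}\epsilon_1^{2g}\varpi_{g,1}(p(x))$. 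The goal is to show that the quantity $\epsilon_1^2\,\Psi''/\Psi$ (derivatives in $x$), which is a priori a formal power series in $\epsilon_1$ with coefficients in some extension of $\bC(x)$, in fact lies in $\bC(x)\llbracket\epsilon_1\rrbracket$ after the stated gauge transformation and rescaling; that rational function is then $-\bar Q^{\sQ\text{-top}}(x;\epsilon_1)$.

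The key steps, in order: (i) translate the $\sQ$-top recursion into a relation for the ``principal specialisation'' $S_k(x)$ defined by $\epsilon_1\,d\Psi/\Psi=\sum_k\epsilon_1^k S_k(x)\,dx$, obtaining a Riccati-type hierarchy $S_k' + \sum_{a+b=k}S_aS_b = (\text{something built from }y,\ \bm\mu,\ \text{the refined }\omega_{1/2,1},\text{ and the meromorphic data of }\cS_{\bm\mu,\bm\kappa})$; (ii) recognise the leading terms: $S_0 = y\,dx/dx$ up to the chosen branch (so $S_0^2$ reproduces the classical spectral curve $y^2=$ the defining polynomial), and $S_1$ is governed by $\omega_{1/2,1}$, which by Proposition \ref{prop:1/2,2} is a genuine meromorphic differential with prescribed poles — this is where $\bm\mu$ enters the subleading term of $\bar Q^{\sQ\text{-top}}$; (iii) show inductively that the combination $\epsilon_1^2\Psi''/\Psi = \epsilon_1(\epsilon_1 d\Psi/\Psi)'/dx + (\epsilon_1 d\Psi/\Psi)^2/(dx)^2$ has only the ``expected'' poles on $\Sigma$ and is invariant under the hyperelliptic involution, hence descends to a rational function of $x$; (iv) collect the pole structure to write that rational function explicitly and read off $\bar Q^{\sQ\text{-top}}$, absorbing the first-order term via the gauge transformation $\Psi\mapsto e^{\int(\cdots)}\Psi$ and normalising by an overall rescaling to clear the $dx$-Jacobian factors. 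The principal specialisation / loop-equation manipulations are exactly the ones referenced for the genus-zero case in \cite{KO22}, so step (i) should go through essentially verbatim once the $\sQ$-top recursion of Section \ref{sec:NS} is in hand.

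The hard part will be step (iii): proving that the formal series $\epsilon_1^2\Psi''/\Psi$ has no poles at points of $\Sigma$ beyond those dictated by the branch points and the poles of $x,y$, and in particular that the potential poles coming from the open contour $\gamma$ and from the period integrals $\int_\gamma\omega_{g,1}$ do not obstruct rationality in $x$. In the unrefined setting this is controlled by the linear and quadratic loop equations together with the vanishing of $\omega_{g,n}$ at ramification points to the appropriate order; in the refined $\sQ$-top setting one instead needs the $\sQ$-top part of the refined loop equations (Definition \ref{def:GLE}), whose existence for type $(g,1)$ is precisely what Section \ref{sec:NS} upgrades from Conjecture \ref{conj:intro4} to a theorem. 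So the logical skeleton is: $\sQ$-top recursion $\Rightarrow$ $\sQ$-top loop equations of type $(g,1)$ $\Rightarrow$ pole/involution control on $\Sigma$ $\Rightarrow$ rationality of $\bar Q^{\sQ\text{-top}}$ in $x$ $\Rightarrow$ the 2nd-order ODE. I would also need to check, as a separate but routine point, that the regularisation of $\int_\gamma^{p(x)}\omega_{g,1}$ near the endpoint (where $\omega_{0,1}=y\,dx$ may have a zero or $\omega_{1/2,1}$ a simple pole) produces at most the gauge factor and the overall rescaling already allowed in the statement, so that $\Psi$ is a well-defined element of the relevant formal space.

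Finally, I would remark that the choice of open contour $\gamma$ is not arbitrary: its endpoint must be taken at (or the integral regularised at) a zero of $y\,dx$ or an appropriate base point so that the $\epsilon_1^{-1}$ and $\epsilon_1^{0}$ terms of $\log\Psi$ are the expected $\int y\,dx$ and the half-integer correction, matching the normalisation under which $\bar Q^{\sQ\text{-top}}$ becomes rational; I expect this to parallel the discussion of admissible contours for unrefined quantum curves in \cite{BE16,Iwa19}, adapted to the hyperelliptic refined data $\cS_{\bm\mu,\bm\kappa}$.
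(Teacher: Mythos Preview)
Your overall strategy is the same as the paper's: translate the $\sQ$-top recursion for $\varpi_{g,1}$ into a Riccati/WKB hierarchy for $S_k:=\varpi_{(k+1)/2,1}/dx$, and then argue that the right-hand side $Q_k$ is $\sigma$-invariant, hence descends to $\bC(x)$. So the skeleton is correct.

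Where you go astray is in your assessment of step (iii). You frame $\sigma$-invariance of $\epsilon_1^2\Psi''/\Psi$ as a hard pole-control problem requiring endpoint regularisation of $\int_\gamma\varpi_{g,1}$ and analysis of period integrals; in the paper this is a one-line consequence of the \emph{second} form of the recursion, equation~\eqref{NSR2}. For $k\geq2$ one simply reads off
\[
Q_k^{\sQ\text{-top}}(p_0)=\frac{2\varpi_{0,1}(p_0)\cdot\hat R^{\sQ\text{-top}}_{k/2,1}(p_0)}{dx(p_0)^2},
\]
and both $\varpi_{0,1}$ and $\hat R^{\sQ\text{-top}}_{k/2,1}$ are anti-invariant under $\sigma$ (the latter by construction: it is a sum of residues against $\eta_\sA^p(p_0)$ and a holomorphic piece in $du_i$, both anti-invariant). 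Their product is therefore $\sigma$-invariant, and since $x:\Sigma\to\bP^1$ is a double cover this \emph{is} rationality in $x$; no pole bookkeeping or contour analysis is required. The cases $k=0,1$ are checked directly from the definitions of $\omega_{0,1}$ and $\omega_{\frac12,1}$. Your worry about the open contour $\gamma$ is a red herring: the argument takes place entirely at the level of the integrands $\varpi_{g,1}$, and the contour contributes only an overall normalisation constant to $\psi_\gamma^{\sQ\text{-top}}$.

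Two smaller corrections: the subleading term $S_0$ (i.e.\ $\varpi_{\frac12,1}$) is governed by Definition~\ref{def:unstable}, not by Proposition~\ref{prop:1/2,2} (which concerns $\omega_{\frac12,2}$); and the existence input you need from Section~\ref{sec:NS} is Proposition~\ref{prop:NSg1}, which guarantees $\varpi_{g,1}$ is a genuine meromorphic differential on $\Sigma$ so that the $\sigma$-invariance argument above even makes sense.
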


Theorem \ref{thm:intro5} means that the $\sQ$-top recursion can be utilised to quantise the refined spectral curve $\cS_{\bm\mu,\bm\kappa}$, and we call the resulting ordinary differential operator the \emph{$\sQ$-top quantum curve}. Let us remark that existence of a unrefined quantum curve was difficult to prove when $\Sigma\neq\bP^1$, and it has been only recently proven in \cite{Iwa19,EGF19,MO19} (for hyperelliptic curves) by incorporating a formal transseries structure. In contrast, quantisation via the $\sQ$-top recursion requires no such complicated structure. Importantly, we will provide a geometric interpretation of the $\sQ$-top quantisation procedure in Section \ref{sec:interpretation}.

As a final remark, quantisation with the full refined setting stands as an important, curious, and challenging open question.

\subsection{Several Future Directions}
\begin{itemize}
    \item As an application to enumerative geometry, it is known that monotone Hurwitz numbers are computed by the unrefined topological recursion \cite{DDM14}. Recently, $b$-monotone Hurwitz numbers were investigated in \cite{CD20,BCD21} in terms of constellation and generalised branch coverings. From a different perspective,  \cite{BBCC21} considered a recursive structure of Whittaker vectors in relation to $\beta$-deformed Virasoro algebra. The precise statement as well as the proof of the triangular relation between $b$-monotone Hurwitz numbers, Whittaker vectors and the refined topological recursion are work in progress joint with N. Chidambaram and M. Do\l \k{e}ga.
     \item  For a special class of genus-zero curves, Iwaki and collaborators showed that the free energy of the unrefined topological recursion is decomposed in terms of so-called \emph{BPS structures} \cite{IKT18-1,IKT18-2,IK20, IK21}. An interesting question is whether this story admits a refinement via the refined topological recursion, and whether \emph{refined BPS structures} can be determined. Kidwai and the author will report this result soon. 
     \item As briefly mentioned in the previous subsection, it is an open question how to construct a quantum curve via the refined topological recursion without taking any limit, i.e., full refinement of the results of \cite{Iwa19,EGF19,MO19}. Kidwai and the author \cite{KO22} constructed the refined quantum curve for a special class of genus-zero curves, but complexity of computations drastically increases for higher genus curves in the refined setting. This is due to a more complicated pole structure of $\omega_{g,n}$, and it requires a deeper understanding of refined loop equations to overcome the difficulties.
     \item An interesting and perhaps surprising observation is that the so-called variational formula (c.f. \cite[Section 5]{EO07}) sometimes does \emph{not} hold in the most general refined setting. For example, it can be shown that it still holds for hypergeometric curves that \cite{IKT18-2,IK20} consider. However, it does not hold for the zero-parameter Painlev\'{e}-I curve discussed in \cite{IS15}, unless one fine-tunes new parameters $\bm\mu$. Note that $\Sigma=\bP^1$ in this case, hence this phenomenon happens even for a genus-zero refined spectral curve. As one can see from these observations, one needs careful considerations to even state what the variational formula is in the refined setting. Since it is beyond the main scope of the present paper, the author will discuss it in a separate work \cite{O23}.
     \item Related to the above observation, one may wonder whether there is a relation between $F_g^{\sQ\text{-{\rm top}}}$ and $Q^{\sQ\text{-{\rm top}}}(x;\epsilon)$ in Theorem \ref{thm:intro5} as an analogous result of \cite{Iwa19,EGF19,MO19}. The author has already checked that they are in fact related at least for a certain class of curves (e.g. the curve considered in \cite{IS15}, after fine-tuning $\bm\mu$ to a specific value) which we will report in a separate work \cite{O23}. This aspect will connect the $\sQ$-top recursion in the present paper with the so-called twisted Nekrasov superpotential \cite{NS09}. 
    \item Computations of the unrefined topological recursion formula can be ``localised'' at contributions from ramification points. On the other hand, the global hyperelliptic involution operator plays an important role in the refined topological recursion (Definition \ref{def:RTR}), and the localisation is no longer possible. As a consequence, we cannot apply a clever approach in terms of Airy structures \cite{KS17,ABCO17,BBCCN18} to prove properties of $\omega_{g,n}$. This stands as a major open problem towards a complete refinement of topological recursion, but we hope that the present paper sheds light on future investigation.
\end{itemize}

\subsubsection*{Acknowledgement}

The author thanks Leonid Chekhov, Nitin Chidambaram, Maciej Do\l\k{e}ga, Lotte Hollands, Kohei Iwaki, Omar Kidwai, Masahide Manabe, Nicolas Orantin for discussions and correspondences. The author particularly appreciate beneficial discussions with Omar Kidwai. This work is supported by a JSPS Grant-in-Aid for JSPS Fellows, KAKENHI Grant Number: 22J00102. This work is also in part supported by JSPS KAKENHI 20K14323, 21H04994, and 23K12968.

\newpage

\section{Definitions}\label{sec:Def}
In this section we define geometrically, a hyperelliptic refined spectral curve, hyperelliptic refined loop equations, and the hyperelliptic refined topological recursion. Although these definitions are inspired by the work of $\beta$-deformed matrix models (e.g. \cite{CE06}), we note that they are purely geometric, without referring to such physical models. We also discuss the differences between our approach and that of \cite{CE06} in details.

\subsection{Refined Spectral Curve}\label{sec:RSC}
Let $(\Sigma,x,y)$ be a compactified and normalised hyperelliptic curve of genus $\tilde g$\footnote{Strictly speaking, $(\Sigma,x,y)$ is said to be a hyperelliptic curve if genus $\tilde g$ is greater than 1. However, we abuse the notation and call it hyperelliptic regardless of the genus.}. That is, $x,y$ are meromorphic functions on a compact Riemann surface $\Sigma$ of genus $\tilde g$ such that they satisfy an irreducible algebraic equation of the form 
\begin{equation}
    P(x,y)=a(x) y^2+b(x) y+c(x)=0,\label{P(x,y)=0}
\end{equation}
where $a(x),b(x),c(x)\in\bC[x]$.

Any hyperelliptic curve is equipped with a global involution $\sigma:\Sigma\to\Sigma$ such that for a generic $p\in\Sigma$,  $x(\sigma(p))=x(p)$ but $y(\sigma(p))\neq y(p)$. That is, $\sigma$ is the hyperbolic involution of the double cover $x:\Sigma\to\bP$. We denote by $\cR$ the set of $\sigma$-fixed points on $\Sigma$. $\cR$ can be equivalently defined as the set of simple zeroes and third-order poles of $dx$. We call $r\in\cR$ \emph{ineffective} if $\Delta y\cdot dx$\footnote{For any function $f$ on $\Sigma$, $\Delta f(p):=f(p)-f(\sigma(p))$. We will use the same notation for differentials too.} is singular at $r$, and \emph{effective} otherwise. We denote by $\cR^*\subset\cR$ the set of effective ramification points.

There are $2\tilde g$ independent basis in $H_1(\Sigma,\bZ)$, and we denote a canonical basis by $(\cA_i,\cB_i)$ where for $i,j\in\{1,...,\tilde g\}$ they satisfy:
\begin{equation}
    \cA_i\circ\cA_j=\cB_i\circ\cB_j=0,\quad \cA_i\circ\cB_j=\delta_{ij},\label{ABcycle}
\end{equation}
where $\circ$ is the symplectic pairing on $H_1(\Sigma,\bZ)$. Note that a choice of a canonical basis is not unique\footnote{For instance, both $(\cA_i,\cB_i)$ and $(\tilde \cA_i,\tilde \cB_i):=(\cB_i,-\cA_i))$ satisfy \eqref{ABcycle}.}, and its choice is often called a Torelli marking. As a part of the initial data in the refined setting, we associate a complex parameter $\kappa_i\in\bC$ to each pair $(\cA_i,\cB_i)$ where $i\in\{1,...,\tilde g\}$.

Let $\widetilde \cP^{(0,\infty)}$ be the set of zeroes and poles of $\Delta y\cdot dx$ that are not in $\cR$ respectively, and we consider $\widetilde\cP=\widetilde \cP^{(0)}\cup\widetilde \cP^{(\infty)}$. One can decompose it into $\widetilde \cP=\widetilde \cP_+\sqcup\sigma(\widetilde \cP_+)$ such that $\sigma(p)\in\sigma(\widetilde \cP_+)$ for every $p\in\widetilde \cP_+$. This decomposition is not unique. We associate a complex parameter $\mu_p\in\bC$ for each point $p\in\widetilde \cP_+$ which will be another degree of freedom for a refined spectral curve.

A hyperelliptic refined spectral curve is a Torelli marked hyperelliptic curve, together with a choice of non-unique objects above. It will be the initial data for the hyperelliptic refined topological recursion:

\begin{defin}\label{def:RSC}
A \emph{hyperelliptic refined spectral curve} $\cS_{{\bm \mu}, {\bm \kappa}}$ consists of the collection of the following data:
\begin{itemize}
    \item $(\Sigma,x,y)$: a (compactified, normalised) hyperelliptic curve,
    \item $(\cA_i,\cB_i,\kappa_i)$: a choice of a canonical basis and associated parameters for $i\in\{1,..,\tilde g\}$,
    \item $(\widetilde\cP_+,\mu_p)$: a choice of $\widetilde\cP_+\subset\widetilde\cP$ and associated complex parameters $\mu_p$ for all $p\in\widetilde\cP_+$.
\end{itemize}
\end{defin}
Note that $\widetilde\cP$ in \cite{KO22} denotes the set of zeroes and poles of $\Delta y$ that are not in $\cR$, instead of $\Delta y\cdot dx$. Their difference appears when $p\in\Sigma$ is simultaneously a double pole of $\Delta y\cdot dx$ and $dx$. Definition \ref{def:RSC} slightly generalises this aspect, and allows such points to be in $\widetilde\cP$.

\subsubsection{On the interpretation of $\mu_p$ and $\kappa_i$}

The parameters $\mu_p$ and $\kappa_i$ do not appear in the matrix model approach of Chekhov-Eynard \cite{CE06}, and they are purely geometric degrees of freedom. 

$\mu_p$ and $\kappa_i$ can be thought of as cousins in some cases. This is perhaps easier to address by looking at a concrete example. Consider a smooth elliptic curve whose defining algebraic equation is
\begin{equation}
    y^2-(x-a)(x-b)(x-c)=0.
\end{equation}
The branch points of this curve are at $x=a,b,c,\infty$. $\widetilde\cP$ is empty because the only pole of $ydx$ corresponds to the branch point $x=\infty$. Thus, there is no $\mu_p$ assigned for this curve. On the other hand, since this is a genus-one curve, there is one $\kappa$ associated with $\cA\in H_1(\Sigma,\bZ)$. 

Let us now consider the limit $c\to b$ which gives a singular curve. The branch points are only at $x=a,\infty$ and $x=b$ becomes a singular point. $ydx$ has a simple zero at two preimages of $x=b$, hence $\widetilde\cP$ is no longer empty. On the other hand, it becomes a genus-zero curve after normalisation, hence $H_1(\Sigma,\bZ)$ is trivial and there exists no $\kappa_i$. Therefore, one can interpret $\mu_p$ as a replacement of $\kappa$ when one considers the singular limit $c\to b$.

\subsubsection{Multidifferentials}

We will mainly focus on so-called multidifferentials which are defined on the product $(\Sigma)^{n+1}$ of the Riemann surfaces. For $i\in\{0,1,...,n\}$ let us denote by $\pi_i:(\Sigma)^{n+1}\to\Sigma$ the projection map to the $i$-th Riemann surface where we count from the \emph{zero}-th Riemann surface for our convenience. Then, an \emph{$(n+1)$-differential} is a meromorphic section of the line bundle
\begin{equation}
    K_{\Sigma}^{\boxtimes n+1}=\pi^*_0(K_\Sigma)\otimes\cdots\otimes\pi^*_n(K_\Sigma).
\end{equation}
It is often called a \emph{bidifferential} when $n=1$, and a \emph{multidifferential} when $n$ is not specified. In a local coordinate $z$, an $(n+1)$-multidifferential $\omega$ is locally expressed as
\begin{equation}
    \omega(p_0,..,p_n)=f(z_0,..,z_n)dz_0\otimes\cdots\otimes dz_n,
\end{equation}
where $z_i:=z(p_i)$ for $p_i\in\pi_i((\Sigma)^{n+1})$ and $f(z_0,..,z_n)$ is a meromorphic function of $z_0,..,z_n$. In particular, if $\omega\in H^0((\Sigma)^{n+1}, K_{\Sigma}^{\boxtimes n+1}(*))^{\mathfrak{S}_{n+1}}$, then $\omega$ is called a \emph{symmetric  multidifferential} where $*$ denotes a pole set of $\omega$. Finally, we say that a multidifferential $ \omega(p_0,..,p_n)$ is \emph{residue-free in $p_0$} if it has no residue when one evaluates it as a differential in $p_0$ while treating $p_i(\neq p_0)$ as parameters.

\subsubsection{Fundamental differentials}

There is a special bidifferential $B$ on $\Sigma\times\Sigma$. It is uniquely determined on any $\cS_{\bm\mu,\bm\kappa}$ by the following three properties:
\begin{description}
    \item[B1] $B$ is a meromorphic symmetric bidifferential,
    \item[B2] The only pole of $B$ is a double pole on the diagonal of biresidue 1,
    \item[B3] The $\cA_i$-period integral of $B$ vanishes for each $i=1,...,\tilde{g}$.
\end{description}
It is known that such a bidifferential exists, and it is called the \emph{fundamental bidifferential of the second kind}, or it is also referred to as the \emph{Bergman kernel}\footnote{Some literature take $B$ to be one of the defining data of a spectral curve, from which one determines the symplectic basis $\cA_i,\cB_i\in H_1(\Sigma)$ by respecting property $\textbf{B3}$.}. The $\cA_i$-dual basis $u_i\in H^1(\Sigma,\mathbb{Z})$ for $i=1,...,\tilde g$ is constructed from $B$ as
\begin{equation}
    u_i(p_0):=\frac{1}{2\pi i}\oint_{\cB_i}B(p_0,\cdot)\label{BBcycle}
\end{equation}
which satisfies the property
\begin{equation}
   \oint_{\cA_i}u_j=\delta_{ij}.
\end{equation}
For any hyperelliptic curve, $B$ satisfies
\begin{equation}
    B(p_0,\sigma(p_1))=B(\sigma(p_0),p_1),\quad B(p_0,p_1)+B(p_0,\sigma(p_1))=\frac{dx(p_0)\cdot dx(p_1)}{(x(p_0)-x(p_1))^2},\label{BB}
\end{equation}
where we often denote a tensor product by $\cdot$ for brevity.

Let us introduce another special differential. Given a hyperelliptic spectral curve $\cS_{\bm\mu,\bm\kappa}$, let $p\not\in\cR$ and we choose a specific representative $\sA_i$ of $\cA_i$ for each $i\in\{1,...,\tilde g\}$. Then the \emph{fundamental differential of the third kind normalised along $\sA_i$} denoted by $\eta^p_{\sA}$ is defined by the following property:
\begin{description}
    \item[$\bm\eta$1] $\eta^p_\sA$ is a meromorphic differential whose only pole is a residue $+1$ at $p$ and $-1$ at $\sigma(p)$
    \item[$\bm\eta$2] The contour integral of $\eta^p_\sA$ along $\sA_i$ vanishes for each $i=1,...,\tilde{g}$.
\end{description}
In particular, let $\mathfrak{F}$ be a fundamental domain of $\Sigma$ with $\partial\mathfrak{F}=\bigcup_{i=1}^{\tilde g}(\sA_i\cup\sB_i)$. Then,  $\eta^p_{\sA}$ can be written by an integral of $B$ as
\begin{equation}
    \eta^p_{\sA}(p_0)=\int_{\sigma(p)}^pB(p_0,\cdot),\label{projection}
\end{equation}
where the integration contour is taken within the fundamental domain. $\eta^p_\sA(p_0)$ is anti-invariant under $p_0\mapsto\sigma(p_0)$ due to \eqref{BB}. It is worth noting that $\oint_{p_0\in\cA_i}\eta_{\sA}^p(p_0)$ is not uniquely determined only by a choice of $\cA_i\in H_1(\Sigma,\mathbb{Z})$ because of the presence of the residues, hence we have to further choose each representative $\sA_i$. 

Computationally, $\eta^p_\sA$ plays an important role in the following sense. Let us consider a meromorphic differential $\lambda$ with a pole at $a\in\Sigma$. We then define a new differential $\tilde \lambda_a$ by
\begin{align}
    \tilde \lambda_a(p_0):=\Res_{p=a}\eta^p_\sA(p_0)\cdot \lambda(p).\label{sing}
\end{align}
When $a\neq p_0$, $\tilde \lambda_a(p_0)$ is anti-invariant under the involution $\sigma$ on $p_0$. In addition, $\tilde\lambda_a$ has the same poles structure as $\lambda$ at $p_0=a$, with the same sign at $p_0=a$ and the opposite sign at $p_0=\sigma(a)$. We note that the singular behaviour of $\lambda(p_0)$ at $p_0=\sigma(a)$ may differ from that of $\tilde \lambda(p_0)$ at $p_0=\sigma(a)$ in general. Furthermore, $\oint_{\sA_i}\tilde\lambda_a(p_0)=0$ for all $i=1,...,\tilde g$ thanks to the property ${\bm \eta}{\bf 2}$, regardless of the value of $\oint_{\sA_i}\lambda(p_0)$. In summary, $\eta^p_\sA$ in \eqref{sing} works as an operator to:
\begin{enumerate}
    \item extract the singular behaviour of the input differential $\lambda$ at a chosen point $a$,
    \item make the resulting differential $\tilde\lambda_a$ anti-invariant under the involution $\sigma$,
    \item normalise the input differential $\lambda$ along $\sA_i$ for all $i=1,...,\tilde{g}$.
\end{enumerate}
In particular, when $a\in\cR$ and when $\lambda$ is invariant under $\sigma$, it follows that $\tilde \lambda_a=0$ regardless of the pole order of $\lambda$.

Note that $\eta^p_\sA(p_0)$ works differently when $a=p_0$ or $a=\sigma(p_0)$ in \eqref{sing}. In these cases, we simply have:
\begin{equation}
    \Res_{p=p_0}\eta_\sA^p(p_0)\cdot \lambda(p)=-\lambda(p_0),\quad\Res_{p=\sigma(p_0)}\eta_\sA^p(p_0)\cdot \lambda(p)=\lambda(\sigma(p_0))\label{eta0}
\end{equation}

Before turning to refined loop equations, let us show two useful lemmas about the properties of $\eta^p_\sA$ which we use several times in the discussions below:
\begin{lem}\label{lem:reseta}
Let $\lambda^{I},\lambda^A$ be meromorphic differentials on $\Sigma$ which are invariant and anti-invariant under $\sigma$ respectively, and let $\mathfrak{p}_{+}\sqcup\sigma(\mathfrak{p}_+)$ be any subset of poles of $\lambda^I$ or $\lambda^A$ that are not in $\cR$ where for each $p\in\mathfrak{p}_{+}$, $\sigma(p)$ is necessarily in $\sigma(\mathfrak{p}_+)$. Then it follows
\begin{align}
 \left(\sum_{r\in \{p_0\}\cup\mathfrak{p}_+}+\sum_{r\in \{\sigma(p_0)\}\cup\sigma(\mathfrak{p}_+)}\right)\Res_{p=r}\eta_\sA^{p}(p_0)\cdot \lambda^I(p)&=0,\label{etaI}\\
    \left(\sum_{r\in \{p_0\}\cup\mathfrak{p}_+}-\sum_{r\in \{\sigma(p_0)\}\cup\sigma(\mathfrak{p}_+)}\right)\Res_{p=r}\eta^{p}_\sA(p_0)\cdot \lambda^A(p)&=0.\label{etaA}
\end{align}
\end{lem}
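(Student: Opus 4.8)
The statement is essentially about the residue calculus of the third-kind differential $\eta_\sA^p(p_0)$, and the natural approach is to use the residue theorem on the compact surface $\Sigma$ together with the parity properties of $\eta_\sA^p(p_0)$ under $\sigma$. The key observation is that for fixed generic $p_0$, the differential $p\mapsto\eta_\sA^p(p_0)\cdot\lambda(p)$ is a meromorphic $1$-form on $\Sigma$, so the sum of \emph{all} its residues vanishes. The plan is to identify precisely where this $1$-form has poles, split the residue sum into the contribution from $\{p_0\}\cup\mathfrak p_+$, the contribution from $\{\sigma(p_0)\}\cup\sigma(\mathfrak p_+)$, and a remaining piece, and then show the remaining piece vanishes using the $\sigma$-parity of the integrand.

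First I would fix $p_0\notin\cR$ generic and regard $\Omega(p):=\eta_\sA^p(p_0)\cdot\lambda(p)$ as a meromorphic $1$-form in $p$. By property $\bm\eta$1, $\eta_\sA^p(p_0)$ as a function of $p$ (with $p_0$ fixed) has simple poles only at $p=p_0$ and $p=\sigma(p_0)$; elsewhere it is holomorphic in $p$. Hence the poles of $\Omega$ are contained in $\{p_0,\sigma(p_0)\}\cup(\text{poles of }\lambda)$. The residue theorem gives $\sum_{r}\Res_{p=r}\Omega(p)=0$, where the sum is over all these points. Now I split: the terms over $\{p_0\}\cup\mathfrak p_+$ and over $\{\sigma(p_0)\}\cup\sigma(\mathfrak p_+)$ are exactly the two sums appearing in the lemma (for $\lambda^I$ they are added, which matches the total with a uniform sign; the sign discrepancy in \eqref{etaA} will be handled below), and what is left over is the residue sum over those poles of $\lambda$ that lie \emph{outside} $\{p_0,\sigma(p_0)\}\cup\mathfrak p_+\cup\sigma(\mathfrak p_+)$. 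Call this leftover set $\mathfrak q$; note $\mathfrak q$ is $\sigma$-invariant since $\mathfrak p_+\sqcup\sigma(\mathfrak p_+)$ and $\{p_0,\sigma(p_0)\}$ are.

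Second, I would show $\sum_{r\in\mathfrak q}\Res_{p=r}\Omega(p)=0$ using parity, which is the place the hypotheses on $\lambda^I$ versus $\lambda^A$ enter. Using \eqref{BB} one checks that $\eta_\sA^p(p_0)$ is anti-invariant under $p\mapsto\sigma(p)$ as well: from \eqref{projection}, $\eta_\sA^{\sigma(p)}(p_0)=\int_{p}^{\sigma(p)}B(p_0,\cdot)=-\eta_\sA^p(p_0)$ (up to the fundamental-domain bookkeeping, which contributes only $\cA$-periods that vanish and does not affect residues at points of $\mathfrak q$). So $p\mapsto\eta_\sA^p(p_0)$ is $\sigma$-anti-invariant. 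For $\lambda=\lambda^I$ invariant, the product $\Omega$ is $\sigma$-anti-invariant as a $1$-form, i.e. $\sigma^*\Omega=-\Omega$, which forces $\Res_{p=\sigma(r)}\Omega=-(-\Res_{p=r}\Omega)=\Res_{p=r}\Omega$... — here I would be careful: pulling back a $1$-form by $\sigma$ sends a residue at $r$ to a residue at $\sigma^{-1}(r)=\sigma(r)$, and $\sigma^*\Omega=-\Omega$ gives $\Res_{p=\sigma(r)}\Omega=-\Res_{p=r}\Omega$. Pairing each $r\in\mathfrak q$ with $\sigma(r)\in\mathfrak q$ (they are distinct since $\mathfrak q\cap\cR=\emptyset$, as poles in $\cR$ are excluded from all the sets in play) makes the leftover sum telescope to zero. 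For $\lambda=\lambda^A$ anti-invariant, $\Omega$ is $\sigma$-invariant, so the same pairing argument gives the leftover sum is zero again — but now the relation between the $\{p_0\}\cup\mathfrak p_+$ block and the $\{\sigma(p_0)\}\cup\sigma(\mathfrak p_+)$ block differs by a sign, which is exactly why \eqref{etaA} carries a minus sign while \eqref{etaI} carries a plus. In both cases, once the leftover sum is killed, the residue theorem identity reduces precisely to \eqref{etaI} respectively \eqref{etaA}.

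\textbf{Main obstacle.} The delicate point is the bookkeeping around the fundamental-domain cut in \eqref{projection}: $\eta_\sA^p(p_0)$ is only globally single-valued after choosing the fundamental domain $\mathfrak F$, so the naive identities $\eta_\sA^{\sigma(p)}(p_0)=-\eta_\sA^p(p_0)$ and $\sigma^*\Omega=\pm\Omega$ hold only up to periods, and one must argue that crossing the cut contributes nothing to the residues at the relevant points (equivalently, that $p_0$, $\mathfrak p_+$, $\mathfrak q$ can be taken away from $\partial\mathfrak F$ and that the period ambiguities are locally constant in $p$, hence invisible to $\Res_p$). A clean way to sidestep this is to use property $\bm\eta$2 directly: integrate $\Omega$ over $\partial\mathfrak F=\bigcup_i(\sA_i\cup\sB_i)$ and note the $\sA_i$-integrals of $\eta_\sA^p(p_0)$ vanish, while the $\sB_i$-contributions cancel in pairs against the $-\sB_i$ on the opposite side of the polygon — this gives the vanishing of the total residue sum inside $\mathfrak F$ without ever differentiating the multivalued primitive. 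The rest is the parity pairing, which is routine once the global statement is in hand.
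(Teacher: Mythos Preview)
Your approach takes an unnecessary detour that introduces a genuine gap. You invoke the global residue theorem for $\Omega(p)=\eta_\sA^p(p_0)\cdot\lambda(p)$ on $\Sigma$, then kill the leftover sum over $\mathfrak q$ by a $\sigma$-parity pairing. The problem is that $\eta_\sA^p(p_0)$ is \emph{not} a single-valued meromorphic function of $p$ on $\Sigma$ when $\tilde g>0$: it is defined only inside the fundamental domain $\mathfrak F$, and the sum of residues of $\Omega$ inside $\mathfrak F$ is precisely the boundary contribution computed in the Riemann bilinear identity (the very next lemma in the paper). That contribution is $2\sum_i du_i(p_0)\oint_{\sA_i}\Delta\lambda$, which is generically nonzero. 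Your proposed fix --- that ``the $\sB_i$-contributions cancel in pairs against the $-\sB_i$'' --- is incorrect: the two $\sB_i$-edges of the polygon carry values of the multivalued integrand that differ by a nontrivial period, and this is exactly what produces the bilinear-identity term. Also, property $\bm\eta$2 concerns the $\sA_i$-period of $\eta_\sA^p$ as a $1$-form in $p_0$, not as a function of $p$, so it does not kill the $\sA_i$-edges of $\partial\mathfrak F$ in the $p$-integral.

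The irony is that the parity pairing you use on $\mathfrak q$ already proves the lemma directly, with no global input. For $r\in\mathfrak p_+$, the substitution $p\mapsto\sigma(p)$ together with $\eta_\sA^{\sigma(p)}(p_0)=-\eta_\sA^p(p_0)$ (locally, up to a constant in $p$ which is invisible to residues) and $\lambda^I(\sigma(p))=\lambda^I(p)$ gives $\Res_{p=\sigma(r)}\Omega=-\Res_{p=r}\Omega$, so the $r$ and $\sigma(r)$ terms cancel pairwise. For the special pair $\{p_0,\sigma(p_0)\}$ one uses the explicit evaluations $\Res_{p=p_0}\Omega=-\lambda^I(p_0)$ and $\Res_{p=\sigma(p_0)}\Omega=\lambda^I(\sigma(p_0))=\lambda^I(p_0)$, which again cancel. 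The anti-invariant case \eqref{etaA} is identical with the sign of the pairing reversed. This is exactly the paper's one-line proof: everything is local and the residue theorem is never needed.
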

\begin{proof}
This is straightforward due to the anti-invariance of $\eta_\sA^p(p_0)$ under $\sigma$ and the property of $\eta_\sA^p$ in residues as explained above. 
\end{proof}

\begin{lem}[Riemann bilinear identity]\label{lem:RBI}
Let $\lambda$ be a meromorphic differential on $\Sigma$ and let $\cP_{\lambda}$ be the set of all poles of $\lambda$. Then, we have
\begin{align}
    \sum_{r\in\{p_0\}\cup\{\sigma(p_0)\}\cup\cP_{\lambda}}\Res_{p=r}\eta^p_\sA(p_0)\cdot \lambda(p)&= -\Delta \lambda(p_0)+\sum_{r\in\cP_{\lambda}}\Res_{p=r}\eta^p_\sA(p_0)\cdot \Delta\lambda(p)\nonumber\\
    &=-\sum_{i=1}^{\tilde g}u_i(p_0)\cdot \oint_{p\in\sA_i}\Delta \lambda(p).\label{RBI}
\end{align}
\end{lem}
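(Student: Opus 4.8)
The plan is to prove the Riemann bilinear identity \eqref{RBI} by splitting the contour argument into two halves, first establishing the top equality and then the bottom one. The basic tool is that summing residues of a meromorphic differential on a compact Riemann surface over \emph{all} its poles gives zero, combined with the explicit residue formulas \eqref{eta0} for $\eta^p_\sA(p_0)$ at $p=p_0$ and $p=\sigma(p_0)$.

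First I would address the top equality. Consider the differential $p\mapsto \eta^p_\sA(p_0)\cdot\lambda(p)$, viewed as a meromorphic $1$-form on $\Sigma$ in the variable $p$ (with $p_0$ a fixed parameter); strictly, $\eta^p_\sA(p_0)$ should be read via \eqref{projection} as a differential in $p$ using the symmetry of $B$. Its poles in $p$ occur at $p=p_0$, at $p=\sigma(p_0)$, and at the poles of $\lambda$, i.e.\ along $\cP_\lambda$; note that $\eta^p_\sA(p_0)$ has no pole in $p$ at points of $\cR$ beyond what $\lambda$ contributes. However, this form is not globally single-valued on $\Sigma$: crossing the cycles $\sB_i$ shifts $\eta^p_\sA(p_0)$ by $2\pi i\,du_i(p_0)$ (by \eqref{BBcycle} applied with the roles of the two points, using \textbf{B3} to kill the $\sA_i$-monodromy), while crossing $\sA_i$ produces no shift by property $\bm\eta\mathbf{2}$. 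Integrating the form over the boundary $\partial\mathfrak{F}=\bigcup_i(\sA_i\cup\sB_i)$ of the fundamental domain and using this monodromy, the standard Riemann-bilinear computation yields
\begin{equation}
\sum_{r\in\{p_0\}\cup\{\sigma(p_0)\}\cup\cP_\lambda}\Res_{p=r}\eta^p_\sA(p_0)\cdot\lambda(p)=2\sum_{i=1}^{\tilde g}du_i(p_0)\cdot\oint_{p\in\sA_i}\lambda(p);
\end{equation}
here the factor $2$ arises from the biresidue-$1$ double pole of $B$ on the diagonal being integrated against $\lambda$ along $\sA_i$, paired with the $2\pi i\,du_i$ monodromy, after dividing out the $2\pi i$ in \eqref{BBcycle}. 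This is not yet the claimed right-hand side, so the next step handles the discrepancy.

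Next I would produce the intermediate expression on the first line of \eqref{RBI} and replace $\lambda$ by $\Delta\lambda$. Using the explicit values \eqref{eta0}, $\Res_{p=p_0}\eta^p_\sA(p_0)\cdot\lambda(p)=-\lambda(p_0)$ and $\Res_{p=\sigma(p_0)}\eta^p_\sA(p_0)\cdot\lambda(p)=\lambda(\sigma(p_0))$, so the contributions of $p_0$ and $\sigma(p_0)$ combine to $-\Delta\lambda(p_0)$. For the remaining sum over $\cP_\lambda$, I would invoke the anti-invariance of $\eta^p_\sA(p_0)$ under $p\mapsto\sigma(p)$ (equivalently the first relation in \eqref{BB}): changing variables $p\mapsto\sigma(p)$ in a residue at $r$ converts it into a residue at $\sigma(r)$ against $\lambda(\sigma(p))$ with the sign flip absorbed, so symmetrising over $r$ and $\sigma(r)$ replaces $\lambda$ by $\tfrac12(\lambda(p)-\lambda(\sigma(p)))\cdot 2=\Delta\lambda(p)$ inside $\sum_{r\in\cP_\lambda}\Res_{p=r}\eta^p_\sA(p_0)\cdot(\cdot)$, while $\cP_\lambda$ is $\sigma$-stable as a set. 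Simultaneously the right-hand side $\oint_{\sA_i}\lambda$ becomes $\oint_{\sA_i}\Delta\lambda=\oint_{\sA_i}\lambda-\oint_{\sigma(\sA_i)}\lambda$; since $\sigma(\sA_i)$ is homologous to $-\sA_i$ (the involution acts as $-1$ on $H_1$ of a hyperelliptic curve, up to the chosen representatives), we get $\oint_{\sA_i}\Delta\lambda=2\oint_{\sA_i}\lambda$, which matches the factor $2$ already present and confirms consistency of both equalities in \eqref{RBI}.

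The main obstacle I anticipate is bookkeeping the representatives and the $\sigma$-action on homology carefully: $\eta^p_\sA$ depends on the \emph{choice} of representative $\sA_i$, not just the class $\cA_i$, so the monodromy and the replacement $\sigma(\sA_i)\simeq -\sA_i$ must be tracked at the level of actual contours inside $\mathfrak{F}$, and one must check that the periods $\oint_{\sigma(\sA_i)}\lambda$ versus $-\oint_{\sA_i}\lambda$ agree exactly (not merely up to residues picked up when the contour is deformed across poles of $\lambda$). A clean way around this is to do the entire argument on the quotient $\bP^1=\Sigma/\sigma$ for the invariant part and lift, or equivalently to first prove the identity for $\Delta\lambda$ directly — $\Delta\lambda$ being anti-invariant, its $\sA_i$-periods and residue pattern interact transparently with the anti-invariant $\eta^p_\sA$ — and then recover the general statement via the two elementary residue evaluations in \eqref{eta0}. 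I would also double-check the orientation convention on $\partial\mathfrak{F}$ so that the sign in front of $2\sum_i du_i(p_0)\oint_{\sA_i}\Delta\lambda$ comes out as stated rather than with a minus.
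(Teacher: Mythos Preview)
Your approach is essentially the same as the paper's, which treats this lemma very briefly: the paper just notes that the first equality follows from \eqref{etaI} (Lemma~\ref{lem:reseta}) together with the explicit residues \eqref{eta0}, and that the second equality is the classical Riemann bilinear identity coming from integrating $\eta^p_\sA(p_0)\cdot\lambda(p)$ around $\partial\mathfrak{F}$. You have the same two ingredients, just processed in the opposite order.

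One concrete slip to flag: your stated monodromy of $\eta^p_\sA(p_0)$ as $p$ crosses $\sB_i$ is off by a factor of two. Since $\eta^p_\sA(p_0)=\int_{\sigma(p)}^{p}B(p_0,\cdot)$, \emph{both} endpoints move when $p$ traverses $\cB_i$, and using \eqref{BB} together with the anti-invariance of $du_i$ one finds the jump is $4\pi i\,du_i(p_0)$, not $2\pi i\,du_i(p_0)$. This is exactly why the right-hand side of \eqref{RBI} already carries the factor $2$ without having to invoke $\oint_{\sA_i}\Delta\lambda=2\oint_{\sA_i}\lambda$; the latter identity is \emph{not} available in general (as you yourself note, it fails when $\lambda$ has residues, since $\sigma(\sA_i)$ and $-\sA_i$ agree only in homology). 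Your suggested fix --- prove the bilinear step directly for the anti-invariant $\Delta\lambda$ and then recover the general statement via \eqref{eta0} and \eqref{etaI} --- is precisely what the paper does and sidesteps this issue cleanly.
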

Note that the first equality in \eqref{RBI} is due to \eqref{etaI}. In fact \eqref{etaI} holds even when the sum contains ramification points $\cR$. The second equality is actually what the Riemann bilinear identity means. That is, although $\{p_0\}\sqcup\{\sigma(p_0)\}\sqcup\cP_{\lambda}$ is the set of all poles of $\eta^p(p_0)\cdot \lambda(p)$, the sum of the residues does not vanish. This is because $\eta^p(p_0)$ as a function of $p$ is defined only within a fundamental domain, and we have to take account of the contributions from the boundary of the fundamental domain which are none other than the right-hand side.

\subsection{Refined Loop Equations}\label{sec:GLE}

Starting form a hyperelliptic refined spectral curve, our goal is to construct an infinite sequence of multidifferentials $\omega_{g,n+1}$ labelled by $2g,n\in\bZ_{\geq0}$ recursively in $2g-2+n$. We first define the unstable multidifferentials:
\begin{defin}\label{def:unstable}
   Given a hyperelliptic refined spectral curve $\cS_{\bm \mu,\bm\kappa}$, fix a \emph{refinement parameter} $\sQ\in\bC$ and choose a representative $\sA_i$ of $\cA_i$ for each $i\in\{1,...,\tilde g\}$. Then, the \emph{unstable multidifferentials} $\omega_{0,1},\omega_{0,2},\omega_{\frac12,1}$ are defined as follows:
   \begin{align}
    \omega_{0,1}(p_0):&=\frac12\Delta y(p_0)\cdot dx(p_0),\label{w01}\\
    \omega_{0,2}(p_0,p_1):&=-B(p_0,\sigma(p_1)),\label{w02}\\
    \omega_{\frac12,1}(p_0):&=\frac{\sQ}{2}\left(-\frac{d\Delta y(p_0)}{\Delta y(p_0)}+\sum_{p\in\widetilde{\cP}_+}\mu_p\cdot\eta^p_\sA(p_0)+\sum_{i=1}^{\tilde g}\kappa_i\cdot u_i(p_0)\right),\label{w1/2,1}
\end{align}
\end{defin}
Let us give a few comments. 
\begin{itemize}
    \item The standard definition of $\omega_{0,1}$ and $\omega_{0,2}$ in the Chekhov-Eynard-Orantin formalism is $\omega_{0,1}=y\cdot dx$ and $\omega_{0,2}=B$. For hyperelliptic curves, the difference between their definition and ours is merely conventional. We take the form \eqref{w01} and \eqref{w02} because the rest of the formula below becomes simpler. Note that in fact, our $\omega_{0,2}$ is a more natural form from the matrix model perspective (see e.g. \cite{E16,BO18}).
    \item When one compares two different choices of representatives $\sA_i$ and $\sA'_i$, their difference on $\omega_{\frac12,1}$ is always in $H^1(\Sigma,\bZ)$. This means that choosing different representatives $\cA'_i$ is equivalent to considering $\cS_{\bm \mu,\bm\kappa'}$ with an appropriate shift of $\kappa'_i$ by $\mu_p$'s. Thus, it does not cause any pathological effect.
    \item One may wonder why the choice of representatives $\sA_i$ is not a part of the data of $\cS_{\bm \mu,\bm\kappa}$. This is because we show or conjecture that multidifferentials $\omega_{g,n}$ for $2g-2+n>0$ are residue-free, implying that their $\cA,\cB$-integrals are uniquely. The choice of representatives is just a technical aspect and there is no conceptual importance.
\end{itemize}

Now we turn to the construction of other multidifferentials. Instead of directly defining them as \cite{KO22} does, in this section we rather impose several constraints, and discuss whether such constrained multidifferentials actually exist. This approach is similar to how \cite{BEO13,BS15,BBCCN18} \emph{derive} the topological recursion formula. That is, we would like to view the refined topological recursion formula as a way of solving a more fundamental structure which we call refined loop equations. We take this approach because it uncovers a more geometric structure underlying and also because it gives a hint about how to generalise the hyperelliptic framework into higher-ramified cases.

Let us introduce a convenient notation. We denote by $\Omega_{{\rm stab}}^{n+1}(\Sigma)$ the set of $(n+1)$-differentials $\omega$ such that for $i\in\{0,..,n\}$ the poles of $\omega(p_0,..,p_n)$ with respect to $p_i$ lie in $\cR^*$, $\sigma(\widetilde{\cP}_+)$, or at $p_i=\sigma(p_j)$ for all $j\neq i$. Furthermore, we denote by $\Omega_{{\rm stab}}^{n+1}(\Sigma)^{\mathfrak{S}_{n+1}}\subset\Omega_{{\rm stab}}^{n+1}(\Sigma)$ the set of symmetric $(n+1)$-differentials with the above pole structure.

We now define the hyperelliptic refined loop equations, and define stable multidifferentials as their solutions:

\begin{defin}\label{def:GLE}
Given a hyperelliptic refined spectral curve $\cS_{\bm \mu,\bm\kappa}$, define unstable multidifferentials. For $2g,n\in\bZ_{\geq0}$, a sequence of multidifferentials $\check \omega_{g,n+1}\in  \Omega_{{\rm stab}}^{n+1}(\Sigma)^{\mathfrak{S}_{n+1}}$ is said to be a solution of \emph{hyperelliptic refined loop equations} if $\check\omega_{0,1}=\omega_{0,1}$, $\check\omega_{0,2}=\omega_{0,2}$, $\check\omega_{\frac12,1}=\omega_{\frac12,1}$, and if for $2g-2+n\geq0$, $\check \omega_{g,n+1}$ is residue-free, normalised along the $\cA_i$-cycle for $i\in\{1,..,\tilde g\}$, and the differential (in $p_0$) $\check R_{g,n+1}(p_0;J)$ defined below is holomorphic at $\cR$ and anti-invariant under the involution $\sigma$: 
    \begin{align}\label{GLE}
       \check  R_{g,n+1}^{\sQ}(p_0;J):=\frac{\check Q_{g,n+1}^{\sQ}(p_0;J)}{2\omega_{0,1}(p_0)},
    \end{align}
    where $\{p_0\}\cup J:=(p_0,p_1,...,p_n)\in(\Sigma)^{n+1}$, and 
\begin{align}
  \check  Q_{g,n+1}^{\sQ}(p_0;J):=&\sum_{\substack{g_1+g_2=g\\J_1\sqcup J_2=J}}\check \omega_{g_1,n_1+1}(p_0,J_1)\cdot\check \omega_{g_2,n_2+1}(p_0,J_2)+\sum_{t\sqcup I=J}\frac{dx(p_0)\cdot dx(t)}{(x(p_0)-x(t))^2}\cdot\check \omega_{g,n}(p_0,I)\nonumber\\
    &+\check \omega_{g-1,n+2}(p_0,p_0,J)+\sQ \cdot dx \cdot d_0\frac{\check \omega_{g-\frac12,n+1}(p_0,J)}{dx(p_0)}.\label{Rec}
\end{align}
\end{defin}
\begin{defin}\label{def:stables}
   Given a hyperelliptic refined spectral curve $\cS_{\bm \mu,\bm\kappa}$, multidifferentials $\check \omega_{g,n+1}$ for $2g,n\in\bZ_{\geq0}$ with $2g-2+n\geq0$ are called \emph{stable multidifferentials} if they, together with unstable multidifferentials, are a solution of hyperelliptic refined loop equations.
\end{defin}

We often drop `hyperelliptic' and say a refined spectral curve or refined loop equations for brevity. Also, we say the refined loop equation of type $(g,n+1)$ when we would like to specify $g$ and $n+1$ of $\check R_{g,n+1}^{\sQ}$. 

Let us give a few more comments to clarify some notation and convention:
\begin{itemize}
    \item We keep the letter $\omega_{g,n+1}$ for the refined topological recursion (which we will define shortly) in order to symbolically distinguish from stable multidifferentials $\check \omega_{g,n+1}$ defined as a solution of refined loop equations. Since their definitions are different, showing $\omega_{g,n+1}=\check \omega_{g,n+1}$ becomes one of the tasks of the present paper.
    \item Refined loop equations are in fact a set of \emph{constraints} rather than equations because we do not specify e.g. pole of $\check R_{g,n+1}^{\sQ}$. In particular, we are not imposing any conditions on $\check R_{g,n+1}(p_0;J)$  with respect to variables in $J$. It is simply a tradition to call them equations.
    \item While the first condition on $\check R_{g,n+1}^{\sQ}$ is a local constraint at each $r\in\cR$, the second condition is a \emph{global} one because the involution $\sigma$ is globally defined. This is in fact a strong constraint, and if a solution of refined loop equations exists, one can uniquely determine $\check R_{g,n+1}^{\sQ}(p_0;J)$ as we will show shortly.
    \item In the setting of \cite{BEO13,BS15}, they soften the global involution condition on $\check R_{g,n+1}^{\sQ}$ to local ones at each $r\in\cR$, but instead they require so-called linear loop equations. In our setting, on the other hand, we can \emph{derive} linear loop equations when $\sQ=0$ thanks to the stronger global involution constraint. However, when $\sQ\neq0$, we will show shortly that there is no straightforward analogue of linear loop equations.
    \item One might wonder why $\check Q_{g,n+1}^{\sQ}$ in \eqref{Rec} is defined in this specific form. In fact, this is inspired by $\beta$-deformed Virasoro constraints, and also by observations given in \cite{CE06} in terms of matrix models. We also note that $\check Q_{g,n+1}^{\sQ}$ in the present paper looks slightly different from that of \cite{KO22}, due to the different definition of $\omega_{0,2}$. 
\end{itemize}

Since refined loop equations impose complicated constraints on $\check \omega_{g,n+1}$, there are three important questions to be answered, namely:
\begin{description}
    \item[Q1] Does there exist a solution of refined loop equations?
    \item[Q2] If so, is it unique?
    \item[Q3] If so, is there any formula to recursively construct $\check \omega_{g,n+1}$? 
\end{description}
$\textbf{Q1}$ is hard to prove. In fact, existence of a solution of loop equations is nontrivial to prove even in the unrefined setting. In the present paper we prove existence only when $\Sigma=\bP^1$ in Section \ref{sec:properties}, and leave the remaining cases to be a conjecture. On the other hand, if we assume existence, then the answer to \textbf{Q2} and \textbf{Q3} is simple; the solution is unique, and it can be constructed recursively by an explicit contour-integral formula.

\begin{defin}\label{def:contour}
    Let Let $J_0=\{p_0,..,p_n\}$ for $n\geq0$, and we assume $(J_0\cup\widetilde{\cP}_+)\cap(\cR\cup\sigma(J_0\cup\widetilde{\cP}_+))=\varnothing$. We define a contour $C_+\subset\Sigma$ as a connected and simply-connected closed contour encircling counter-clockwise all points of $J_0\cup\widetilde\cP_+$ but no points of $\cR\cup \sigma(J_0\cup\widetilde\cP_+)$. Such a contour can be always taken thanks to the assumption, and we drop the $n$-dependence from $C_+$ for brevity. Similarly, we define $C_-\subset\Sigma$ as a connected and simply-connected contour encircling counter-clockwise all points of $\cR\cup \sigma(J_0\cup\widetilde\cP_+)$ but no points of  $J_0\cup\widetilde\cP_+$. 
\end{defin}

\begin{thm}\label{thm:RTR}
    Given a hyperelliptic refined spectral curve $\cS_{\bm \mu,\bm\kappa}$ with a choice of representatives $\sA_i$ of $\cA_i\in H_1(\Sigma,\bZ)$ for each $i\in\{1,..,\tilde g\}$, assume that there exists a solution of hyperelliptic refined loop equations. Then, its unique solution is recursively constructed by one of the following formulae:
    \begin{description}
        \item[recursion 1]
        \begin{equation}
            \check \omega_{g,n+1}(p_0,J)=\frac{1}{2 \pi i}\left(\oint_{p\in C_+}-\oint_{p\in C_-}\right)\frac{\eta_\sA^p(p_0)}{4\omega_{0,1}(p)}\cdot \check {\rm Rec}_{g,n+1}^{\sQ}(p,J),\label{RTR1}
        \end{equation}
        \item[recursion 2]
        \begin{align}
            \check \omega_{g,n+1}(p_0,J)=&-\frac{\check {\rm Rec}_{g,n+1}^{\sQ}(p_0,J)}{2\omega_{0,1}(p_0)}+\hat R_{g,n+1}^{\sQ}(p_0,J),\label{RTR2}
        \end{align}
    \end{description}
    with 
        \begin{align}
        \check  {\rm Rec}_{g,n+1}^{\sQ}(p,J):&=\check Q_{g,n+1}^{\sQ}(p,J)-2\omega_{0,1}(p)\cdot\check \omega_{g,n+1}(p,J),\\
            \hat R_{g,n+1}^{\sQ}(p_0,J):&=\frac{1}{2 \pi i}\oint_{p\in \hat{C}_+}\frac{\eta_\sA^p(p_0)}{2\omega_{0,1}(p)}\cdot\check {\rm Rec}_{g,n+1}^{\sQ}(p,J)+\sum_{i=1}^{\tilde g}u_i(p_0)\cdot\oint_{p\in\sA_i}\frac{\check {\rm Rec}_{g,n+1}^{\sQ}(p,J)}{2\omega_{0,1}(p)}.\label{R_{g,n+1}}
        \end{align}
        where $\hat{C}_+$ contains the same points as $C_+$ except $p_0$, and we analytically continue $\check \omega_{g,n+1}(J_0)$ to $\sigma(\widetilde{\cP}_+)$ and ineffective ramification points after taking the contour integrals.
\end{thm}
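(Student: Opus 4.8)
## Proof Proposal for Theorem \ref{thm:RTR}

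\textbf{Overall strategy.} The plan is to assume a solution $\{\check\omega_{g,n+1}\}$ of the refined loop equations exists, and then show that each $\check\omega_{g,n+1}$ with $2g-2+n\geq 0$ is completely pinned down by its defining properties (residue-freeness, $\cA_i$-normalisation, membership in $\Omega_{{\rm stab}}^{n+1}(\Sigma)^{\mathfrak S_{n+1}}$, and the constraint that $\check R_{g,n+1}^{\sQ}$ be holomorphic at $\cR$ and $\sigma$-anti-invariant). Uniqueness and the explicit formula will be proved simultaneously: I will manufacture a candidate differential in $p_0$ out of the already-known pieces $\check Q_{g,n+1}^{\sQ}$ (which, by the recursive nature in $2g-2+n$, involves only $\check\omega_{g',n'+1}$ with $2g'-2+n'<2g-2+n$ plus $\check\omega_{g,n+1}$ itself appearing linearly through the $dx\cdot dx/(x-x)^2$ term and the $\check\omega_{g_i,n_i+1}$ product with one unstable factor), and then show that this candidate must coincide with $\check\omega_{g,n+1}$. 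The key reduction is the algebraic identity
\begin{equation}
\check Q_{g,n+1}^{\sQ}(p_0;J)=2\omega_{0,1}(p_0)\cdot\check\omega_{g,n+1}(p_0,J)+\check{\rm Rec}_{g,n+1}^{\sQ}(p_0,J),
\end{equation}
which is just the definition of $\check{\rm Rec}$; substituting into \eqref{GLE} gives $\check R_{g,n+1}^{\sQ}=\check\omega_{g,n+1}+\check{\rm Rec}_{g,n+1}^{\sQ}/(2\omega_{0,1})$ as differentials in $p_0$, so the loop equation says precisely that $\check\omega_{g,n+1}+\check{\rm Rec}_{g,n+1}^{\sQ}/(2\omega_{0,1})$ is holomorphic at $\cR$ and $\sigma$-anti-invariant.

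\textbf{Key steps.} First I would analyse the pole structure of $\check{\rm Rec}_{g,n+1}^{\sQ}(p_0,J)/(2\omega_{0,1}(p_0))$ as a differential in $p_0$: using that $\check\omega_{0,1}=\tfrac12\Delta y\,dx$ vanishes at $\cR^*$ and that all lower $\check\omega$'s lie in $\Omega_{{\rm stab}}$, together with the loop-equation constraint, I locate its poles at $\cR$, at $J_0=\{p_1,\dots,p_n\}$ and their $\sigma$-images, at $\widetilde\cP_+\cup\sigma(\widetilde\cP_+)$, and possibly ineffective ramification points; by the $\sigma$-anti-invariance of $\check R_{g,n+1}^{\sQ}$ and the known $\sigma$-anti-invariance of $\check\omega_{g,n+1}(p_0,J)$ in $p_0$ (which follows because $\check\omega_{g,n+1}\in\Omega_{{\rm stab}}$ has its $p_0$-poles only in $\cR^*$, $\sigma(\widetilde\cP_+)$, or at $\sigma(p_j)$ — this needs a short lemma, likely using $\omega_{0,2}(p_0,p_1)=-B(p_0,\sigma(p_1))$), I deduce that $\check{\rm Rec}_{g,n+1}^{\sQ}/(2\omega_{0,1})$ is $\sigma$-anti-invariant as well. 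Second, I apply the operator $\lambda\mapsto\tilde\lambda$ of \eqref{sing}, i.e. $\frac{1}{2\pi i}\oint \eta_\sA^p(p_0)\,\cdot\,$ around the poles, exploiting the three properties listed after \eqref{sing}: the operator extracts the singular part of $-\check{\rm Rec}_{g,n+1}^{\sQ}/(2\omega_{0,1})$, makes it $\sigma$-anti-invariant, and normalises it along $\sA_i$. Since $\check\omega_{g,n+1}(p_0,J)$ is itself $\sigma$-anti-invariant in $p_0$, residue-free, $\cA_i$-normalised, and holomorphic at $\cR$ (the loop-equation constraint transferred to $\check\omega_{g,n+1}$ via the holomorphicity of $\check R_{g,n+1}^{\sQ}$ at $\cR$ together with the pole-order bookkeeping at $\cR$), it equals the singular-part reconstruction of $-\check{\rm Rec}_{g,n+1}^{\sQ}/(2\omega_{0,1})$ at the non-$\cR$ poles. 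Third, I split the extraction contour: encircling $J_0\cup\widetilde\cP_+$ by $C_+$ and, using $\sigma$-anti-invariance to rewrite the $\sigma$-image contributions as a contour $C_-$ around $\cR\cup\sigma(J_0\cup\widetilde\cP_+)$ (the poles at $\cR$ contribute nothing since $-\check{\rm Rec}/(2\omega_{0,1})$ is regular there after the holomorphicity argument, or are killed because $\lambda$ is $\sigma$-invariant-compatible there — the remark after \eqref{sing} that $a\in\cR$ with $\sigma$-invariant $\lambda$ gives $\tilde\lambda_a=0$ is the relevant mechanism), this is exactly \eqref{RTR1}. Fourth, for \textbf{recursion 2} I instead isolate the $p_0=p_0$ and $p_0=\sigma(p_0)$ residues using \eqref{eta0}, which produces the $-\check{\rm Rec}_{g,n+1}^{\sQ}(p_0,J)/(2\omega_{0,1}(p_0))$ term explicitly, collect the remaining non-$\sigma$-image poles into $\hat C_+$, and recover the $\cA_i$-period correction via the Riemann bilinear identity \eqref{RBI} — that is, the $\sum_i du_i(p_0)\oint_{\sA_i}$ term is precisely the boundary contribution of Lemma \ref{lem:RBI}; the analytic-continuation caveat at $\sigma(\widetilde\cP_+)$ and ineffective ramification points enters because $\eta_\sA^p$ is only defined on the fundamental domain and the formula \eqref{RTR2} a priori has $p_0$ away from those loci.

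\textbf{Main obstacle.} The delicate point is \emph{not} the contour manipulation but establishing the $\sigma$-anti-invariance of $\check\omega_{g,n+1}(p_0,J)$ in $p_0$ and controlling its pole orders at $\cR$ from the loop-equation data — in the unrefined case this is supplied by the linear loop equations, but as the excerpt explicitly warns, \textbf{there is no straightforward analogue of linear loop equations when $\sQ\neq 0$}. So I would need to extract anti-invariance and regularity at $\cR$ directly: from $\check R_{g,n+1}^{\sQ}=\check\omega_{g,n+1}+\check{\rm Rec}_{g,n+1}^{\sQ}/(2\omega_{0,1})$ being holomorphic at $\cR$ and $\sigma$-anti-invariant, and from $\check\omega_{g,n+1}\in\Omega_{{\rm stab}}^{n+1}$ having $p_0$-poles only at $\cR^*,\sigma(\widetilde\cP_+)$, or $\sigma(p_j)$, I would argue that the invariant part of $\check\omega_{g,n+1}(p_0,J)$ in $p_0$ can have poles only at $\cR$ (the other loci being $\sigma$-exchanged with $\widetilde\cP_+,p_j$ which are \emph{excluded} from the pole set), forcing it — being also holomorphic at $\cR$ by the constraint, residue-free, and $\cA_i$-normalised — to be a globally holomorphic $\cA_i$-normalised differential, hence zero. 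That chain of implications, especially verifying that $\check{\rm Rec}_{g,n+1}^{\sQ}/(2\omega_{0,1})$ contributes no new invariant-part poles at $\cR$ given only holomorphicity of the \emph{sum} there, is where the real work lies; the product term $\check\omega_{g_1,n_1+1}\cdot\check\omega_{g_2,n_2+1}$ with both factors stable needs a careful pole-cancellation check at $\cR^*$ using that each factor is $\sigma$-anti-invariant in $p_0$, so the product is $\sigma$-invariant and its double pole against the double zero of $\omega_{0,1}$ at $\cR^*$ becomes at worst a simple pole that must cancel against the $\check\omega_{g-1,n+2}(p_0,p_0,J)$ and $\sQ$-derivative terms.
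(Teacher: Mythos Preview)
Your proposal has a genuine gap at exactly the point you flag as the ``main obstacle.'' You attempt to prove that $\check\omega_{g,n+1}(p_0,J)$ is $\sigma$-anti-invariant in $p_0$, arguing that its $\sigma$-invariant part must be globally holomorphic and hence zero. But this is \emph{false} in the refined setting: the invariant part of $\check\omega_{g,n+1}$ genuinely has poles at $\cR^*$ (see Lemmas~\ref{lem:LLE1/2,2} and \ref{lem:LLE1,1} later in the paper, which compute these invariant parts explicitly and show they are nonzero exact differentials singular at $\cR^*$). The loop-equation constraint says only that $\check R_{g,n+1}^{\sQ}=\check\omega_{g,n+1}+\check{\rm Rec}_{g,n+1}^{\sQ}/(2\omega_{0,1})$ is holomorphic at $\cR$, not that $\check\omega_{g,n+1}$ itself is; the two singular contributions cancel in the sum while each is individually singular. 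So your chain of implications breaks, and with it your route to recursion~1.

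The paper's proof sidesteps this entirely. Rather than reconstructing $\check\omega_{g,n+1}$ from the singular parts of $-\check{\rm Rec}/(2\omega_{0,1})$, it applies the operator $\frac{1}{2\pi i}\bigl(\oint_{C_+}-\oint_{C_-}\bigr)\eta^p_\sA(p_0)\,\cdot\,$ directly to the decomposition $\check{\rm Rec}_{g,n+1}^{\sQ}/(4\omega_{0,1})=\tfrac12\check R_{g,n+1}^{\sQ}-\tfrac12\check\omega_{g,n+1}$ and evaluates the two pieces separately. For the $\check\omega_{g,n+1}$ piece, the hypothesis $\check\omega_{g,n+1}\in\Omega_{{\rm stab}}^{n+1}$ places all its $p$-poles inside $C_-$, so the Riemann bilinear identity (Lemma~\ref{lem:RBI}) together with $\cA_i$-normalisation gives $\frac{1}{2\pi i}\bigl(\oint_{C_+}-\oint_{C_-}\bigr)\eta^p_\sA(p_0)\,\check\omega_{g,n+1}(p,J)=-2\check\omega_{g,n+1}(p_0,J)$; no anti-invariance of $\check\omega_{g,n+1}$ is invoked. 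For the $\check R_{g,n+1}^{\sQ}$ piece, its $\sigma$-anti-invariance is \emph{given} by hypothesis, and its holomorphicity at $\cR$ lets one apply Lemma~\ref{lem:reseta} (equation~\eqref{etaA}) to conclude that the contour difference vanishes. Subtracting yields recursion~1 at once; recursion~2 then follows by one more application of the Riemann bilinear identity, essentially as you sketch. The moral is that the loop-equation hypotheses are calibrated so that one never needs to know the $\sigma$-parity of $\check\omega_{g,n+1}$ itself.
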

\begin{proof}
Due to the pole structure of $\check \omega_{g,n+1}$, the Riemann bilinear identity (Lemma \ref{lem:RBI}) gives 
\begin{equation}
    \frac{1}{2 \pi i}\left(\oint_{p\in C_+}-\oint_{p\in C_-}\right)\eta^p_\sA(p_0)\cdot\check \omega_{g,n+1}(p,J)=-2\check \omega_{g,n+1}(p_0,J),
\end{equation}
where we also used the fact that $\check \omega_{g,n+1}(p_0,J)$ is normalised along the $\cA_i$-cycle. On the other hand, Lemma \ref{lem:reseta} implies that $\check R_{g,n+1}^{\sQ}(p,J)$ follows
    \begin{equation}
        \frac{1}{2 \pi i}\left(\oint_{p\in C_+}-\oint_{p\in C_-}\right)\eta^p_\sA(p_0)\cdot \check R_{g,n+1}^{\sQ}(p,J)=-\sum_{r\in\mathcal{R}}\Res_{p=r}\eta^p_\sA(p_0)\cdot \check R_{g,n+1}^{\sQ}(p,J)=0.
    \end{equation}
    Note that this holds precisely because we imposed two constraints on $\check R_{g,n+1}^{\sQ}(p,J)$, namely, holomorphicity in $p$ at $\cR$ and anti-invariance under $p\mapsto\sigma(p)$. These two equations ensure that for each $2g,n\in\bZ_{\geq0}$ with $2g-2+n\geq0$, the refined loop equation \eqref{GLE} of type $(g,n+1)$ gives the formula \eqref{RTR1} for $\check \omega_{g,n+1}$. Then, it is obvious by derivation that $\check \omega_{g,n+1}$ is uniquely and recursively constructed by the formula \eqref{RTR1}.

    If one applies the Riemann bilinear identity to the derived recursion formula \eqref{RTR1}, it is easy to see that we have:
    \begin{equation}
      \check   \omega_{g,n+1}(J_0)=\frac{1}{2 \pi i}\oint_{p\in {C}_+}\frac{\eta_\sA^p(p_0)}{2\omega_{0,1}(p)}\cdot\check {\rm Rec}_{g,n+1}^{\sQ}(p,J)+\sum_{i=1}^{\tilde g}u_i(p_0)\cdot\oint_{p\in\sA_i}\frac{\check {\rm Rec}_{g,n+1}^{\sQ}(p,J)}{2\omega_{0,1}(p)}.
    \end{equation}
    where we used the property:
    \begin{equation}
        \oint_{p\in\cA_i} (\check \omega_{g,n+1}(p,J)+ \check   \omega_{g,n+1}(\sigma(p),J))=\oint_{p\in\cA_i}\frac{\Delta\check {\rm Rec}_{g,n+1}^{\sQ}(p,J)}{2\omega_{0,1}(p)}=0
    \end{equation}
    Evaluating the contribution from $p_0$ inside $C_+$, we arrive at the second recursion formula.
\end{proof}

\begin{rem}\label{rem:cancel}
    Let us comment on the roles of $\hat R_{g,n+1}^{\sQ}(p_0,J)$. It is in fact the exact expression of $\check R_{g,n+1}^{\sQ}(p_0,J)$ in the refined loop equation of type $(g,n+1)$ (Definition \ref{def:GLE}). In particular, by reducing the contour integral along $\check C_+$ to the sum of residues inside $\check C_+$, it is easy to see that it satisfies the condition in Definition \ref{def:GLE} due to the property of $\eta_{\sA}^p(p_0)$. Furthermore, $\hat R_{g,n+1}^{\sQ}(p_0,J)$ cancels the pole of the first term in \eqref{RTR2} as $p_0\to p\in J\cup\widetilde\cP_+$, and also cancel the period integrals along $\sA_i$. Therefore, this explicitly shows that $\check \omega_{g,n+1}(p_0,J)$ has no pole as $p_0\to p\in J\cup\widetilde\cP_+$, being consistent with the requirement that $\check \omega_{g,n+1}\in\Omega_{\text{Stab}}^{n+1}(\Sigma)$.
\end{rem}

\subsection{Refined Topological Recursion}\label{sec:CE}

We derived the recursion formula \eqref{RTR1} as a way of solving refined loop equations. An advantage of this approach is to uncover a fundamental geometric structure underlying the recursion formula \eqref{RTR1}. In particular, although the pole structure of stable multidifferentials is very different from the unrefined setting, the imposed condition on refined loop equations itself (Definition \ref{def:GLE}) is similar to that on quadratic loop equations \cite{BEO13,BS15}. The only essential differences are whether the involution condition is imposed globally or locally (and also the absence of linear loop equations which we will discuss in the next section) which clarifies that the refinement of the present paper is a natural one.

On the other hand, the definition of stable multidifferentials (Definition \ref{def:stables}) is indirect in the sense that the recursion formula \eqref{RTR1} makes sense only after we assume existence of a solution of refined loop equations. Therefore, we now change our perspective, and aim for directly defining multidifferentials $\omega_{g,n+1}$ by taking the recursion formula \eqref{RTR1} as the defining equations. In doing so, one first has to recall the followings:
\begin{itemize}
    \item $\eta^p_\sA(p_0)$ is a well-defined function of $p$ only within a fundamental domain.
    \item The two closed contours $C_\pm$ in the recursion formula \eqref{RTR1} are connected and simply-connected in $\Sigma$.
    \item $p_0$ in the recursion formula \eqref{RTR1} plays a different role from any other variables in $J$.
\end{itemize}
These indicate that if one wishes to take the recursion formula \eqref{RTR1} as the starting point, then the definition of multidifferentials $\omega_{g,n+1}$ for $2g-2+n\geq0$ makes sense only within a fundamental domain $\mathfrak{F}\subset\Sigma$, and it is, \emph{a propri}, not guaranteed that the resulting multidifferentials $\omega_{g,n+1}$ are symmetric. In other words, it becomes our task to prove that $\omega_{g,n+1}$ satisfy all the expected properties. Having these remarks addressed, we now define the hyperelliptic refined topological recursion\footnote{Similar to a refined spectral curve and refined loop equations, we often drop `hyperelliptic' and call it the refined topological recursion for brevity.}.

\begin{defin}\label{def:RTR}
    Given a hyperelliptic refined spectral curve $\cS_{\bm\mu,\bm\kappa}$ together with a refinement parameter $\sQ$, consider a fundamental domain $\mathfrak{F}\subset\Sigma$ with $\partial\mathfrak{F}:=\bigcup_{i=1}^{\tilde g}(\sA_i\cup\mathscr{B}_i)$. Let $J_0:=(p_0,..,p_n)\in(\mathfrak{F})^{n+1}$ and assume that $(J_0\cup\widetilde{\cP}_+)\cap(\cR\cup\sigma(J_0\cup\widetilde{\cP}_+))=\varnothing$. Then, the \emph{hyperelliptic refined topological recursion} is a recursive definition of multidifferentials $\omega_{g,n+1}$ on $(\mathfrak{F})^{n+1}$ for $2g,n\in\bZ_{\geq0}$ such that $\omega_{0,1},\omega_{0,2},\omega_{\frac12,1}$ are defined as in Definition \ref{def:unstable} and for $2g-2+n\geq0$:
    \begin{equation}
            \omega_{g,n+1}(J_0):=\frac{1}{2 \pi i}\left(\oint_{p\in C_+}-\oint_{p\in C_-}\right)\frac{\eta_\sA^p(p_0)}{4\omega_{0,1}(p)}{\rm Rec}_{g,n+1}^{\sQ}(p,J),\label{RTR}
        \end{equation}
        where $C_\pm$ are defined as in Definition \ref{def:contour}, and we analytically continue $\omega_{g,n+1}(J_0)$ to the region $\cR\cup\sigma(J_0\cup\widetilde{\cP}_+)$ whenever it is finite.
\end{defin}

\begin{rem}\label{rem:notation}
    When we say that ``the refined topological recursion solves refined loop equations'', we mean that for all $g,n\in\bZ_{\geq0}$ $\omega_{g,n+1}$ constructed from the refined topological recursion is an element in $\Omega_{{\rm stab}}^{n+1}(\Sigma)^{\mathfrak{S}_{n+1}}$, it has no residue, and it is normalised along $\cA_i$-cycles. Equivalently, we may say ``the refined topological recursion constructs stable multidifferentials'', which means $\omega_{g,n+1}=\check \omega_{g,n+1}$ for all $2g,n\in\bZ_{\geq0}$. We will prove in the next section that the refined topological recursion in fact solves refined loop equations whenever $\Sigma=\bP^1$, and leave the other cases to be a conjecture.
\end{rem}

\subsubsection{Chekhov-Eynard formula}

The Chekhov-Eynard formula involves the contour $C_\mathcal{D}$ ``encircling clockwise all singular points (cuts)'' lying on the ``physical sheet/leaf'' \cite[Section 2]{CE06,C10}. See also \cite[Figure 1]{CE06}. Recall that the physical sheet denotes one of the two sheets on which $\omega_{g,n}$ has no poles, and a branch cut connects a pair of ramification points. Thus, the physical sheet viewed on $\Sigma$ can be thought of as the region inside $C_+$, and $C_\mathcal{D}$ in \cite{CE06} corresponds to $C_+$ (up to homotopy). Note that $C_-$ necessarily goes through the branch cuts when $\Sigma\neq\bP^1$ as it contains $\cR$. \cite{CE06} does not have a clear explanation about the treatment of $p\in J,\sigma(J)$ for multidifferentials, but it can be inferred that $C_+$ is the contour \cite{CE06} considers.

Then, the Chekhov--Eynard formula is written as:
\begin{equation}
    \omega^{\text{CE}}_{g,n+1}(p_0,J)=\frac{1}{2\pi i}\oint_{C_+}\frac{\eta_\sA^{p}(p_0)}{2\omega_{0,1}(p)}\cdot {\rm Rec}_{g,n+1}^{\sQ}(p,J),\label{CE}
\end{equation}
where ${\rm Rec}_{g,n+1}^{\sQ}(p,J)$ is the same as ours \eqref{Rec}. One can use the Riemann bilinear identity (Lemma \ref{lem:RBI}) to rewrite their recursion to the following form:
\begin{align}
    \omega^{\text{CE}}_{g,n+1}(p_0,J)=&\frac{1}{2\pi i}\left(\oint_{C_+}-\oint_{C_-}\right)\frac{\eta_\sA^{p}(p_0)}{4\,\omega_{0,1}(p)}\cdot {\rm Rec}_{g,n+1}^{\sQ}(p,J)\nonumber\\
    &-\sum_{i=1}^{\tilde g}u_i(p_0)\oint_{p\in\sA_i}\frac{{\rm Rec}_{g,n+1}^{\sQ}(p,J)}{2\omega_{0,1}(p)}.\label{CE2}
\end{align}
Therefore, the difference from our recursion formula is the last line, i.e. contributions from the $\sA_i$-contour integrals which remains nonzero when $\Sigma\neq\bP^1$. This difference appears exactly because the sum of residues of the integrand is not zero. As shown in Theorem \ref{thm:RTR}, a unique solution of refined loop equations is given by our refined recursion formula, not \eqref{CE}, hence $\omega^{\text{CE}}_{g,n}$ can not be stable multidifferentials (Definition \ref{def:stables}). It can even be shown that these contributions are generically nonzero \emph{even when we set} $\sQ=0$. This indicates that the Chekhov-Eynard formula \eqref{CE} should be improved.

\begin{rem}
   There is no notion of refined spectral curves in \cite{CE06}, hence their spectral curves correspond to \emph{specific} choices of $\widetilde\cP_+$ compatible with matrix models. Also, strictly speaking, the above subtlety exists even in their unrefined formula of \cite{CE05}, though their final formula in terms of residues at ramification points stands correct.
\end{rem}

To be clear, \cite[Eq. 2.12]{CE06}\footnote{The equation numbers we cite from \cite{CE06} are all from the JHEP version.} also discusses loop equations from the $\beta$-deformed matrix model perspective. It turns out that their loop equations are still consistent with ours if we set $\bm{\mu}$ to special values. Therefore, our and their approaches in fact have a similar starting point. However, the Chekhov-Eynard formula does not solve refined loop equations because of the contributions from the boundary terms. We also remark that knowing loop equations of matrix models does \emph{not} imply existence of \emph{meromorphic} $\omega_{g,n}$ on $\Sigma$ --- it only implies formal series in $x_i^{-k_i}$ on the base.
 


At last, we note that not all hyperelliptic curves can appear in matrix model loop equations. For example, we need a geometric definition when the hyperelliptic curve underlying is given by  $P(x,y)=y^2-x$ or $P(x,y)=y^2-(4x^3+2t x+u)$ for some parameter $t,u\in\bC$. In contrast, our refined recursion formula \eqref{RTR} is applicable to any hyperelliptic curve.

\newpage

\section{Properties}\label{sec:properties}
In this section, we investigate whether the refined topological recursion solves refined loop equations for a given a refined spectral curve $\cS_{\bm\mu,\bm\kappa}$, or whether the refined topological recursion constructs stable differentials (c.f. Remark \ref{rem:notation}).

In doing so, the following lemma will be useful throughout the section:
\begin{lem}\label{lem:wdecomp}
Let $\omega_{g,n+1}$ for $2g,n\in\bZ_{\geq0}$ with $2g-2+n\geq0$ be multidifferentials constructed by the refined topological recursion (Definition \ref{def:RTR}). Then, for each $g,n$ we have
\begin{align}
\omega_{g,n+1}(p_0,J)+\omega_{g,n+1}(\sigma(p_0),J)&=-\frac{\Delta_0{\rm Rec}_{g,n+1}(p_0,J)}{2\omega_{0,1}(p_0)},\\
\omega_{g,n+1}(p_0,J)-\omega_{g,n+1}(\sigma(p_0),J)&=\frac{1}{2\pi i}\left(\oint_{\hat{C}_+}-\oint_{\hat{C}_-}\right)\frac{\eta^p_\sA(p_0)}{2\omega_{0,1}(p)}{\rm Rec}_{g,n+1}(p,J)
\end{align}
where $\hat{C}_+$ contains the same points as $C_+$ except $p_0$, and $\hat{C}_-$ contains the same points as $C_-$ except $\sigma(p_0)$.
\end{lem}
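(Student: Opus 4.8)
The plan is to read off both identities directly from the recursion formula \eqref{RTR} together with the Riemann bilinear identity (Lemma \ref{lem:RBI}) and the residue properties of $\eta^p_\sA$ (Lemma \ref{lem:reseta} and \eqref{eta0}). First I would write $\omega_{g,n+1}(p_0,J)$ and $\omega_{g,n+1}(\sigma(p_0),J)$ using \eqref{RTR}. The key point is how the kernel $\eta^p_\sA(p_0)$ behaves under $p_0\mapsto\sigma(p_0)$: by construction $\eta^p_\sA(p_0)$ is anti-invariant in $p_0$, i.e.\ $\eta^p_\sA(\sigma(p_0)) = -\eta^p_\sA(p_0)$, but the roles of the contours $C_+$ and $C_-$ get swapped because $p_0\in C_+$ while $\sigma(p_0)\in C_-$. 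Concretely, the contour $C_+$ for $\omega_{g,n+1}(\sigma(p_0),J)$ must encircle $\sigma(p_0)$ rather than $p_0$, so after relabelling it plays the role of $C_-$ in the formula for $\omega_{g,n+1}(p_0,J)$, up to the positions of $p_0$ and $\sigma(p_0)$ themselves.

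For the \textbf{antisymmetric combination}, I would add the two expressions. The anti-invariance of $\eta^p_\sA(p_0)$ means the $\sigma(p_0)$-term contributes with the same sign as the $p_0$-term after accounting for the contour swap, so the difference $\omega_{g,n+1}(p_0,J)-\omega_{g,n+1}(\sigma(p_0),J)$ becomes $\frac{1}{2\pi i}(\oint_{\hat C_+}-\oint_{\hat C_-})\frac{\eta^p_\sA(p_0)}{2\omega_{0,1}(p)}\mathrm{Rec}_{g,n+1}(p,J)$, where the hats indicate that the contributions of the points $p=p_0$ (from $C_+$) and $p=\sigma(p_0)$ (from $C_-$) have been extracted explicitly via \eqref{eta0}; but these two extracted residues cancel each other after using $\Res_{p=p_0}\eta^p_\sA(p_0)\cdot\lambda(p) = -\lambda(p_0)$ and $\Res_{p=\sigma(p_0)}\eta^p_\sA(p_0)\cdot\lambda(p) = \lambda(\sigma(p_0))$ on $\lambda = \mathrm{Rec}_{g,n+1}(\cdot,J)/(4\omega_{0,1}(\cdot))$ together with the fact that $\omega_{0,1}$ is anti-invariant, so that $\mathrm{Rec}_{g,n+1}(p,J)/(2\omega_{0,1}(p))$ has equal residues at $p_0$ and $\sigma(p_0)$ — wait, more carefully, the surviving combination is exactly what appears on the right-hand side, with no leftover diagonal term. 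This yields the second identity.

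For the \textbf{symmetric combination}, the cleanest route is to apply the Riemann bilinear identity. Summing $\omega_{g,n+1}(p_0,J)+\omega_{g,n+1}(\sigma(p_0),J)$ and combining the two contour integrals, the residues of $\eta^p_\sA(p_0)\cdot\mathrm{Rec}_{g,n+1}(p,J)/(4\omega_{0,1}(p))$ over $C_+\cup C_-$ (which together encircle all poles of the integrand except the boundary-of-fundamental-domain contributions) collapse, by Lemma \ref{lem:RBI}, to $-\Delta_0\big(\mathrm{Rec}_{g,n+1}(p_0,J)/(2\omega_{0,1}(p_0))\big)$ plus $\cA_i$-period terms; but the period terms cancel between $p_0$ and $\sigma(p_0)$ since $du_i$ is $\sigma$-invariant and we would pair $du_i(p_0)+du_i(\sigma(p_0))$ against the same period of $\mathrm{Rec}/(2\omega_{0,1})$ — actually I should instead argue directly: applying Lemma \ref{lem:RBI} with $\lambda(p) = \mathrm{Rec}_{g,n+1}(p,J)/(4\omega_{0,1}(p))$, whose $\Delta\lambda$ is $\sigma$-\emph{invariant} (since $\mathrm{Rec}$ and $\omega_{0,1}$ are each $\sigma$-anti-invariant in their first slot, so $\lambda$ is $\sigma$-invariant and $\Delta\lambda = 2\lambda$... no — I need to track this sign carefully). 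The main obstacle is precisely this bookkeeping: correctly tracking how each of the building blocks $\eta^p_\sA$, $\omega_{0,1}$, $B$ and $du_i$ transforms under $\sigma$, and making sure the $\cA_i$-period contributions genuinely cancel (or produce zero) in the symmetric combination — since $\omega_{g,n+1}$ is only defined on the fundamental domain, the boundary terms in the Riemann bilinear identity are exactly what one must show vanishes for the symmetric sum, which should follow from $\check\omega_{g,n+1}$ being normalised along $\cA_i$, i.e.\ the relevant $\cA_i$-periods of $\mathrm{Rec}_{g,n+1}/(2\omega_{0,1})$ are already accounted for. Once the transformation rules are pinned down, both formulas drop out mechanically.
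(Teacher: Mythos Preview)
Your approach overshoots the target considerably. The paper's proof is a two-line observation that avoids the Riemann bilinear identity entirely: starting from the recursion \eqref{RTR}, one simply evaluates the residues at $p=p_0$ (inside $C_+$) and $p=\sigma(p_0)$ (inside $C_-$) using \eqref{eta0}, obtaining
\[
\omega_{g,n+1}(p_0,J)=-\frac{\Delta_0{\rm Rec}_{g,n+1}(p_0,J)}{4\omega_{0,1}(p_0)}+\frac{1}{2\pi i}\left(\oint_{\hat{C}_+}-\oint_{\hat{C}_-}\right)\frac{\eta^p_\sA(p_0)}{4\omega_{0,1}(p)}{\rm Rec}_{g,n+1}(p,J).
\]
The first term is \emph{manifestly} $\sigma$-invariant in $p_0$ (both $\Delta_0{\rm Rec}$ and $\omega_{0,1}$ flip sign), and the second term is anti-invariant because $\eta^p_\sA(p_0)$ is anti-invariant in $p_0$ while $\hat{C}_\pm$ no longer contain $p_0$ or $\sigma(p_0)$ and hence do not move under the involution. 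Both identities then follow by doubling the relevant piece.

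Your route to the symmetric combination via Lemma~\ref{lem:RBI} has a genuine gap. The recursion integrand is $\oint_{C_+}-\oint_{C_-}$, not a sum over all residues, so the Riemann bilinear identity does not apply in the form you invoke; and even if you reorganised things to make it apply, the $\cA_i$-period terms you would generate cannot be disposed of at this stage. You appeal to ``$\check\omega_{g,n+1}$ being normalised along $\cA_i$'', but the lemma concerns $\omega_{g,n+1}$ as \emph{defined} by the recursion, before any normalisation property has been established --- indeed this lemma is used in the paper precisely to prove such properties later, so invoking them here is circular. For the antisymmetric combination your reasoning is closer to correct but unnecessarily tangled with contour-swap bookkeeping; the single decomposition above handles both parts at once with no sign-tracking beyond the two obvious parities.
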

\begin{proof}
Recall the property of the fundamental differential of the third kind $\eta_\sA^p$ (Section \ref{sec:RSC}). In particular, $\eta^p_\sA(p_0)$ behaves differently when it involves residues at $p=p_0$ and $p=\sigma(p_0)$. After evaluating the contribution of $p_0$ in $C_+$ and $\sigma(p_0)$ in $C_-$, we have
\begin{equation}
\omega_{g,n+1}(p_0,J)=-\frac{\Delta_0{\rm Rec}_{g,n+1}(p_0,J)}{4\omega_{0,1}(p_0)}+\frac{1}{2\pi i}\left(\oint_{\hat{C}_+}-\oint_{\hat{C}_-}\right)\frac{\eta^p_\sA(p_0)}{4\omega_{0,1}(p)}{\rm Rec}_{g,n+1}(p,J).
\end{equation}
It is obvious that the first term is invariant under the involution $\sigma$ in $p_0$. On the other hand, the second term is anti-invariant under the involution $\sigma$ due to Property $\bm{\eta2}$. 
\end{proof}
\begin{cor}
    If the refined topological recursion solves the refined loop equation of type $(g',n'+1)$ up to $2g'-2+n'=\chi$, then for $2g-2+n=\chi+1$, $\omega_{g,n+1}(p_0,J)$ is a differential on $\Sigma$ in terms of the first variable $p_0$, not only within the fundamental domain $\mathfrak{F}$.
\end{cor}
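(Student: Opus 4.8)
The plan is to leverage Lemma~\ref{lem:wdecomp} together with the inductive hypothesis that the refined topological recursion solves the refined loop equation of type $(g',n'+1)$ for all $2g'-2+n'\leq\chi$. The key point is that ``$\omega_{g,n+1}(p_0,J)$ is a differential on $\Sigma$ in $p_0$'' means precisely that it is single-valued on $\Sigma$ rather than merely on the fundamental domain $\mathfrak F$; equivalently, that its analytic continuation across the cuts $\sA_i$ and $\sB_i$ of $\partial\mathfrak F$ agrees with itself. Since $\omega_{g,n+1}$ is defined by the contour-integral formula \eqref{RTR} in which $\eta^p_\sA(p_0)$ is the only object not a priori globally defined on $\Sigma$ in $p_0$, the whole question reduces to controlling the monodromy of $\eta^p_\sA(p_0)$ in $p_0$ and checking that the potentially multivalued pieces cancel.

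First I would record what enters ${\rm Rec}_{g,n+1}^{\sQ}(p,J)$: by \eqref{Rec} it is built from products $\omega_{g_1,n_1+1}\cdot\omega_{g_2,n_2+1}$, from $\omega_{g,n}$ contracted against $\tfrac{dx(p)dx(t)}{(x(p)-x(t))^2}$, from $\omega_{g-1,n+2}(p,p,J)$, and from the $\sQ$-derivative term — all with $2g'-2+n'\leq\chi$, hence all elements of $\Omega_{{\rm stab}}^{n'+1}(\Sigma)^{\mathfrak S_{n'+1}}$ in particular honest differentials on $\Sigma$ in the variable $p$. Therefore ${\rm Rec}_{g,n+1}^{\sQ}(p,J)/(2\omega_{0,1}(p))$ is a well-defined meromorphic differential in $p$ on all of $\Sigma$, with poles only at $\cR$, $\sigma(\widetilde\cP_+)$, $p\in J$ and (from dividing by $\omega_{0,1}$) possibly at effective ramification points — but the loop-equation hypothesis at level $\chi$ says exactly that $\check R^\sQ_{g,n+1}$ is holomorphic at $\cR$ and anti-invariant, which controls that last possibility. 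The second formula in Lemma~\ref{lem:wdecomp} then expresses $\Delta_0\omega_{g,n+1}(p_0,J)$ as a contour integral of $\eta^p_\sA(p_0)$ against this globally-defined differential, while the first formula expresses the $\sigma$-symmetric part $\omega_{g,n+1}(p_0,J)+\omega_{g,n+1}(\sigma(p_0),J)$ as $-\Delta_0{\rm Rec}_{g,n+1}/(2\omega_{0,1}(p_0))$, which is manifestly a differential on $\Sigma$ in $p_0$ since ${\rm Rec}$ and $\omega_{0,1}$ are.

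So the crux is the anti-invariant part. Here I would use that $\eta^p_\sA(p_0)=\int^p_{\sigma(p)}B(p_0,\cdot)$ and that $B(p_0,\cdot)$ is a genuine bidifferential on $\Sigma\times\Sigma$: the multivaluedness of $\eta^p_\sA(p_0)$ in $p_0$ as one crosses $\sA_i$ or $\sB_i$ is by property \textbf{B3} and \eqref{BBcycle} a shift by $2\pi i\,du_i(p_0)$ times the residue-type data, i.e. by periods. Crossing $\sA_i$ gives zero by \textbf{B3}; crossing $\sB_i$ shifts $\eta^p_\sA(p_0)$ by $2\pi i(du_i(p)-du_i(\sigma(p)))$-type terms. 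Hence the jump of $\omega_{g,n+1}(p_0,J)$ across $\sB_i$ is $\oint_{p\in\sA_i}\!\big({\rm Rec}^\sQ_{g,n+1}(p,J)/(2\omega_{0,1}(p))\big)$ up to normalization — and this vanishes by the $\cA_i$-normalization built into the recursion (the same identity used at the end of the proof of Theorem~\ref{thm:RTR}, namely $\oint_{\cA_i}(\omega_{g,n+1}(p,J)+\omega_{g,n+1}(\sigma(p),J))=\oint_{\cA_i}\Delta{\rm Rec}/(2\omega_{0,1})=0$, combined with the construction of $\omega_{g,n+1}$ so that $\oint_{\cA_i}\omega_{g,n+1}=0$). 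Putting the symmetric and anti-invariant parts together shows $\omega_{g,n+1}(p_0,J)$ has trivial monodromy around every cycle and hence descends to a differential on $\Sigma$ in $p_0$.

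The main obstacle I anticipate is bookkeeping the boundary/monodromy contributions carefully: one must be sure that when $p_0$ crosses $\partial\mathfrak F$ the contours $C_\pm$ (which encircle $J_0\cup\widetilde\cP_+$ and its complement) can be deformed consistently, and that no pole of ${\rm Rec}^\sQ_{g,n+1}(p,J)/(2\omega_{0,1}(p))$ is dragged across $C_\pm$ in the process — in other words that the only source of multivaluedness really is the period shift of $\eta^p_\sA(p_0)$ and not a contour-crossing artifact. This is exactly the subtlety the paper flags in the Chekhov–Eynard derivation, so it deserves explicit care; but once the period shifts are identified with $\cA_i$-integrals of ${\rm Rec}^\sQ_{g,n+1}$, the vanishing is immediate from the normalization condition, and the corollary follows.
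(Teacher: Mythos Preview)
Your overall strategy --- decompose via Lemma~\ref{lem:wdecomp} into the $\sigma$-invariant and $\sigma$-anti-invariant parts in $p_0$ and handle each separately --- is exactly the paper's intended argument, and your treatment of the invariant part is correct. However, you have misidentified the variable in which $\eta^p_\sA(p_0)$ is multivalued, and this leads you into an unnecessary and partly incorrect monodromy analysis.

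By its defining property $\bm\eta\mathbf{1}$, $\eta^p_\sA(p_0)$ is a \emph{meromorphic differential on $\Sigma$ in the variable $p_0$}: it has no monodromy in $p_0$ whatsoever. In the integral representation $\eta^p_\sA(p_0)=\int_{\sigma(p)}^p B(p_0,\cdot)$ the integration is in the second slot of $B$, so the $p_0$-dependence is simply that of the globally single-valued bidifferential $B$. The multivaluedness the paper warns about is in $p$, the upper index --- changing the path from $\sigma(p)$ to $p$ by a $\cB_i$-cycle shifts $\eta^p_\sA(p_0)$ by a multiple of $du_i(p_0)$ --- and that is why the integration contours $C_\pm$ must lie in a fixed fundamental domain.

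Once this is clear, the anti-invariant part is immediate: in the second formula of Lemma~\ref{lem:wdecomp}, the contours $\hat C_\pm$ and the factor ${\rm Rec}^\sQ_{g,n+1}(p,J)/(2\omega_{0,1}(p))$ are independent of $p_0$ (the latter by the inductive hypothesis), and for each fixed $p$ on the contour $\eta^p_\sA(p_0)$ is meromorphic on $\Sigma$ in $p_0$. Hence the integral defines a meromorphic differential on $\Sigma$ in $p_0$. There is nothing to check about $p_0$ crossing $\sB_i$, and you do not need --- and at this stage do not yet have --- the normalization $\oint_{\cA_i}\omega_{g,n+1}=0$ at level $\chi+1$ that your argument invokes.
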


\subsection{Reduction to the unrefined topological recursion}
We show that the refined recursion formula \eqref{RTR} reduces to the unrefined topological recursion formula \cite{CE05,EO07} when we set $\sQ=0$. 

Let us first show:
\begin{lem}\label{lem:Qpoly}
    Let $\omega_{g,n+1}$ be a multidifferential constructed by the refined topological recursion for each $2g,n\in\bZ_{\geq0}$. Then, they polynomially depend on $\sQ$-dependence. Furthermore, when one expands $ \omega_{g,n+1}$ in $\sQ$ as
    \begin{equation}
        \omega_{g,n+1}=\sum_{k=0}^{2g}\sQ^k\omega_{g,n+1}^{(k)},
    \end{equation}
    where $\omega_{g,n+1}^{(k)}$ are independent of $\sQ$, it follows that for all $\ell\in\bZ_{\geq0}$:
    \begin{align}
        \forall g\in\bZ_{\geq0},\quad&\quad\omega_{g,n+1}^{(2\ell+1)}=0\\
        \forall g\in\bZ_{\geq0}+\frac12,\quad&\quad\omega_{g,n+1}^{(2\ell)}=0
    \end{align}
    .
\end{lem}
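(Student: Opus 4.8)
The plan is to induct on $2g-2+n$. The base cases are the unstable differentials: from \eqref{w01}–\eqref{w1/2,1}, $\omega_{0,1}$ and $\omega_{0,2}$ carry no $\sQ$, while $\omega_{\frac12,1}$ is exactly linear in $\sQ$, so the parity claim (even $g$ $\Rightarrow$ only even powers of $\sQ$; half-integer $g$ $\Rightarrow$ only odd powers) and the degree bound $\deg_{\sQ}\omega_{g,n+1}\le 2g$ both hold there. For the inductive step I would feed the expansion $\omega_{g',n'+1}=\sum_k\sQ^k\omega_{g',n'+1}^{(k)}$ of all lower differentials into the definition \eqref{RTR} via ${\rm Rec}_{g,n+1}^{\sQ}$, and track the $\sQ$-degree and parity of each term in $\check{\rm Rec}_{g,n+1}^{\sQ}={\rm Rec}_{g,n+1}^{\sQ}$ given by \eqref{Rec}: the quadratic term $\sum \omega_{g_1,n_1+1}\cdot\omega_{g_2,n_2+1}$ with $g_1+g_2=g$, the $\omega_{g,n}$ term (same $g$), the $\omega_{g-1,n+2}$ term (shift $g\to g-1$, i.e. $\sQ$-parity flips by an even amount since $2(g-1)$ has the same parity as $2g$), and the explicit $\sQ\cdot dx\cdot d_0(\omega_{g-\frac12,n+1}/dx)$ term (shift $g\to g-\frac12$ together with one extra explicit factor of $\sQ$).

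\textbf{Key steps.} First, for the $\sQ$-degree bound: by induction $\deg_{\sQ}\omega_{g_i,n_i+1}\le 2g_i$, so the quadratic term has degree $\le 2g_1+2g_2=2g$; the $\omega_{g,n}$ term has degree $\le 2g$; the $\omega_{g-1,n+2}$ term has degree $\le 2(g-1)<2g$; and the last term has degree $\le 1+2(g-\tfrac12)=2g$. Since \eqref{RTR} is $\bC[\sQ]$-linear in ${\rm Rec}_{g,n+1}^{\sQ}$ (the kernel $\eta^p_\sA(p_0)/(4\omega_{0,1}(p))$ and the contours are $\sQ$-independent), we conclude $\deg_{\sQ}\omega_{g,n+1}\le 2g$, so the polynomial expansion is valid. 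Second, for parity: assign to $\omega_{g,n+1}$ the "expected parity" $p(g):=2g\bmod 2$ (equivalently $0$ for integer $g$, $1$ for half-integer $g$) and claim every monomial of $\omega_{g,n+1}$ in $\sQ$ has exponent congruent to $p(g)$ mod $2$. Check each term of \eqref{Rec}: the quadratic term contributes exponents $\equiv p(g_1)+p(g_2)=2g_1+2g_2\equiv 2g=p(g)\pmod 2$; the $\omega_{g,n}$ term has exponents $\equiv p(g)$; the $\omega_{g-1,n+2}$ term has exponents $\equiv p(g-1)=2(g-1)\equiv 2g=p(g)$; and the last term has exponents $\equiv 1+p(g-\tfrac12)=1+(2g-1)\equiv 2g=p(g)\pmod 2$ — here one uses that $2(g-\tfrac12)=2g-1$ is odd exactly when $2g$ is even, so the extra explicit $\sQ$ restores the correct parity. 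Again by $\bC[\sQ]$-linearity of \eqref{RTR}, $\omega_{g,n+1}$ has the claimed parity, which is precisely the vanishing $\omega_{g,n+1}^{(2\ell+1)}=0$ for integer $g$ and $\omega_{g,n+1}^{(2\ell)}=0$ for half-integer $g$.

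\textbf{Main obstacle.} The bookkeeping itself is routine once set up; the only point requiring care is that the expansion $\omega_{g,n+1}=\sum_{k=0}^{2g}\sQ^k\omega_{g,n+1}^{(k)}$ is literally a \emph{polynomial} expansion, which is what licenses term-by-term extraction of $\sQ$-coefficients in \eqref{Rec} and the pushforward through \eqref{RTR}. This is why the degree bound must be established first (or simultaneously with) the parity statement — one cannot speak of $\omega_{g,n+1}^{(k)}$ before knowing the $\sQ$-dependence is polynomial. A secondary subtlety is that in the half-integer quadratic sum $g_1+g_2=g$ one of $g_1,g_2$ is a half-integer and the other is an integer (their "expected parities" sum to $1$), and in the all-integer case both are integers (parities sum to $0$); in both situations $p(g_1)+p(g_2)\equiv 2g\pmod 2$, so no case is lost. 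Once these parity and degree facts are in hand, \eqref{RTR} is manifestly $\sQ$-polynomial of the right degree and parity, completing the induction.
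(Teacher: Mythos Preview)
Your argument is correct and is exactly the approach the paper takes: the paper's proof is the single sentence ``This is easy to verify by counting the degree of $\sQ$ in the recursion formula \eqref{RTR},'' and your write-up simply makes that degree- and parity-counting explicit via induction on $2g-2+n$.
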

\begin{proof}
    This is easy to verify by counting the degree of $\sQ$ in the recursion formula \eqref{RTR}. 
\end{proof}
Since the $\sQ$-dependence appears only polynomially, one realises that the recursion for $\omega_{g,n+1}^{(0)}$ is self-closed, that is, the recursion for $\omega_{g,n+1}^{(0)}$ does not involve $\omega_{g,n+1}^{(k)}$ for $k>0$. We will show that $\omega_{g,n+1}^{(0)}$ obeys the unrefined topological recursion formula, which ensures that a solution of refined loop equations exists when $\sQ=0$, as expected.

We first derive what we call linear loop equations (c.f. \cite[Theorem 4.1]{EO07}):
\begin{lem}[Linear loop equation]\label{lem:LLE}
For $2g,n\in\bZ_{\geq0}$, assume that $\omega_{g,n+1}^{(0)}$ solves the refined loop equation of type $(g,n+1)$ up to $2g-2+n=\chi\in\bZ_{\geq-1}$ when $\sQ=0$. Then, for $2g-2+n=\chi+1$, we have:
\begin{equation}
    \omega_{g,n+1}^{(0)}(p_0,J)+\omega_{g,n+1}^{(0)}(\sigma(p_0),J)=0.\label{LLE}
\end{equation}
\end{lem}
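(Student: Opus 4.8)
The plan is to exploit the fact that when $\sQ=0$ the unstable differential $\omega_{\frac12,1}$ drops out of $\mathrm{Rec}_{g,n+1}$, so that $\mathrm{Rec}^{\sQ=0}_{g,n+1}(p,J)=Q_{g,n+1}^{(0)}(p,J)-2\omega_{0,1}(p)\omega_{g,n+1}^{(0)}(p,J)$, where $Q_{g,n+1}^{(0)}$ is the $\sQ$-independent part of \eqref{Rec}, i.e. the sum of products $\omega_{g_1,n_1+1}^{(0)}\cdot\omega_{g_2,n_2+1}^{(0)}$, the $\omega_{0,2}$-type term, and the $\omega_{g-1,n+2}^{(0)}$ term. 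First I would invoke the first equality of Lemma \ref{lem:wdecomp}, which gives
\begin{equation}
\omega_{g,n+1}^{(0)}(p_0,J)+\omega_{g,n+1}^{(0)}(\sigma(p_0),J)=-\frac{\Delta_0 \mathrm{Rec}^{(0)}_{g,n+1}(p_0,J)}{2\omega_{0,1}(p_0)}.
\end{equation}
So the whole statement reduces to showing that $\Delta_0\mathrm{Rec}^{(0)}_{g,n+1}(p_0,J)=0$, i.e. that $\mathrm{Rec}^{(0)}_{g,n+1}(p_0,J)$ is invariant under $p_0\mapsto\sigma(p_0)$.

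The key step is to compute $\Delta_0$ of each term of $Q^{(0)}_{g,n+1}(p_0,J)-2\omega_{0,1}(p_0)\omega_{g,n+1}^{(0)}(p_0,J)$ using the induction hypothesis. The induction hypothesis says that for all $(g',n')$ with $2g'-2+n'\le\chi$ the differential $\omega_{g',n'+1}^{(0)}$ solves the refined loop equation; in particular it is anti-invariant under $\sigma$ in its first variable up to the quadratic-loop-equation correction — more precisely, the induction hypothesis guarantees $R^{(0)}_{g',n'+1}(p_0,J)$ is holomorphic at $\cR$ and anti-invariant, and (for strictly stable indices) $\Delta_0 \mathrm{Rec}^{(0)}_{g',n'+1}$ vanishes, which by Lemma \ref{lem:wdecomp} means $\omega_{g',n'+1}^{(0)}(p_0,J)=-\omega_{g',n'+1}^{(0)}(\sigma(p_0),J)$. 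I would also need the unstable cases: $\omega_{0,1}(\sigma(p_0))=-\omega_{0,1}(p_0)$ directly from \eqref{w01}, and $\omega_{0,2}(\sigma(p_0),p_1)+\omega_{0,2}(p_0,p_1)=-\frac{dx(p_0)dx(p_1)}{(x(p_0)-x(p_1))^2}$ from \eqref{BB} and \eqref{w02}. Now apply $\Delta_0$ term by term: in $\sum \omega_{g_1}^{(0)}(p_0,J_1)\omega_{g_2}^{(0)}(p_0,J_2)$ the "doubly stable" summands are unchanged under $\sigma$ (each factor flips sign, signs cancel), while the "unstable$\times$stable" terms $2\,\omega_{0,1}(p_0)\omega_{g,n+1}^{(0)}(p_0,J)$ and $2\sum_{t}\omega_{0,2}(p_0,p_t)\,\omega_{g,n}^{(0)}(p_0,J\setminus t)$ combine with the $\frac{dx(p_0)dx(t)}{(x(p_0)-x(t))^2}$ term and with $-2\omega_{0,1}(p_0)\omega_{g,n+1}^{(0)}(p_0,J)$; the $B$-identity \eqref{BB} is exactly what makes the $\omega_{0,2}$ contributions assemble into the $\frac{dx\,dx}{(x-x)^2}$ term, so those cancel against it. The term $\omega_{g-1,n+2}^{(0)}(p_0,p_0,J)$ is manifestly $\sigma$-invariant in $p_0$ since it is the diagonal restriction and $\omega_{g-1,n+2}^{(0)}$ is anti-invariant in each of its first two slots (flipping the first also flips the diagonal value in a way that, because both arguments equal $p_0$, produces $(-1)^2=+1$ — this needs a short clean argument, taking the limit carefully since $\omega_{g-1,n+2}^{(0)}(p_0,p_0,J)$ means first setting the two arguments equal in the sense of \cite{EO07}).

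The main obstacle I anticipate is the diagonal term $\omega_{g-1,n+2}^{(0)}(p_0,p_0,J)$: one must argue carefully that evaluating an anti-invariant-in-each-slot bidifferential on the diagonal gives a $\sigma$-invariant object, which requires knowing the precise meaning of the coincidence limit (it is not literally "plug in $p_1=p_0$" since $\omega_{0,2}$-type poles appear in subsequent differentials, but for $\omega^{(0)}_{g-1,n+2}$ at stable index the relevant limit is the finite part, and anti-invariance in each slot is preserved under that limit). Once that is dealt with, the cancellations among the remaining terms are the standard bookkeeping that appears in \cite[Theorem 4.1]{EO07}, adapted to our sign conventions for $\omega_{0,2}$, and the base case $2g-2+n=0$ (i.e. $(g,n+1)\in\{(0,3),(\tfrac12,2),(1,1)\}$, of which only $(0,3)$ survives at $\sQ=0$ among... actually $(1,1)$ too) is checked directly from the explicit formula \eqref{RTR} together with the base-case loop equations already established for $2g-2+n=0$.
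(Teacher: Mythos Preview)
Your approach is correct and is exactly the paper's: reduce via Lemma~\ref{lem:wdecomp} to showing that $\mathrm{Rec}^{\sQ=0}_{g,n+1}(p_0,J)$ is $\sigma$-invariant in $p_0$, then verify this term by term using the inductive anti-invariance of the lower $\omega^{(0)}_{g',n'+1}$ and the identity \eqref{BB} to absorb the $\omega_{0,2}$ defect into the explicit $\frac{dx\,dx}{(x-x)^2}$ term. One clarification: your worry about the diagonal term $\omega^{(0)}_{g-1,n+2}(p_0,p_0,J)$ is unnecessary, since by the induction hypothesis $\omega^{(0)}_{g-1,n+2}\in\Omega^{n+2}_{\rm stab}(\Sigma)$ has poles in each variable only at $\cR^*$, $\sigma(\widetilde\cP_+)$, or at $\sigma$-conjugates of the other arguments---never on the diagonal itself---so the restriction is literal evaluation and the $(-1)^2$ argument goes through directly.
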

\begin{proof}
When $\chi=-1$, there is no need to assume anything and one can directly see that ${\rm Rec}_{0,3}^{\sQ=0}(p,p_1,p_2)$ and ${\rm Rec}_{1,1}^{\sQ=0}(p)$ are in fact invariant under the involution $\sigma$, which guarantees \eqref{LLE} to hold for $\omega_{0,3}^{(0)}$ $\omega_{1,1}^{(0)}$ because $R_{0,3}$ and $R_{1,1}$ are anti-invariant. When $\chi\geq0$, we can easily show ${\rm Rec}_{g,n+1}^{\sQ=0}(p,J)$ also becomes invariant under the involution $\sigma$ due to the absence of the last term in \eqref{Rec}, and this proves the lemma. 
\end{proof}

It is important to remark that we are able to \emph{derive} linear loop equations (Lemma \ref{lem:LLE}) because the second condition in Definition \ref{def:GLE} is a global constraint on $\sigma$. As we will show shortly, an analogous equation for $\omega_{g,n+1}^{(k)}$ when $k>0$ shows that the right-hand side is not even holomorphic at ramification points. This clearly indicates that heuristic traditional local approaches are not applicable to the refined topological recursion.

\begin{prop}\label{prop:Q=0}
For $2g,n\in\bZ_{\geq0}$, assume that $\omega_{g,n+1}^{(0)}$ solves the refined loop equation of type $(g,n+1)$ up to $2g-2+n=\chi\in\bZ_{\geq-1}$ when $\sQ=0$. Then, when $2g-2+n=\chi+1$, $\omega_{g,n+1}^{(0)}$ satisfies the unrefined topological recursion formula.
\end{prop}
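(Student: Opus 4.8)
The plan is to show that at $\sQ=0$ the global contour formula \eqref{RTR} collapses to a sum of residues over $\cR^*$, and that the resulting residue formula is precisely the Chekhov--Eynard--Orantin recursion in our conventions; the argument is an induction on $\chi=2g-2+n$. First I would record the $\sQ=0$ simplifications: by Lemma \ref{lem:Qpoly} every half-integer multidifferential vanishes, so the term $\sQ\cdot dx\cdot d_0(\omega_{g-\frac12,n+1}/dx)$ in \eqref{Rec} drops and ${\rm Rec}_{g,n+1}^{\sQ=0}(p,J)$ reduces to the quadratic part, the Bergman-kernel part, and $\omega_{g-1,n+2}(p,p,J)$. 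By the proof of Lemma \ref{lem:LLE} this combination is $\sigma$-invariant in $p$, and Lemma \ref{lem:LLE} itself gives the linear loop equation $\omega_{g,n+1}^{(0)}(p_0,J)+\omega_{g,n+1}^{(0)}(\sigma(p_0),J)=0$. Substituting this into the second identity of Lemma \ref{lem:wdecomp} yields
\begin{equation*}
\omega_{g,n+1}^{(0)}(p_0,J)=\frac{1}{2\pi i}\left(\oint_{\hat C_+}-\oint_{\hat C_-}\right)\frac{\eta^p_\sA(p_0)}{4\omega_{0,1}(p)}\,{\rm Rec}_{g,n+1}^{\sQ=0}(p,J).
\end{equation*}

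Next I would deform the contours onto $\cR^*$. The inductive hypothesis (that the lower $\omega_{g',n'+1}^{(0)}$ solve refined loop equations) together with the residue formula produced at each earlier step shows that the lower stable differentials have poles only at $\cR^*$; hence, using \eqref{BB} to write $dx(p)\,dx(t)/(x(p)-x(t))^2=B(p,t)+B(p,\sigma(t))$, the differential ${\rm Rec}_{g,n+1}^{\sQ=0}(p,J)$ has poles in $p$ only at $\cR^*$ and at the anti-diagonal points $p\in J\cup\sigma(J)$. The integrand $F(p):=\frac{\eta^p_\sA(p_0)}{4\omega_{0,1}(p)}{\rm Rec}_{g,n+1}^{\sQ=0}(p,J)$ is $\sigma$-invariant in $p$ (the anti-invariances of $\eta^p_\sA$ and of $\omega_{0,1}$ cancel, and ${\rm Rec}^{\sQ=0}$ is invariant), so $\Res_{p=p_j}F=\Res_{p=\sigma(p_j)}F$. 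Since $\hat C_+$ encircles $p_1,\dots,p_n$ and $\widetilde\cP_+$ (where $F$ is regular), while $\hat C_-$ encircles $\cR$ and $\sigma(p_1),\dots,\sigma(p_n)$, the $p_j$ and $\sigma(p_j)$ residues cancel in the difference, the ineffective points of $\cR$ contribute nothing because $1/\omega_{0,1}$ vanishes there, and one is left with
\begin{equation*}
\omega_{g,n+1}^{(0)}(p_0,J)=-\sum_{r\in\cR^*}\Res_{p=r}\frac{\eta^p_\sA(p_0)}{4\omega_{0,1}(p)}\,{\rm Rec}_{g,n+1}^{\sQ=0}(p,J);
\end{equation*}
in particular this output has poles in $p_0$ only at $\cR^*$, which closes the auxiliary induction on the pole structure.

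Finally I would identify the source. Using the linear loop equation to replace each $\omega_{g',n'+1}^{(0)}(\sigma(p),\cdot)$ by $-\omega_{g',n'+1}^{(0)}(p,\cdot)$ on all stable factors, together with $\omega_{0,2}(p,t)=-B(p,\sigma(t))$ and the Bergman identity above, a direct computation gives
\begin{equation*}
-{\rm Rec}_{g,n+1}^{\sQ=0}(p,J)=\omega_{g-1,n+2}^{(0)}(p,\sigma(p),J)+\sideset{}{'}\sum_{\substack{g_1+g_2=g\\ J_1\sqcup J_2=J}}\omega_{g_1,|J_1|+1}^{(0)}(p,J_1)\,\omega_{g_2,|J_2|+1}^{(0)}(\sigma(p),J_2),
\end{equation*}
where the prime omits the $\omega_{0,1}$ factors; the cases $\omega_{0,3}^{(0)}$ and $\omega_{1,1}^{(0)}$, where an $\omega_{0,2}$ factor sits in a slot to which the linear loop equation does not apply, must be verified by hand. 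Recognising $\eta^p_\sA(p_0)/(4\omega_{0,1}(p))$ as the unrefined recursion kernel --- indeed $\int_{\sigma(p)}^{p}\omega_{0,2}(p_0,\cdot)=\eta^p_\sA(p_0)$ and $\omega_{0,1}(p)-\omega_{0,1}(\sigma(p))=2\omega_{0,1}(p)$ --- the last two displays combine into exactly the Chekhov--Eynard--Orantin topological recursion formula for $\omega_{g,n+1}^{(0)}$.

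The hard part is the middle step: because $\eta^p_\sA$ is single-valued only in the fundamental domain and $\hat C_\pm$ are genuinely not a single small loop around $\cR^*$, the deformation has to be justified with care, and the whole reduction hinges on the $\sigma$-invariance of $F$, which is precisely what makes the non-ramification contributions cancel in pairs. One must also keep track of the fact that the statement actually proved inductively is the slightly stronger ``$\omega_{g,n+1}^{(0)}$ has poles only at $\cR^*$'', which is not in the hypothesis of the proposition but is needed --- and produced --- along the way; and the convention-matching in the last step (our $\omega_{0,2}=-B(\,\cdot\,,\sigma(\,\cdot\,))$ versus the standard $B$, and the extra Bergman term in ${\rm Rec}$) is routine but is where all the signs have to be tracked.
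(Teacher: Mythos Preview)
Your proof is correct and follows essentially the same route as the paper's. The paper's proof is terser: it notes that ${\rm Rec}_{g,n+1}^{\sQ=0}(p,J)$ is $\sigma$-invariant (hence ${\rm Rec}^{\sQ=0}/\omega_{0,1}$ is anti-invariant) and then applies Lemma~\ref{lem:reseta} directly to the full contours $C_\pm$ in \eqref{RTR}, which collapses everything to residues at $\cR$ in one stroke, after which the linear loop equation (Lemma~\ref{lem:LLE}) fixes the signs to match the CEO formula. You instead pass through Lemma~\ref{lem:wdecomp} first to strip off $p_0,\sigma(p_0)$ and then cancel the $p_j,\sigma(p_j)$ residues by hand using the $\sigma$-invariance of the integrand; this is logically equivalent to the paper's one-line appeal to \eqref{etaA}. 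Your explicit bookkeeping in the ``identify the source'' step (the Bergman identity, the $\omega_{0,2}$ convention, the auxiliary induction on poles at $\cR^*$) is correct and simply unpacks what the paper leaves implicit in the phrase ``after adjusting the sign by the linear loop equation''.
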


\begin{proof}
Since ${\rm Rec}_{g,n+1}^{\sQ=0}(p,J)$ becomes invariant under the involution $\sigma$, Lemma \ref{lem:reseta} implies that the contour integrals along $C_+$ and $C_-$ in the refined recursion formula \eqref{RTR} reduces to the residue computations only at $\cR$. After adjusting the sign by the linear loop equation (Lemma \ref{lem:LLE}), one finds that the remainder in the recursion formula is none other than the unrefined topological recursion formula of Chekhov-Eynard-Orantin \cite{CE05,CEO06,EO07} for a hyperelliptic curve.
\end{proof}

\begin{cor}\label{cor:Q=0}
    The refined topological recursion solves refined loop equations when $\sQ=0$.
\end{cor}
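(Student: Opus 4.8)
The plan is to deduce Corollary \ref{cor:Q=0} directly from Proposition \ref{prop:Q=0} together with Corollary \ref{cor:Q=0}'s sibling statements, by an induction on $\chi = 2g-2+n$. The base of the induction is the unstable sector: $\omega_{0,1}$, $\omega_{0,2}$, and $\omega_{\frac12,1}$ are fixed by Definition \ref{def:unstable}, so there is nothing to check there except that $\omega_{0,2}^{(0)} = \omega_{0,2} = -B(p_0,\sigma(p_1))$ is the Chekhov-Eynard-Orantin Bergman kernel for a hyperelliptic curve (up to the involution in the second slot, which is the conventional normalization choice noted after Definition \ref{def:unstable}) and that $\omega_{\frac12,1}$ carries no $\sQ^0$-part, consistent with Lemma \ref{lem:Qpoly}. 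For the inductive step, I would suppose that for every $(g',n')$ with $2g'-2+n' = \chi$ the multidifferential $\omega_{g',n'+1}$ constructed by the refined topological recursion lies in $\Omega_{\mathrm{stab}}^{n'+1}(\Sigma)^{\mathfrak S_{n'+1}}$, is residue-free, and is $\cA_i$-normalized when $\sQ=0$ --- i.e. it solves the refined loop equation of type $(g',n'+1)$ at $\sQ=0$.

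Under that hypothesis Proposition \ref{prop:Q=0} applies verbatim and tells us that, for $2g-2+n = \chi+1$, $\omega_{g,n+1}^{(0)} = \omega_{g,n+1}\big|_{\sQ=0}$ (the equality by Lemma \ref{lem:Qpoly}) coincides with the multidifferential produced by the unrefined Chekhov-Eynard-Orantin topological recursion on the hyperelliptic curve $(\Sigma, x, y)$ with Bergman kernel $B$. The point is then to invoke the known structural properties of the unrefined recursion: the standard results (e.g. \cite{EO07}, and for the hyperelliptic formulation \cite{BE12,BEO13,BS15}) guarantee that the unrefined $\omega_{g,n+1}$ is a symmetric multidifferential whose poles in each variable lie only at $\cR^*$ (equivalently, at $\cR$ after the $\sigma$-twist, hence in $\cR^* \cup \sigma(\widetilde\cP_+)$ once one tracks the convention $\omega_{0,2} = -B(\cdot,\sigma(\cdot))$), that it is residue-free, and that it is $\cA_i$-normalized --- these are precisely the conditions defining a solution of the refined loop equation of type $(g,n+1)$ when $\sQ=0$. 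One must also check that the quadratic loop equation condition on $R_{g,n+1}^{\sQ=0}$, namely holomorphicity at $\cR$ and anti-invariance under $\sigma$, holds; but Lemma \ref{lem:LLE} supplies the anti-invariance half (linear loop equation) and the holomorphicity at $\cR$ is exactly the content of the unrefined quadratic loop equation, which is classical. This closes the induction, and since $\chi$ ranges over all of $\bZ_{\geq -1}$, we conclude that at $\sQ=0$ the refined topological recursion constructs stable multidifferentials, i.e. it solves refined loop equations.

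The one genuine subtlety --- and the step I would flag as the place requiring care rather than a routine citation --- is the translation of pole locations and normalization conventions between our setup and the standard Chekhov-Eynard-Orantin one. Our $\omega_{0,1} = \tfrac12 \Delta y\, dx$ and $\omega_{0,2} = -B(p_0,\sigma(p_1))$ differ from $y\,dx$ and $B$ by an involution twist; I would verify that feeding these into the unrefined recursion (with the residues at $\cR$ read off from Lemma \ref{lem:reseta} and Proposition \ref{prop:Q=0}) reproduces exactly the classical output up to the same twist in each variable, so that membership in $\Omega_{\mathrm{stab}}^{n+1}(\Sigma)^{\mathfrak S_{n+1}}$ — whose pole set is stated in terms of $\cR^*$, $\sigma(\widetilde\cP_+)$, and $p_i = \sigma(p_j)$ — is the correct target. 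This is bookkeeping, not a deep point, but it is where an incautious argument could go wrong; everything else is immediate from Proposition \ref{prop:Q=0} and the well-established theory of the unrefined recursion.
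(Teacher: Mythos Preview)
Your proposal is correct and follows essentially the same approach as the paper: the paper states Corollary~\ref{cor:Q=0} without proof, treating it as an immediate consequence of Proposition~\ref{prop:Q=0} together with the well-known structural properties of the unrefined Chekhov--Eynard--Orantin recursion (symmetry, residue-freeness, $\cA_i$-normalisation, poles only at $\cR^*$). Your inductive spelling-out of this, including the bookkeeping caveat about the convention $\omega_{0,2}=-B(\cdot,\sigma(\cdot))$, is a faithful expansion of what the paper leaves implicit.
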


\subsection{Existence of a solution of refined loop equations}\label{sec:existence}

Let us consider the lowest level, i.e. $2g-2+n=0$. Since $\omega_{0,3}=\omega_{0,3}^{(0)}$, Corollary \ref{cor:Q=0} implies existence of type $(0,3)$. We will consider $\omega_{\frac12,2}$ and $\omega_{1,1}$ separately.

\subsubsection{$\omega_{\frac12,2}$}
Let us first show an analogue of the linear loop equation (Lemma \ref{lem:LLE}) for $\omega_{\frac12,2}$:
\begin{lem}\label{lem:LLE1/2,2}
$\omega_{\frac12,2}$ constructed from the refined topological recursion satisfies:
    \begin{equation}
        \omega_{\frac12,2}(p_0,p_1)+\omega_{\frac12,2}(\sigma(p_0),p_1)=-\sQ\cdot d_0\frac{\Delta_0\omega_{0,2}(p_0,p_1)}{2\omega_{0,1}(p_0)}\label{LLE1/2,2}
    \end{equation}
\end{lem}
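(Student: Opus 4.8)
The plan is to apply Lemma \ref{lem:wdecomp} with $(g,n+1)=(\tfrac12,2)$ and read off the symmetric combination directly from its first identity. That lemma gives
\begin{equation}
\omega_{\frac12,2}(p_0,p_1)+\omega_{\frac12,2}(\sigma(p_0),p_1)=-\frac{\Delta_0{\rm Rec}_{\frac12,2}(p_0,p_1)}{2\omega_{0,1}(p_0)},\nonumber
\end{equation}
so the whole task reduces to computing ${\rm Rec}_{\frac12,2}^{\sQ}(p_0,p_1)=Q_{\frac12,2}^{\sQ}(p_0,p_1)-2\omega_{0,1}(p_0)\cdot\omega_{\frac12,2}(p_0,p_1)$, then applying $\Delta_0$. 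When I expand $Q_{\frac12,2}^{\sQ}$ using \eqref{Rec}, the sum over splittings $g_1+g_2=\tfrac12$, $J_1\sqcup J_2=\{p_1\}$ forces one factor to be $\omega_{0,1}$ and the other $\omega_{\frac12,1}$ or $\omega_{\frac12,2}$; by Lemma \ref{lem:Qpoly} and parity, $\omega_{\frac12,2}$ has no $\sQ^0$ part while $\omega_{0,1}$ is $\sQ$-independent, so the genuinely new contributions are $2\,\omega_{0,1}(p_0)\cdot\omega_{\frac12,2}(p_0,p_1)$ from the split term (which cancels the subtracted term in ${\rm Rec}$), the term $\omega_{-\frac12,3}$ which vanishes, the Bergman-kernel-type term which has no $\sQ^1$ part, and the explicit refinement term $\sQ\cdot dx\cdot d_0\big(\omega_{0,2}(p_0,p_1)/dx(p_0)\big)$.

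Carrying this out, after the cancellation of $-2\omega_{0,1}(p_0)\omega_{\frac12,2}(p_0,p_1)$ against the diagonal splitting, what remains in ${\rm Rec}_{\frac12,2}^{\sQ}(p_0,p_1)$ at order $\sQ$ is exactly $\sQ\cdot dx(p_0)\cdot d_0\big(\omega_{0,2}(p_0,p_1)/dx(p_0)\big)$, possibly together with terms that are manifestly $\sigma$-invariant in $p_0$ and hence killed by $\Delta_0$. Dividing by $2\omega_{0,1}(p_0)=\Delta y(p_0)\cdot dx(p_0)$ and rewriting $dx(p_0)\cdot d_0\big(\omega_{0,2}(p_0,p_1)/dx(p_0)\big)/\omega_{0,1}(p_0)$ as $d_0\big(\Delta_0\omega_{0,2}(p_0,p_1)/\omega_{0,1}(p_0)\big)$ up to $\sigma$-invariant pieces — using that $\omega_{0,1}$ is $\sigma$-anti-invariant while $dx$ is $\sigma$-invariant — and then applying $\Delta_0$, I expect to land precisely on the right-hand side $-\sQ\cdot d_0\big(\Delta_0\omega_{0,2}(p_0,p_1)/(2\omega_{0,1}(p_0))\big)$. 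The bookkeeping of which intermediate terms are $\sigma$-invariant (and therefore annihilated by $\Delta_0$) versus $\sigma$-anti-invariant is the only delicate point; everything else is substitution of the unstable data from Definition \ref{def:unstable}.

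The main obstacle I anticipate is not conceptual but a matter of care: one must correctly track how $\Delta_0$ interacts with the combination $dx(p_0)\cdot d_0(\,\cdot\,/dx(p_0))$, since $d_0$ and $\Delta_0$ commute but $\Delta_0$ acting on a product with $\omega_{0,1}(p_0)$ in the denominator is subtle because $\omega_{0,1}$ itself flips sign under $\sigma$. The cleanest route is to write everything over the common object $\omega_{0,1}(p_0)$, use $\Delta_0\big(f(p_0)/\omega_{0,1}(p_0)\big)=\big(f(p_0)+f(\sigma(p_0))\big)/\omega_{0,1}(p_0)$ for the $\sigma$-invariant numerator $f$ that arises, and check that the surviving numerator is $\sQ\,dx(p_0)\,d_0\big(\Delta_0\omega_{0,2}(p_0,p_1)/dx(p_0)\big)$ — noting $\Delta_0\omega_{0,2}(p_0,p_1)=-B(p_0,\sigma(p_1))+B(\sigma(p_0),\sigma(p_1))$ is itself $\sigma$-anti-invariant in $p_0$, which is what makes the formula consistent. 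No contour integrals are needed beyond those already packaged in Lemma \ref{lem:wdecomp}.
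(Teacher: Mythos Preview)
Your overall strategy is the same as the paper's: apply Lemma~\ref{lem:wdecomp}, compute $\Delta_0{\rm Rec}^{\sQ}_{\frac12,2}$, and simplify. But your enumeration of the terms in ${\rm Rec}^{\sQ}_{\frac12,2}$ is incorrect, and the omitted pieces are \emph{not} $\sigma$-invariant, so they are not killed by $\Delta_0$.

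Concretely, the splitting sum $g_1+g_2=\tfrac12$, $J_1\sqcup J_2=\{p_1\}$ produces two types of terms: $2\,\omega_{0,1}(p_0)\,\omega_{\frac12,2}(p_0,p_1)$ \emph{and} $2\,\omega_{0,2}(p_0,p_1)\,\omega_{\frac12,1}(p_0)$. You kept only the first and listed the nonexistent combination $\omega_{0,1}\cdot\omega_{\frac12,1}$. Moreover, the ``Bergman-kernel-type'' term $\frac{dx(p_0)\,dx(p_1)}{(x(p_0)-x(p_1))^2}\,\omega_{\frac12,1}(p_0)$ \emph{does} have a $\sQ^1$ part, since $\omega_{\frac12,1}$ is itself proportional to $\sQ$. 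Using \eqref{BB} these two missed pieces combine to $\Delta_0\omega_{0,2}(p_0,p_1)\cdot\omega_{\frac12,1}(p_0)$, and applying $\Delta_0$ leaves the nonzero contribution $-\sQ\,\frac{d\Delta y(p_0)}{\Delta y(p_0)}\,\Delta_0\omega_{0,2}(p_0,p_1)$, coming from the $\sigma$-invariant part of $\omega_{\frac12,1}$.

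This term is essential: it is precisely what makes the product rule close. Since $2\omega_{0,1}=\Delta y\cdot dx$, one has
\[
-\frac{1}{2\omega_{0,1}}\Bigl(-\frac{d\Delta y}{\Delta y}\,\Delta_0\omega_{0,2}+dx\cdot d_0\frac{\Delta_0\omega_{0,2}}{dx}\Bigr)=-\sQ\, d_0\frac{\Delta_0\omega_{0,2}}{2\omega_{0,1}},
\]
and the first summand in the bracket is exactly the piece you dropped. Without it, your proposed rewriting of $dx\cdot d_0(\omega_{0,2}/dx)/\omega_{0,1}$ as $d_0(\Delta_0\omega_{0,2}/\omega_{0,1})$ ``up to $\sigma$-invariant pieces'' is false: the discrepancy involves $\frac{d\Delta y}{(\Delta y)^2}\cdot\omega_{0,2}$, which is not $\sigma$-invariant. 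Restore the $\omega_{\frac12,1}$-terms and the computation goes through exactly as in the paper.
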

\begin{proof}
Lemma \ref{lem:wdecomp} immediately implies that, by looking at terms in ${\rm Rec}_{\frac12,2}^{\sQ}$, one finds:
\begin{align}
    &\omega_{\frac12,2}(p_0,p_1)+\omega_{\frac12,2}(\sigma(p_0),p_1)\nonumber\\
    &=-\frac{1}{2\omega_{0,1}(p_0)}\left(-\frac{d\Delta y(p_0)}{\Delta y(p_0)}\cdot\Delta_0\omega_{0,2}(p_0,p_1)+dx(p_0)\cdot d_0\left(\frac{\Delta_0\omega_{0,2}(p_0,p_1)}{dx(p_0)}\right)\right).
\end{align}
Recalling the definition of $\omega_{0,1}$ \eqref{w01}, we arrive at \eqref{LLE1/2,2}. 
\end{proof}
\begin{rem}
    This is a clear contrast from unrefined linear loop equations (Lemma \ref{lem:LLE}). This is an important observation because it suggests that a naive approach in terms of Airy structures (c.f. \cite{BBCCN18}) does not work, or it needs a significant generalisation to be applicable to the refined setting.
\end{rem}

\begin{prop}\label{prop:1/2,2}
    The refined topological recursion solves the refined loop equation of type $(\frac12,2)$.
\end{prop}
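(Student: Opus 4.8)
The plan is to make the refined loop equation of type $(\tfrac12,2)$ completely explicit and then verify each requirement of Definition \ref{def:GLE} directly. First I would expand ${\rm Rec}_{\frac12,2}^{\sQ}(p,p_1)$ using \eqref{Rec}: since $2g=1$ the term $\omega_{g-1,n+2}=\omega_{-\frac12,3}$ vanishes, and the pieces of $Q_{\frac12,2}^{\sQ}$ that contain $\omega_{\frac12,2}$ itself are exactly the ones removed in the definition of ${\rm Rec}$, so only the unstable differentials $\omega_{0,1},\omega_{0,2},\omega_{\frac12,1}$ survive. Using the Bergman-kernel identity \eqref{BB} one has $2\omega_{0,2}(p,p_1)+\tfrac{dx(p)dx(p_1)}{(x(p)-x(p_1))^2}=B(p,p_1)-B(p,\sigma(p_1))$, so that ${\rm Rec}_{\frac12,2}^{\sQ}(p,p_1)=\omega_{\frac12,1}(p)\big(B(p,p_1)-B(p,\sigma(p_1))\big)+\sQ\,dx(p)\,d_p\tfrac{\omega_{0,2}(p,p_1)}{dx(p)}$, which is manifestly proportional to $\sQ$ (consistent with Lemma \ref{lem:Qpoly}) and whose first factor $B(p,p_1)-B(p,\sigma(p_1))$ is symmetric under $p\leftrightarrow p_1$ by \eqref{BB}. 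Thus $\omega_{\frac12,2}$ produced by \eqref{RTR} is a fully explicit bidifferential built out of $\eta^p_\sA$, $B$, $du_i$ and $\Delta y$.

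Second, most of the conditions of Definition \ref{def:GLE} for $(\tfrac12,2)$ come from the same rewriting as in the derivation of the ``recursion 2'' form in Theorem \ref{thm:RTR}, which is an identity for the formula \eqref{RTR} itself (the corollary to Lemma \ref{lem:wdecomp} already guarantees $\omega_{\frac12,2}(p_0,p_1)$ is a well-defined differential in $p_0$, which is all that rewriting needs): one obtains $\omega_{\frac12,2}(p_0,p_1)=-\tfrac{{\rm Rec}_{\frac12,2}^{\sQ}(p_0,p_1)}{2\omega_{0,1}(p_0)}+\hat R_{\frac12,2}^{\sQ}(p_0,p_1)$, hence the object $R_{\frac12,2}^{\sQ}$ entering the loop equation equals $\hat R_{\frac12,2}^{\sQ}$. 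Since $\eta^p_\sA(p_0)$ (for $p$ on $\hat C_+$, away from $\cR$) and $du_i(p_0)$ are holomorphic at $\cR$ and anti-invariant under $p_0\mapsto\sigma(p_0)$, $\hat R_{\frac12,2}^{\sQ}$ is holomorphic at $\cR$ and $\sigma$-anti-invariant in $p_0$ -- precisely the constraint of Definition \ref{def:GLE}. Normalisation along the $\cA_i$-cycles is immediate from \eqref{RTR} because $\oint_{p_0\in\sA_i}\eta^p_\sA(p_0)=0$ for $p$ on the contours. Residue-freeness in $p_0$ and confinement of the $p_0$-poles of $\omega_{\frac12,2}$ to $\cR^*\cup\sigma(\widetilde\cP_+)\cup\{\sigma(p_1)\}$ then follow from Remark \ref{rem:cancel}, which kills the would-be poles at $p_1\in J$ and at $\widetilde\cP_+$, together with a short local analysis at $\cR$ (using holomorphicity of $R_{\frac12,2}^{\sQ}$ there and the behaviour of $\omega_{0,1}$ at effective versus ineffective ramification points) and at $p_0=\sigma(p_1)$ (using that $\omega_{0,2}$ and $\tfrac{dx(p_0)dx(p_1)}{(x(p_0)-x(p_1))^2}$ are residue-free). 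Lemma \ref{lem:LLE1/2,2} is exactly the identity controlling the $\sigma$-invariant-in-$p_0$ part of $\omega_{\frac12,2}$.

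The genuinely new content, and where I expect the main obstacle to lie, is symmetry $\omega_{\frac12,2}(p_0,p_1)=\omega_{\frac12,2}(p_1,p_0)$, which is invisible from \eqref{RTR} because $p_0$ is the distinguished variable there. I would attack it by explicit evaluation: write $\omega_{\frac12,2}(p_0,p_1)$ via the corollary to Lemma \ref{lem:wdecomp} as the sum of its $\sigma$-invariant-in-$p_0$ part $-\tfrac{\sQ}{2}d_0\tfrac{\Delta_0\omega_{0,2}(p_0,p_1)}{2\omega_{0,1}(p_0)}$ (Lemma \ref{lem:LLE1/2,2}) and a $\sigma$-anti-invariant contour integral, then collapse the latter to a sum of residues of $\tfrac{\eta^p_\sA(p_0)}{4\omega_{0,1}(p)}{\rm Rec}_{\frac12,2}^{\sQ}(p,p_1)$ over $p\in\{p_1\}\cup\widetilde\cP_+$ and over $p\in\cR\cup\{\sigma(p_1)\}\cup\sigma(\widetilde\cP_+)$, using the residue rules \eqref{eta0} and the explicit ${\rm Rec}_{\frac12,2}^{\sQ}$. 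The residues at $\widetilde\cP_+$, $\sigma(\widetilde\cP_+)$ and $\cR$ reorganise, via \eqref{projection}, \eqref{BBcycle} and the Riemann bilinear identity (Lemma \ref{lem:RBI}), into combinations of $B(p_0,p_1)$, $B(p_0,\sigma(p_1))$, $du_i(p_0)du_i(p_1)$ and $d_0d_1$-derivatives of these, all of which are symmetric in $p_0\leftrightarrow p_1$ by \eqref{BB}; the residues at $p_1$ and $\sigma(p_1)$ produce the singular parts of $\omega_{\frac12,2}$ at $p_0=p_1$ (which must cancel) and at $p_0=\sigma(p_1)$, and these match the corresponding singular parts of $\omega_{\frac12,2}(p_1,p_0)$. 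Carrying out this term-by-term comparison -- and checking that the residual $\sigma$-invariant pieces on the two sides agree -- is where the real work sits; it is a finite but delicate computation, exactly of the kind the author flags as becoming harder in the refined setting. Once symmetry is established, the pole structure and residue-freeness in $p_1$ follow from those in $p_0$, $\omega_{\frac12,2}$ is a genuine element of $\Omega_{{\rm stab}}^{2}(\Sigma)^{\mathfrak{S}_2}$, and all the conditions of Definition \ref{def:GLE} for type $(\tfrac12,2)$ hold.
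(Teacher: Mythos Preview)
Your approach is correct and follows essentially the same architecture as the paper: verify the pole, residue, and normalisation conditions via the ``recursion 2'' rewriting (Remark \ref{rem:cancel}), and treat symmetry in $p_0\leftrightarrow p_1$ as the only genuinely new content. Two points are worth flagging.

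First, a small gap: when you invoke the transformation from \eqref{RTR} to the form \eqref{RTR2}, note that in Theorem \ref{thm:RTR} this step \emph{used} the assumption that a solution of the loop equations exists (specifically, that the $\cA_i$-period of $\check\omega_{g,n+1}+\sigma^*\check\omega_{g,n+1}$ vanishes). Here you are going the other way, so you need to check directly that $\oint_{\sA_i}\tfrac{{\rm Rec}_{\frac12,2}^{\sQ}(p,p_1)}{2\omega_{0,1}(p)}=\oint_{\sA_i}\tfrac12\Delta_p\tfrac{{\rm Rec}_{\frac12,2}^{\sQ}(p,p_1)}{2\omega_{0,1}(p)}$. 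The paper does this explicitly using Lemma \ref{lem:LLE1/2,2} (the $\sigma$-invariant part of $\omega_{\frac12,2}$ is exact, hence has vanishing periods), whereas you only mention that lemma in passing. It is a one-line fix, but it is not a consequence of the corollary to Lemma \ref{lem:wdecomp} alone.

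Second, on the symmetry argument, the paper takes a different and cleaner route than your proposed residue-by-residue bookkeeping. It substitutes the explicit form \eqref{w1/2,1} of $\omega_{\frac12,1}$ into the recursion and simplifies (using Lemma \ref{lem:reseta} and Lemma \ref{lem:wdecomp}) to obtain \eqref{w1/2,2}: the contribution of the $-\tfrac{d\Delta y}{\Delta y}$ piece of $\omega_{\frac12,1}$ together with the $\sQ\,dx\,d_p(\omega_{0,2}/dx)$ term collapses to the manifestly $(p_0\leftrightarrow p_1)$-symmetric integral $-\tfrac{\sQ}{4\pi i}\bigl(\oint_{C_+}-\oint_{C_-}\bigr)\tfrac{\Delta\omega_{0,2}(p,p_0)\,\Delta\omega_{0,2}(p,p_1)}{4\omega_{0,1}(p)}$, while the $\mu_q$- and $\kappa_i$-pieces survive only as residues at $\cR^*$ and are symmetric by the standard local argument (the one in \cite[Lemma 2.22]{KO22}, essentially the same as for $\omega_{0,3}$). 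This avoids the delicate matching at $p=p_1,\sigma(p_1)$ and the reorganisation via \eqref{projection}, \eqref{BBcycle} that you outline; your plan would work but is considerably heavier.
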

\begin{proof}
With the help of Lemma \ref{lem:reseta} and Lemma \ref{lem:wdecomp}, we can simplify the refined topological recursion for $\omega_{\frac12,2}$ as (c.f. \cite[Eq. A.21, Eq. A.22]{KO22}): 
\begin{align}
    \omega_{\frac12,2}(p_0,p_1)=&-\frac{\sQ}{4\pi i}\left(\oint_{C_+}-\oint_{C_-}\right)\frac{\Delta\omega_{0,2}(p,p_0)\cdot\Delta\omega_{0,2}(p,p_1)}{4\,\omega_{0,1}(p)}\nonumber\\
    &-\frac{\sQ}{2}\cdot\sum_{r\in\cR^*}\Res_{p=r}\frac{\eta_\sA^p(p_0)}{4\,\omega_{0,1}(p)}\cdot \Delta\omega_{0,2}(p,p_1)\cdot\left(\sum_{q\in\widetilde{\cP}}\mu_q\cdot\eta_\sA^q(p)+\sum_{i=1}^{\tilde g}\kappa_i\cdot u_i(p)\right).\label{w1/2,2}
\end{align}
The first line is manifestly symmetric in $p_0\leftrightarrow p_1$. The second line can be also shown to be symmetric by using the same argument as \cite[Lemma 2.22]{KO22}. In particular, the second line may have poles only at $\cR^*$. 

We now use the Riemann bilinear identity (Lemma \ref{lem:RBI}) to bring the refined recursion formula \eqref{RTR} to the form analogous to \eqref{RTR2}. Note that Lemma \ref{lem:LLE1/2,2} is necessary to verify this transformation, because it ensures
\begin{equation}
    \oint_{p\in\sA_i}\frac{{\rm Rec}^{\sQ}_{\frac12,2}(p,p_1)}{2\omega_{0,1}(p)}=\oint_{p\in\sA_i}\frac12\Delta_p\frac{{\rm Rec}^{\sQ}_{\frac12,2}(p,p_1)}{2\omega_{0,1}(p)}.
\end{equation}
Then as explained in Remark \ref{rem:cancel}, the poles of $\omega_{\frac12,2}(p_0,p_1)$ with respect to $p_0$ are located in $\cR,\sigma(\widetilde\cP_+)$, and at $p_0=\sigma(p_1)$. Also, it ensures that $\omega_{\frac12,2}$ is noramlised along $\sA_i$-cycles. In addition, it is easy to see that there is no pole at the poles of $\omega_{0,1}$ because $\omega_{0,1}$ appears only in the denominator in the recursion formula \eqref{RTR2}. Then, since the $\sigma$-invariant part of $\omega_{\frac12,2}$ is residue-free thanks to Lemma \ref{lem:LLE1/2,2}, it implies that $\omega_{\frac12,2}$ itself is residue-free. This shows that $\omega_{\frac12,2}$ is normalised along $\cA_i$, not only along $\sA_i$.
\end{proof}

\begin{rem}
Proposition \ref{prop:1/2,2} is crucial, because it shows that the recursion formula of the present paper resolves the issue reported in \cite[Section 1]{C10} --- Mari\~{n}o and Pasquetti obtained a non-symmetric $\omega_{\frac12,2}$ from the Chekhov-Eynard formula \eqref{CE}. 
\end{rem}

\subsubsection{$\omega_{1,1}$}
An analogue of the linear loop equation (Lemma \ref{lem:LLE}) for $\omega_{1,1}$ is given as follows, which shows again that the $\sigma$-invariant part of $\omega_{1,1}$ is singular at $\cR^*$ unlike the unrefined setting:
\begin{lem}\label{lem:LLE1,1}
    \begin{equation}
        \omega_{1,1}(p_0)+\omega_{1,1}(\sigma(p_0))=-\frac{\sQ}{2}d_0\frac{\Delta\omega_{\frac12,1}(p_0)}{\omega_{0,1}(p_0)}\label{LLE1,1}
    \end{equation}
\end{lem}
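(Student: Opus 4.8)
The plan is to mimic the proof of Lemma \ref{lem:LLE1/2,2}, now applied to the level $2g-2+n=0$ datum $\omega_{1,1}$. The starting point is Lemma \ref{lem:wdecomp}, which gives
\begin{equation}
\omega_{1,1}(p_0)+\omega_{1,1}(\sigma(p_0))=-\frac{\Delta_0{\rm Rec}_{1,1}(p_0)}{2\omega_{0,1}(p_0)}.
\end{equation}
So everything reduces to computing ${\rm Rec}_{1,1}^{\sQ}(p_0)$ from its definition, recalling that ${\rm Rec}_{g,n+1}^{\sQ}(p,J)=\check Q_{g,n+1}^{\sQ}(p,J)-2\omega_{0,1}(p)\cdot\omega_{g,n+1}(p,J)$, and then taking its $\sigma$-anti-invariant part $\Delta_0$.

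First I would write out $\check Q_{1,1}^{\sQ}(p_0)$ using \eqref{Rec} with $g=1$, $n=0$, $J=\varnothing$. The quadratic sum $\sum_{g_1+g_2=1}\omega_{g_1,1}(p_0)\cdot\omega_{g_2,1}(p_0)$ splits into the $(0,1)+(1,1)$ cross terms, which give $2\omega_{0,1}(p_0)\cdot\omega_{1,1}(p_0)$ and cancel exactly against the $-2\omega_{0,1}(p_0)\cdot\omega_{1,1}(p_0)$ in ${\rm Rec}$, plus the term $\omega_{\frac12,1}(p_0)\cdot\omega_{\frac12,1}(p_0)=\omega_{\frac12,1}(p_0)^2$. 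The sum over $t\sqcup I=J$ is empty since $J=\varnothing$. The term $\omega_{g-1,n+2}(p_0,p_0,J)=\omega_{0,2}(p_0,p_0)$ is $\sigma$-invariant in $p_0$ up to the usual care (it is built from $B(p_0,\sigma(p_0))$, which is invariant under $p_0\mapsto\sigma(p_0)$ by \eqref{BB}), hence drops out of $\Delta_0$. The last term is $\sQ\cdot dx(p_0)\cdot d_0\frac{\omega_{\frac12,1}(p_0)}{dx(p_0)}$. So
\begin{equation}
{\rm Rec}_{1,1}^{\sQ}(p_0)=\omega_{\frac12,1}(p_0)^2+\omega_{0,2}(p_0,p_0)+\sQ\cdot dx(p_0)\cdot d_0\frac{\omega_{\frac12,1}(p_0)}{dx(p_0)}.
\end{equation}

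Next I would apply $\Delta_0$. The term $\omega_{0,2}(p_0,p_0)$ is $\sigma$-invariant and contributes nothing. For $\omega_{\frac12,1}(p_0)^2$: since $\omega_{\frac12,1}$ is anti-invariant under $\sigma$ (it is a combination of $\eta_\sA^p$, $du_i$, and $d\Delta y/\Delta y$, each anti-invariant), its square is invariant, so this term also drops. That leaves only $\Delta_0\big(\sQ\cdot dx(p_0)\cdot d_0\frac{\omega_{\frac12,1}(p_0)}{dx(p_0)}\big)$. Using that $dx$ is $\sigma$-invariant and $\omega_{\frac12,1}$ is anti-invariant, and that $d_0$ commutes with the substitution $p_0\mapsto\sigma(p_0)$, this equals $\sQ\cdot dx(p_0)\cdot d_0\frac{\Delta\omega_{\frac12,1}(p_0)}{dx(p_0)}=2\,\sQ\cdot dx(p_0)\cdot d_0\frac{\omega_{\frac12,1}(p_0)}{dx(p_0)}$ because $\Delta\omega_{\frac12,1}=2\omega_{\frac12,1}$. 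Dividing by $-2\omega_{0,1}(p_0)=-\Delta y(p_0)\cdot dx(p_0)$ and simplifying gives the claimed identity $\omega_{1,1}(p_0)+\omega_{1,1}(\sigma(p_0))=-\frac{\sQ}{2}d_0\frac{\Delta\omega_{\frac12,1}(p_0)}{\omega_{0,1}(p_0)}$; I would double-check the factor of $2$ and the placement of $\Delta$ versus a bare $\omega_{\frac12,1}$ against the normalization conventions, but no genuine obstacle arises --- the only subtlety, exactly as in Lemma \ref{lem:LLE1/2,2}, is bookkeeping the $\sigma$-parities of each term in ${\rm Rec}_{1,1}^{\sQ}$ and making sure the $\omega_{0,2}(p_0,p_0)$ diagonal term is genuinely $\sigma$-invariant in $p_0$ (which follows from the first relation in \eqref{BB}).
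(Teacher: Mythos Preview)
Your overall strategy is the same as the paper's --- apply Lemma \ref{lem:wdecomp} and compute $\Delta_0{\rm Rec}_{1,1}^{\sQ}$ --- but there is a genuine error in the execution. You assert that $\omega_{\frac12,1}$ is anti-invariant under $\sigma$ because each of its three ingredients is, but this is false for $\frac{d\Delta y}{\Delta y}$. Since $\Delta y(\sigma(p_0))=-\Delta y(p_0)$, one has $d\log\Delta y(\sigma(p_0))=d\log\Delta y(p_0)$: the term $\frac{d\Delta y}{\Delta y}$ is $\sigma$-\emph{invariant}, not anti-invariant. Consequently $\sI\omega_{\frac12,1}=-\sQ\,\frac{d\Delta y}{\Delta y}\neq0$ (this is exactly the identity the paper records later in the proof of Lemma \ref{lem:LLENS1}), and therefore
\[
\Delta_0\bigl(\omega_{\frac12,1}^2\bigr)=\sI\omega_{\frac12,1}\cdot\Delta\omega_{\frac12,1}=-\sQ\,\frac{d\Delta y}{\Delta y}\,\Delta\omega_{\frac12,1}\neq0.
\]
You drop this term, so your computation of $\Delta_0{\rm Rec}_{1,1}^{\sQ}$ is missing a piece; with only the derivative term surviving you would obtain $-\frac{\sQ}{\Delta y}\,d_0\!\bigl(\frac{\Delta\omega_{\frac12,1}}{dx}\bigr)$, which is \emph{not} the claimed total differential $-\frac{\sQ}{2}\,d_0\frac{\Delta\omega_{\frac12,1}}{\omega_{0,1}}$. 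This is not a normalization issue that a ``double-check'' would fix.

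The paper keeps both contributions and writes
\[
\omega_{1,1}(p_0)+\omega_{1,1}(\sigma(p_0))=-\frac{1}{2\omega_{0,1}(p_0)}\left(-\sQ\,\frac{d\Delta y(p_0)}{\Delta y(p_0)}\,\Delta\omega_{\frac12,1}(p_0)+\sQ\,dx(p_0)\,d_0\frac{\Delta\omega_{\frac12,1}(p_0)}{dx(p_0)}\right),
\]
and the point of the ``simple manipulation'' is precisely that the $\frac{d\Delta y}{\Delta y}$ piece combines with the second term via the product rule (using $\omega_{0,1}=\frac12\Delta y\,dx$) to give the total derivative $-\frac{\sQ}{2}\,d_0\frac{\Delta\omega_{\frac12,1}}{\omega_{0,1}}$. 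So the term you discarded is exactly what makes the right-hand side exact.
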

\begin{proof}
    The proof is simple. Lemma \ref{lem:wdecomp} for $\omega_{1,1}$ gives
    \begin{equation}
        \omega_{1,1}(p_0)+\omega_{1,1}(\sigma(p_0)=-\frac{1}{2\omega_{0,1}}\left(-\sQ\frac{d\Delta y(p_0)}{\Delta y(p_0)}\Delta\omega_{\frac12,1}(p_0)+dx(p_0)d_0\frac{\Delta\omega_{\frac12,1}(p_0)}{dx(p_0)}\right).
    \end{equation}
    Then, one finds the expression \eqref{LLE1,1} by a simple manipulation.
\end{proof} 
\begin{prop}\label{prop:1,1}
    The refined topological recursion solves the refined loop equation of type $(1,1)$.
\end{prop}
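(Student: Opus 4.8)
The plan is to follow the template of the proof of Proposition~\ref{prop:1/2,2}. Specialising \eqref{Rec} to $(g,n+1)=(1,1)$ and subtracting $2\omega_{0,1}(p)\,\omega_{1,1}(p)$, one has
\[
{\rm Rec}_{1,1}^{\sQ}(p)=\omega_{\frac12,1}(p)^{2}+\omega_{0,2}(p,p)+\sQ\,dx(p)\,d_p\frac{\omega_{\frac12,1}(p)}{dx(p)}.
\]
By Lemma~\ref{lem:Qpoly}, together with the fact that $\omega_{\frac12,1}=\sQ\,\omega_{\frac12,1}^{(1)}$ is linear in $\sQ$, we have $\omega_{1,1}=\omega_{1,1}^{(0)}+\sQ^{2}\,\omega_{1,1}^{(2)}$. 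The sector $\omega_{1,1}^{(0)}$ obeys the Chekhov--Eynard--Orantin recursion and therefore solves the refined loop equation of type $(1,1)$ at $\sQ=0$ by Corollary~\ref{cor:Q=0}, so all the remaining work is in the $\sQ^{2}$ sector, where $\omega_{0,2}(p,p)=-B(p,\sigma(p))$ drops out and ${\rm Rec}_{1,1}^{\sQ}$ contributes only $(\omega_{\frac12,1}^{(1)})^{2}+dx\,d_0(\omega_{\frac12,1}^{(1)}/dx)$.

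First I would dispose of the global constraints. Since $\omega_{0,1}$ is anti-invariant under $\sigma$, Lemma~\ref{lem:wdecomp} gives $\omega_{1,1}(p_0)+\omega_{1,1}(\sigma(p_0))=-\Delta_0{\rm Rec}_{1,1}(p_0)/(2\omega_{0,1}(p_0))$, and since $R_{1,1}^{\sQ}(p_0)={\rm Rec}_{1,1}^{\sQ}(p_0)/(2\omega_{0,1}(p_0))+\omega_{1,1}(p_0)$ this shows at once that $R_{1,1}^{\sQ}$ is anti-invariant under $\sigma$. The right-hand side of Lemma~\ref{lem:LLE1,1} is an exact differential with single-valued primitive, hence has vanishing $\cA_i$-periods; this is exactly the input needed to transport the recursion formula \eqref{RTR} through the Riemann bilinear identity (Lemma~\ref{lem:RBI}) into the form \eqref{RTR2}, just as in Proposition~\ref{prop:1/2,2}. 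From \eqref{RTR2} and Remark~\ref{rem:cancel} one reads off that $R_{1,1}^{\sQ}$ coincides with $\hat R_{1,1}^{\sQ}$, which is a contour integral of $\eta^{p}_{\sA}(p_0)$ over $\hat C_+$ --- a contour encircling only $\widetilde\cP_+$ --- plus a $\sum_i du_i(p_0)(\cdots)$ term, and is therefore holomorphic in $p_0$ on a neighbourhood of $\cR$; that $\omega_{1,1}$ is normalised along $\sA_i$; and, using Lemma~\ref{lem:LLE1,1} once more, that the $\sigma$-invariant part of $\omega_{1,1}$ is residue-free, so that $\omega_{1,1}$ itself is residue-free and hence normalised along $\cA_i$, not merely along $\sA_i$.

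What remains --- the condition $\omega_{1,1}\in\Omega_{{\rm stab}}^{1}(\Sigma)$, i.e. that the only poles of $\omega_{1,1}(p_0)=-{\rm Rec}_{1,1}^{\sQ}(p_0)/(2\omega_{0,1}(p_0))+\hat R_{1,1}^{\sQ}(p_0)$ lie in $\cR^{*}$ and $\sigma(\widetilde\cP_+)$ --- is where the real work lies, and is equivalent to the quadratic loop equation that $Q_{1,1}^{\sQ}(p_0)$ vanish at each $r\in\cR$ to the order of $\omega_{0,1}(p_0)$. In the $\sQ^{2}$ sector this reduces to showing that
\[
2\,\omega_{0,1}(p_0)\,\omega_{1,1}^{(2)}(p_0)+\omega_{\frac12,1}^{(1)}(p_0)^{2}+dx(p_0)\,d_0\frac{\omega_{\frac12,1}^{(1)}(p_0)}{dx(p_0)}
\]
is divisible by $\omega_{0,1}(p_0)$; with $f:=\omega_{\frac12,1}^{(1)}/dx$ the last two terms are $(dx)^{2}(f^{2}+df/dx)$, a Riccati-type expression whose behaviour at $\cR$ can be computed explicitly from \eqref{w1/2,1}. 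At an effective ramification point, where $dx$ has a simple zero and $\Delta y\cdot dx$ is regular, the leading term $f\sim-\tfrac12\,d\log\Delta y/dx$ makes the cancellation transparent. The genuinely delicate case --- and the one I expect to be the main obstacle --- is that of the ineffective ramification points, where $\omega_{0,1}$ and $\omega_{\frac12,1}$ are themselves singular, so that matching the pole orders of $2\omega_{0,1}\omega_{1,1}^{(2)}$ (with $\omega_{1,1}^{(2)}$ supplied by the recursion) against those of $(dx)^{2}(f^{2}+df/dx)$ requires a careful reading of the local expansions. Along the way one must also check that any apparent double pole of $(\omega_{\frac12,1}^{(1)})^{2}$ at a zero of $\Delta y$ outside $\widetilde\cP\cup\cR$ (which can occur at a double pole of $dx$) is cancelled by the $dx\,d_0(\omega_{\frac12,1}^{(1)}/dx)$ term, and that near $\sigma(\widetilde\cP_+)$ the pole of $\hat R_{1,1}^{\sQ}$ is precisely of the type allowed in $\Omega_{{\rm stab}}^{1}(\Sigma)$. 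This point-by-point analysis at $\cR$ --- essentially the $(1,1)$-instance of the mechanism underlying the $\sQ$-top recursion of Section~\ref{sec:NS} --- is the heart of the proof.
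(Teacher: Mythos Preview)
Your second paragraph is essentially the paper's own argument: Lemma~\ref{lem:LLE1,1} lets one pass to the form~\eqref{RTR2}, and then Remark~\ref{rem:cancel} gives both the cancellation of poles at $\widetilde\cP_+$ and the identification $R_{1,1}^{\sQ}=\hat R_{1,1}^{\sQ}$, whence holomorphicity of $R_{1,1}^{\sQ}$ at $\cR$ and the $\sA_i$-normalisation. The residue-freeness is then deduced from the exactness of the right-hand side of Lemma~\ref{lem:LLE1,1} together with the absence of poles at $\widetilde\cP_+$, exactly as in the paper.

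Your third paragraph, however, misidentifies what remains and would lead you into unnecessary work. The claimed equivalence between $\omega_{1,1}\in\Omega_{{\rm stab}}^{1}(\Sigma)$ and ``$Q_{1,1}^{\sQ}$ vanishing at each $r\in\cR$ to the order of $\omega_{0,1}$'' is incorrect: the latter is precisely the holomorphicity of $R_{1,1}^{\sQ}$ at $\cR$, which you have \emph{already} established via $\hat R_{1,1}^{\sQ}$. Poles of $\omega_{1,1}$ at effective ramification points $\cR^*$ are \emph{allowed} by the definition of $\Omega_{{\rm stab}}^{1}(\Sigma)$, so no cancellation needs to be verified there; your Riccati-type local analysis at $\cR^*$ proves nothing that is required. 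At the ineffective ramification points you flag as ``the main obstacle'', the situation is the opposite of what you suggest: $\omega_{0,1}$ has a \emph{pole} there (of order at least two, by the parity of $\Delta y$ under $\sigma$ at fixed points), so the factor $1/\omega_{0,1}$ in the first term of~\eqref{RTR2} suppresses rather than creates singularities. The only genuinely delicate points --- and the ones the paper isolates --- are the \emph{simple} poles of $\omega_{0,1}$, all of which lie in $\widetilde\cP^{(\infty)}\subset\widetilde\cP_+\sqcup\sigma(\widetilde\cP_+)$: there $\omega_{\frac12,1}^{2}/(2\omega_{0,1})$ may produce a simple pole, but since $\omega_{1,1}$ is already known to be regular at $\widetilde\cP_+$ and its $\sigma$-invariant part is residue-free, no residue can survive at $\sigma(\widetilde\cP_+)$ either. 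That short argument is all that is missing after your second paragraph.
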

\begin{proof}
    We now repeat the same strategy as $\omega_{\frac12,2}$. This case is in fact simpler because we do not have to worry about the symmetry and all we have to consider is the pole structure.
    
    Thanks to Lemma \ref{lem:LLE1,1}, we can transform the refined topological recursion \eqref{RTR} to the form analogous to \eqref{RTR2}. Then, Remark \ref{rem:cancel} indicates that poles of the first term in \eqref{RTR2} as $p_0\to r\in\widetilde\cP_+$ will be cancelled by the residue at $p=r$ in the second term. Thus, $\omega_{1,1}$ can possibly have poles only at $\cR$ and at $\sigma(\widetilde\cP_+)$. Also, it has vanishing $\sA_i$-contour integrals.

    It is straightforward to see that $\omega_{1,1}$ has no poles at the pole of $\omega_{0,1}$ because $\omega_{0,1}$ always appears in the denominator, if the pole order is two or higher. However, one has to be careful at simple poles of $\omega_{0,1}$ because there is $\omega_{\frac12,1}^2$ in $\text{Rec}_{1,1}^{\sQ}$ in the first term in \eqref{RTR2} whereas there is only $\omega_{0,1}$ in the denominator, which in total may give residues. However, since the $\sigma$-invariant part of $\omega_{1,1}$ is residue-free due to Lemma \ref{lem:LLE1,1}, and since $\omega_{1,1}$ has no pole at $\widetilde\cP_+$, the full $\omega_{1,1}$ cannot have residues at any of points in $\sigma(\widetilde\cP_+)$ either. Alternatively, one can check the cancellation explicitly as shown in \cite[Lemma A.2]{KO22}.
\end{proof}

\begin{rem}
    It is worth emphasising that in the unrefined setting, existence of a solution of the loop equation of type $(g,1)$ is trivial. This is because imposing linear loop equations (c.f. Lemma \ref{lem:LLE}) immediately implies that $\omega_{g,1}$ is residue-free. As a consequence, existence of a solution of unrefined loop equations typically amounts to showing only the symmetry of multidifferentials (c.f. \cite{BBCCN18}). However, it is no longer trivial once we consider the refined topological recursion, and potential residues need to be carefully investigated as we did in Lemma \ref{lem:LLE1,1} and Proposition \ref{prop:1,1} 
\end{rem}

\subsubsection{Existence for higher levels}

Proving existence of a solution of refined loop euqtions with full generalities is a challenging task at the moment of writing. In the present paper, we prove it when $\Sigma=\bP^1$ --- the work of Kidwai and the author \cite{KO22} is only for a special class of genus-zero curves. The complexity of the proof for higher genus curves originated from the fact that $\eta^p_\sA(p_0)$ is \emph{not} a meromorphic function in $p$ on $\Sigma$, hence one has to be careful about the use of the Riemann bilinear identity. Also, recall that the refined topological recursion involves residues not only at $\cR$ but also at other points. This indicates that one cannot prove existence by a clever method in terms of Airy structures (c.f. \cite{KS17,ABCO17}).

\begin{thm}\label{thm:existence}
    Given a hyperelliptic refined spectral curve $\cS_{\bm\mu}$ of genus-zero, i.e. $\Sigma=\bP^1$, there exists a solution of hyperelliptic refined loop equations, and its unique solution is constructed by the hyperelliptic refined topological recursion. 
\end{thm}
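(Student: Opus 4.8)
The plan is to proceed by induction on the level $\chi := 2g-2+n$. The stable base cases $\chi = 0$, i.e.\ $\omega_{0,3}$, $\omega_{\frac12,2}$ and $\omega_{1,1}$, are already handled by Corollary~\ref{cor:Q=0}, Proposition~\ref{prop:1/2,2} and Proposition~\ref{prop:1,1}. For the inductive step I would assume that the refined topological recursion solves the refined loop equation of every type of level $\le\chi$ and prove it for each type $(g,n+1)$ with $2g-2+n=\chi+1$. By Theorem~\ref{thm:RTR}, once existence is known the solution is unique and coincides with the one produced by the recursion, so the whole task is to verify that the multidifferential $\omega_{g,n+1}$ of Definition~\ref{def:RTR} lies in $\Omega_{\rm stab}^{n+1}(\Sigma)^{\mathfrak{S}_{n+1}}$, is residue-free, and that the differential $\check R_{g,n+1}^{\sQ}$ formed from it via \eqref{GLE} is holomorphic at $\cR$ and anti-invariant under $\sigma$; the $\cA_i$-normalisation is automatic since $\tilde g = 0$. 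Three features of $\Sigma=\bP^1$ make this feasible: $\tilde g=0$, so all $du_i$-terms and the normalisation condition drop out; a fundamental domain is all of $\bP^1$, so $\omega_{g,n+1}$ is automatically a genuine multidifferential on $\Sigma$; and $\eta_\sA^p(p_0)$ is a genuine meromorphic function of $p$ on $\bP^1$ whose only poles are simple ones at $p=p_0$ and $p=\sigma(p_0)$, so the Riemann bilinear identity (Lemma~\ref{lem:RBI}) carries no boundary contribution and contours may be deformed freely --- it is the loss of this last property for $\tilde g>0$ that forces Conjecture~\ref{conj:existence}.

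The workhorse of the inductive step is Lemma~\ref{lem:wdecomp}, which exhibits $\omega_{g,n+1}(p_0,J)$ as a manifestly $\sigma$-invariant term $-\Delta_0{\rm Rec}_{g,n+1}(p_0,J)/(4\omega_{0,1}(p_0))$ plus a $\sigma$-anti-invariant contour integral, both assembled entirely from the $\omega_{g',n'+1}$ with $2g'-2+n'\le\chi$, which by the induction hypothesis are symmetric, residue-free and have the expected pole structure. The anti-invariance of $\check R_{g,n+1}^{\sQ}$ is then immediate: rearranging Lemma~\ref{lem:wdecomp} gives $\check R_{g,n+1}^{\sQ}(p_0;J) = ({\rm Rec}_{g,n+1}(p_0,J)+{\rm Rec}_{g,n+1}(\sigma(p_0),J))/(4\omega_{0,1}(p_0)) + (\text{the anti-invariant integral})$, which is odd under $\sigma$ because $\omega_{0,1}$ is. For the pole structure I would pass to the form \eqref{RTR2} (valid on $\bP^1$) and apply Remark~\ref{rem:cancel}: the $\hat R$-term cancels the apparent poles at $J\cup\widetilde\cP_+$, so $\omega_{g,n+1}$ has poles at most at $\cR$, at $\sigma(\widetilde\cP_+)$ and at $p_0=\sigma(p_j)$; the absence of poles at the poles of $\omega_{0,1}$ and at ineffective ramification points then comes down to order counting as in \cite{KO22}, with the definition of $\widetilde\cP$ through $\Delta y\cdot dx$ rather than $\Delta y$ being precisely what keeps this count valid at points that are simultaneously double poles of $\Delta y\cdot dx$ and of $dx$. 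Residue-freeness is more delicate than in the unrefined theory, where it is immediate from linear loop equations: I would establish it by combining the $\sigma$-parity decomposition with the cancellations of Remark~\ref{rem:cancel}, the fact that $\sigma$-anti-invariant differentials have no residue at the $\sigma$-fixed points $\cR$, and the explicit $d_0$-exact shape of the $\sQ$-terms inside ${\rm Rec}_{g,n+1}$ that already governs Lemmas~\ref{lem:LLE1/2,2}--\ref{lem:LLE1,1}.

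The remaining and most substantial analytic input is holomorphicity of $\check R_{g,n+1}^{\sQ}$ at $\cR$ --- the refined counterpart of the quadratic loop equation --- which says $\check Q_{g,n+1}^{\sQ}(p_0;J)$ vanishes at each $r\in\cR$ to the order absorbing the zero of $\omega_{0,1}$; I would prove it by a local residue computation at $r$, feeding in the level-$\le\chi$ loop equations for the $\omega_{g',n'+1}$ appearing in the quadratic part of \eqref{Rec}, and controlling the genuinely new term $\sQ\cdot dx\cdot d_0(\omega_{g-\frac12,n+1}/dx)$ through its total-derivative form together with the inductive holomorphicity of $\omega_{g-\frac12,n+1}$ at $\cR$. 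For symmetry, for each $j\ge 1$ I would write $\omega_{g,n+1}(p_0,J)-\omega_{g,n+1}(p_j,p_0,J\setminus p_j)$ as a single contour integral in an auxiliary variable $p$ of $\eta_\sA^p(p_0)$ (respectively $\eta_\sA^p(p_j)$) against a symmetric combination of lower-level $\omega$'s, deform the contour across $\bP^1$ --- permissible because $\eta_\sA^p$ is globally meromorphic in $p$ --- and show that all residues cancel using the loop equations of level $\le\chi$ together with the pole structure just established, following \cite[proof of Thm.~3.2]{KO22} and the unrefined template of \cite{EO07}.

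I expect the combination of the symmetry step with the $\cR$-holomorphicity of $\check R_{g,n+1}^{\sQ}$ to be the main obstacle: in the unrefined theory linear loop equations make symmetry nearly automatic and residue-freeness at ramification points free, whereas here linear loop equations genuinely fail (Lemmas~\ref{lem:LLE1/2,2}--\ref{lem:LLE1,1}), the $\sigma$-invariant part of $\omega_{g,n+1}$ is itself singular along $\cR^*$, and one must run the full quadratic-loop-equation residue calculus carrying the extra $\sQ$-graded terms and an enlarged pole set --- higher-order poles at $\cR^*$ from $\omega_{g-1,n+2}(p_0,p_0,J)$ and from the $\sQ\,d_0$-term, together with the new poles at $\sigma(\widetilde\cP_+)$. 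This bookkeeping closes only because $\Sigma=\bP^1$ carries no period integrals and no boundary contributions in the Riemann bilinear identity to obstruct the contour manipulations --- exactly why the statement must remain conjectural when $\Sigma\neq\bP^1$.
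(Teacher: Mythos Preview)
Your overall inductive scheme and the identification of the $\bP^1$-specific simplifications are correct, but the proposal misplaces the difficulty and contains one genuine gap.

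First, the misplaced difficulty. You call holomorphicity of $\check R_{g,n+1}^{\sQ}$ at $\cR$ ``the remaining and most substantial analytic input'' and propose a local residue computation. In the paper's approach this is essentially free: on $\bP^1$ the recursion~\eqref{RTR} is equivalent to the form~\eqref{RTR2}, and then $\check R_{g,n+1}^{\sQ}=\hat R_{g,n+1}^{\sQ}$, which by construction (Remark~\ref{rem:cancel}) is a sum of residues of $\eta_\sA^p(p_0)\cdot(\ldots)$ at points of $\hat C_+$, hence manifestly anti-invariant in $p_0$ and holomorphic at $\cR$. No local analysis at $r\in\cR$ is needed. The genuinely hard step is \emph{symmetry}, and your outline (``a single contour integral\ldots deform across $\bP^1$\ldots residues cancel'') is too coarse: the paper applies the recursion \emph{twice} --- once in $p_0$, once in $q_0$ --- to express $\omega_{g,n+2}(p_0,q_0,J)$ as a \emph{double} contour integral against an almost-symmetric kernel ${\rm Rec}^{\sQ,\text{twice}}_{g,n+2}(p,q,J)$, symmetrises it to ${\rm Rec}^{\sQ,\text{sym}}_{g,n+2}$, exchanges the two contours via~\eqref{contour0}, and is then left with the residue at $p=q$ (equation~\eqref{p0q0diff}); the lengthy computation~\eqref{A1}--\eqref{symfinal}, including the ``integrate half by parts'' trick of \cite{EO07}, reduces this to an integral against $\hat R_{g,n+1}(q,J)$, which vanishes precisely by the induction hypothesis. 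Your proposal does not surface this structure.

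Second, the genuine gap: residue-freeness of $\omega_{g,1}$. You propose to obtain it from ``the explicit $d_0$-exact shape of the $\sQ$-terms'' as in Lemmas~\ref{lem:LLE1/2,2}--\ref{lem:LLE1,1}, i.e.\ by showing that $\omega_{g,1}(p_0)+\omega_{g,1}(\sigma(p_0))$ is exact. The paper states explicitly (Appendix~\ref{sec:wg1}) that this identity is \emph{not} known for the full refined $\omega_{g,1}$ beyond the unrefined and $\sQ$-top sectors --- the obstruction is the $\omega_{g-1,2}(p_0,p_0)$ term in ${\rm Rec}_{g,1}^{\sQ}$, absent in those special cases. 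Instead the paper takes a detour: at level $2g-2=\chi+1$ it first proves symmetry and residue-freeness of $\omega_{g,2}$ (one level higher, but the argument does not use residue-freeness of $\omega_{g,1}$), then establishes the dilaton equation~\eqref{dilaton} between $\omega_{g,2}$ and $\omega_{g,1}$, and finally reads off residue-freeness of $\omega_{g,1}$ from that relation. Your proposal does not contain this step, and the direct route you sketch is exactly the one the authors report they could not close.
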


\begin{proof}
    Since the proof is heavily computational, we only show a sketch here and leave details to Appendix \ref{sec:appendix}.

    We proceed by induction. We have already proven existence of $\omega_{g,n+1}$ when $2g-2+n=0$. For $\chi\in\bZ_{>0}$ we assume that 
    there exists a solution of the refined loop equation of type $(g',n'+1)$ for all for $2g',n'\in\bZ_{\geq0}$ with $2g'-2+n'\leq\chi$, and we consider $\omega_{g,n+1}$ with $2g-2+n=\chi+1$. It turns out that we need a slightly different treatment when $n=0$ (see Appendix \ref{sec:wg1}), hence we focus on the case when $n>0$. Equivalently, we shift $n$ by one and consider $\omega_{g,n+2}(p_0,q_0,J)$ for $n\geq0$ with $2g-2+n+1=\chi+1$.
    
    The first step is to show that $\omega_{g,n+1}(p_0,q_0,J)$ is a symmetric differential by applying the recursion formula \emph{twice}. Since $\Sigma=\bP^1$, $\eta^p(p_0)$ is a well-defined function in $p$ on $\Sigma$, and since the sum of residues vanishes, one can write the refined recursion topological recursion formula \eqref{RTR} as
    \begin{equation}
        \omega_{g,n+2}(p_0,q_0,J)=\frac{1}{2 \pi i}\left(\oint_{p\in C_+^p}\right)\frac{\eta^p(p_0)}{2\omega_{0,1}(p)}\left(\Delta\omega_{0,2}(p,q_0)\omega_{g,n+1}(p,J)+{\rm Rec}_{g,n+1}^{\sQ,**}(p,q_0,J)\right),\label{sym1}
    \end{equation}
    where we have separated the terms involving $\omega_{0,2}(p,q_0)$ and ${\rm Rec}_{g,n+1}^{\sQ,**}$ denotes the collection of remaining terms, in which $\omega_{g',n'+2}(p,q_0,J')$ with $2g'-2+n'<\chi$ appear. One can then apply the recursion formula (in the form \eqref{sym1}) to every $\omega_{g',n'+2}(p,q_0,J')$ in ${\rm Rec}_{g,n+1}^{\sQ,**}$ with respect to $q_0$, which gives:
    \begin{align}
        \omega_{g,n+2}(p_0,q_0,J)=&\frac{1}{2 \pi i}\left(\oint_{p\in C_+^p}\right)\frac{\eta^p(p_0)}{2\omega_{0,1}(p)}\Delta\omega_{0,2}(p,q_0)\omega_{g,n+1}(p,J)\nonumber\\
        &+\frac{1}{2 \pi i}\left(\oint_{p\in C_+^p}\right)\frac{1}{2 \pi i}\left(\oint_{q\in C_+^q}\right)\frac{\eta^p(p_0)}{2\omega_{0,1}(p)}\cdot\frac{\eta^q(q_0)}{2\omega_{0,1}(q)}\cdot{\rm Rec}_{g,n+2}^{\sQ,\text{twice}}(p,q,J),\label{twice}
    \end{align}
    where ${\rm Rec}_{g,n+2}^{\sQ,\text{twice}}$ is given in \eqref{Recpq}, and $C_+^q$ encircles not only points in $J\cup\widetilde{\cP}_+$, but also $q=p$. It is then easy to show (see Appendix \ref{sec:contours}) that
    \begin{equation}
        \oint_{p\in C_+^p}\oint_{q\in C_+^q}=\oint_{q\in C_+^q}\left(\oint_{p\in C_+^p}-2\pi i \Res_{p=q}\right),\label{contour0}
    \end{equation}
    where $C_+^p$ on the right-hand side \emph{does} encircle $p=q$. This is none other than \cite[Eq. (2.39)]{KO22}, after turning the contour integral from $C_+$ to $C_-$.

    Then, after several steps of nontrivial manipulation (see Appendix \ref{sec:sym}), we can show that
    \begin{align}
         &\omega_{g,n+2}(p_0,q_0,J)- \omega_{g,n+2}(p_0,q_0,J)\nonumber\\
         &=\frac{1}{2 \pi i}\left(\oint_{q\in C_+^q}\right)\frac{\eta^q(p_0)}{2\omega_{0,1}(q)}\cdot\Delta\omega_{0,2}(q,q_0)\cdot \hat{R}_{g,n+1}(q,J)\nonumber\\
         &=\frac{1}{2 \pi i}\left(\oint_{q\in C_+^q}-\oint_{q\in C_-^q}\right)\frac{\eta^q(p_0)}{4\omega_{0,1}(q)}\cdot\Delta\omega_{0,2}(q,q_0)\cdot \hat{R}_{g,n+1}(q,J),\label{sym0}
    \end{align}
    where the second equality holds because the sum of residues vanishes. Since the integrand in \eqref{sym0} is anti-invariant under $q\to \sigma(q)$, Lemma \ref{lem:reseta} implies that the contour integral reduces to the sum of residues at $\cR$. Furthermore, since $\hat{R}_{g,n+1}(q,J)$ is holomorphic as $q\to\cR$, the right-hand side of \eqref{sym0} vanishes.

    What remains to be seen is the pole structure. As discussed in Remark \ref{rem:cancel}, we can show from \eqref{sym1} that $\omega_{g,n+2}(p_0,q_0,J)$ has no poles in $p_0$ at $\{q_0\}\cup J\cup\widetilde{\cP}_+$. One can easily show from \eqref{sym1} that $\omega_{g,n+2}(p_0,q_0,J)$ has no residues in $q_0$, which implies it is residue-free because $\omega_{g,n+2}(p_0,q_0,J)$ is symmetric. See Appendix \ref{sec:residue} to show that there is no poles at the poles of $\omega_{0,1}$. Thus, we have shown that the refined topological recursion constructs $\omega_{g,n+2}(p_0,q_0,J)$ with all the expected properties. Equivalently, we have shown existence of the refined loop equation of type $(g,n+2)$ for $2g-2+n+1=\chi+1$.
\end{proof}

Although our proof works only when $\Sigma=\bP^1$, we conjecture that there exists a solution of refined loop equations for higher-genus curves.

\begin{conj}\label{conj:existence}
    {\rm Theorem} \ref{thm:existence} extends to any hyperelliptic refined spectral curve.
\end{conj}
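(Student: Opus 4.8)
The natural plan is to follow the inductive scheme of the proof of Theorem~\ref{thm:existence}, inducting on $\chi=2g-2+n$. The cases with $\chi=0$ --- namely $(0,3)$, $(\frac12,2)$ and $(1,1)$ --- have already been settled for arbitrary $\tilde g$ in Corollary~\ref{cor:Q=0} and Propositions~\ref{prop:1/2,2}, \ref{prop:1,1}, so the content is entirely in the inductive step: assuming the refined topological recursion solves the refined loop equation of type $(g',n'+1)$ for every $2g'-2+n'\leq\chi$, one must show that $\omega_{g,n+1}$ with $2g-2+n=\chi+1$ is a symmetric element of $\Omega_{{\rm stab}}^{n+1}(\Sigma)$, is residue-free, and is normalised along the $\cA_i$-cycles. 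As in the genus-zero argument one treats $n>0$ and $n=0$ separately; for $n>0$ one writes $\omega_{g,n+2}(p_0,q_0,J)$, splits off the $\Delta\omega_{0,2}(p,q_0)$ piece as in \eqref{sym1}, applies the recursion again to all the remaining terms in the variable $q_0$, and antisymmetrises in $p_0\leftrightarrow q_0$, while for $n=0$ one argues directly on the pole structure of $\omega_{g,1}$ as in Proposition~\ref{prop:1,1} and Appendix~\ref{sec:wg1}.

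The essential new feature relative to the $\Sigma=\bP^1$ case is that $\eta^p_\sA(p_0)$ is a meromorphic function of $p$ only on a fundamental domain $\mathfrak{F}$, so the genus-zero shortcut ``the sum of all residues vanishes'' --- used both to merge $C_+$ with $-C_-$ and to exchange the two contour integrations in \eqref{contour0} --- has to be replaced everywhere by the Riemann bilinear identity (Lemma~\ref{lem:RBI}). Each such step now produces correction terms of the shape $\sum_{i=1}^{\tilde g}du_i(p_0)\cdot\oint_{p\in\sA_i}(\,\cdot\,)$, exactly the $\hat R_{g,n+1}^{\sQ}$-type pieces already visible in \eqref{RTR2} and \eqref{CE2}. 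Thus the first block of the proof is to rerun the ``recursion applied twice'' computation \eqref{twice}--\eqref{sym0} while retaining \emph{all} the $\mathfrak{F}$-boundary (i.e.\ $\cB_i$-period) contributions, and to prove that after antisymmetrising in $p_0\leftrightarrow q_0$ these period terms reorganise into an integrand that is still anti-invariant under $q\mapsto\sigma(q)$ and holomorphic at $\cR$, so that it vanishes by Lemma~\ref{lem:reseta}. I expect this to hinge on an auxiliary bilinear identity relating $du_i$, $\eta^p_\sA$ and $B$ on the hyperelliptic curve --- morally a refined companion of the interchange relation $\oint_{\cB_i}\eta^p_\sA(p_0)=2\pi i\int_{\sigma(p)}^{p}du_i$ together with the anti-invariance property~\eqref{BB} of $B$.

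For the pole structure the route is that of Remark~\ref{rem:cancel}: once $\omega_{g,n+1}$ is written in the form \eqref{RTR2}, the term $\hat R_{g,n+1}^{\sQ}$ cancels the apparent poles at $J\cup\widetilde\cP_+$ and the spurious $\sA_i$-periods, so that the only remaining poles are at $\cR$, at $\sigma(\widetilde\cP_+)$ and on the diagonal $p_0=\sigma(p_j)$, with vanishing $\cA_i$-periods. Once symmetry of $\omega_{g,n+2}$ is known, its residue-freeness follows from residue-freeness in the last variable, which can be read off directly from \eqref{sym1}; the residue analysis for $\omega_{g,1}$ is more delicate and must be carried out as in Proposition~\ref{prop:1,1}, using that the $\sigma$-invariant part of $\omega_{g,1}$ is governed by Lemma~\ref{lem:wdecomp} (the base cases being Lemmas~\ref{lem:LLE1/2,2} and~\ref{lem:LLE1,1}) and combining this with the absence of poles at $\widetilde\cP_+$. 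The absence of poles at the zeroes and poles of $\omega_{0,1}$ is checked exactly as in the $\chi=0$ cases.

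The step I expect to be the genuine obstacle --- and the reason this is stated as a conjecture --- is precisely the bookkeeping of the $\mathfrak{F}$-boundary terms in the symmetry argument for $\tilde g>0$. In the unrefined setting the linear loop equations (Lemma~\ref{lem:LLE}) force $\omega_{g,1}$ to be residue-free and reduce existence to symmetry; in the refined setting the $\sigma$-invariant part of $\omega_{g,n+1}$ is genuinely singular at $\cR^*$, and moreover the pole order of $\omega_{g,n}$ at a ramification point grows with the number of $\sQ$-insertions produced by the term $\sQ\cdot dx\cdot d_0\!\big(\omega_{g-\frac12,n+1}/dx\big)$ in \eqref{Rec}. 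Consequently the residue computations at $\cR$ no longer reduce to a single universal local model, and keeping track of the period corrections by hand becomes, at present, unmanageable. A more conceptual tool --- a period-aware analogue of the Airy-structure formalism, or a direct geometric construction of the solution of refined loop equations (cf.\ the discussion in Section~\ref{sec:comments}) --- seems to be required to complete the proof.
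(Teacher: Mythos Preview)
This statement is a \emph{conjecture}, not a theorem: the paper does not prove it and explicitly leaves it open. There is therefore no proof in the paper to compare your proposal against.

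Your proposal is not a proof either, and you say so yourself. You correctly lay out the inductive scheme one would follow (the same one used in Theorem~\ref{thm:existence}), correctly identify that the base cases $\chi=0$ are already settled for arbitrary $\tilde g$, and correctly identify the place where the argument breaks down: when $\Sigma\neq\bP^1$, $\eta^p_\sA(p_0)$ is only meromorphic in $p$ on a fundamental domain, so the contour-exchange step \eqref{contour0}/\eqref{resrec2} acquires $\cA_i$- and $\cB_i$-period corrections that you cannot currently control. This diagnosis matches the paper's own discussion in Appendix~\ref{sec:appendix} (``Difficulties in higher genus spectral curves''), which explains that the combination $\Res_{p=r}(\Res_{q=p}-\Res_{q=\sigma(p)})$ cannot be handled by a local contour analysis and that the Riemann bilinear corrections are at present unmanageable.

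In short: your outline is accurate and aligned with the paper's viewpoint, but it is an outline with an acknowledged gap, not a proof. Since the paper offers no proof, there is nothing further to compare.
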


\subsection{Dilaton equation and free energy}
Let us fix notation to consider what we call the dilaton equation. We denote by $\mathfrak{p}_{g,n+1}$ the set of all poles of $\omega_{g,n+1}(p,p_1,...,p_n)$ with respect to $p$, and by $C_{g,n+1}^{\mathfrak{p}}$ a connected, simply-connected closed contour containing all points in $\mathfrak{p}_{g,n+1}$ but not containing any poles of $\omega_{0,1}$ (thus we assume that $p_i\not\in\widetilde{\cP}$ for all $i\in\{1,..,n\}$). In particular, $C_{g,n+1}^{\mathfrak{p}}$ is contained in $C_-$. We denote by $\phi$ any primitive of $\omega_{0,1}$. 

For example, $C_{0,2}^{\mathfrak{p}}$ contains $\sigma(p_1)$ hence we have
\begin{equation}
        \frac{1}{2\pi i}\oint_{C^{\mathfrak{p}}_{0,2}}\phi(p)\cdot\omega_{0,2}(p,p_1)=\Res_{p=\sigma(p_1)}\phi(p)\cdot\omega_{0,2}(p,p_1)=\omega_{0,1}(p_1).\label{dilaton0,2}
    \end{equation}
    Note that this holds exactly because our definition of $\omega_{0,2}$ is slightly different from the standard one. If we took $\omega_{0,2}:=B$, then the right-hand side of \eqref{dilaton0,2} would have the opposite sign.

Next, $C_{0,3}^{\mathfrak{p}}$ contains effective ramification points $\cR^*$. Then, it has already been shown in \cite{EO07} that 
    \begin{equation}
        \frac{1}{2\pi i}\oint_{C^{\mathfrak{p}}_{0,3}}\phi(p)\cdot\omega_{0,3}(p,p_1,p_2)=0.\label{dilaton0,3}
    \end{equation}
Note that $C_{0,3}^{\mathfrak{p}}$ does \emph{not} encircle ineffective ramification points which is important to derive \eqref{dilaton0,3}. It does not matter which primitive we choose in the above examples because $\omega_{0,2}$ and $\omega_{0,3}$ are residue free.

For $\omega_{\frac12,2}$, we also find the following:

\begin{prop}\label{prop:dilaton12,2}
$\omega_{\frac12,2}$ constructed from the refined topological recursion satisfies:
    \begin{equation}
        \frac{1}{2\pi i}\oint_{C^{\mathfrak{p}}_{\frac12,2}}\phi(p)\cdot\omega_{\frac12,2}(p,p_1)=0.
    \end{equation}
\end{prop}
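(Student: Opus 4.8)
The plan is to combine the explicit formula for $\omega_{\frac12,2}$ derived in the proof of Proposition \ref{prop:1/2,2} (equation \eqref{w1/2,2}) with the residue/contour manipulations already used for $\omega_{0,2}$ and $\omega_{0,3}$. First I would recall that $\omega_{\frac12,2}(p,p_1)$ has poles in $p$ only at $\cR^*$, at $\sigma(\widetilde\cP_+)$, and at $p=\sigma(p_1)$, and that it is residue-free; hence $C_{\frac12,2}^{\mathfrak p}$ encircles exactly these points. Splitting $\omega_{\frac12,2}$ into its two lines in \eqref{w1/2,2}, I would treat each contribution separately, integrating $\phi(p)\cdot\omega_{\frac12,2}(p,p_1)$ over $C_{\frac12,2}^{\mathfrak p}$.

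For the first line of \eqref{w1/2,2} — the double-contour integral of $\Delta\omega_{0,2}(p,p_0)\cdot\Delta\omega_{0,2}(p,p_1)/(4\omega_{0,1}(p))$ — I would exchange the order of integration, bringing the $C_{\frac12,2}^{\mathfrak p}$-contour in $p_0$ (now renamed) to the inside, and use that $\phi(p)\cdot\Delta\omega_{0,2}(p,q)$ integrated against the inner variable reproduces $\omega_{0,1}$ up to sign, exactly as in \eqref{dilaton0,2}; one then checks that the remaining integral of an anti-invariant integrand reduces to residues at $\cR$ and vanishes because the relevant quantity is holomorphic there. For the second line of \eqref{w1/2,2} — residues at $r\in\cR^*$ of $\eta_\sA^p(p_0)/(4\omega_{0,1}(p))$ times $\Delta\omega_{0,2}(p,p_1)$ times the $\mu_q\eta_\sA^q + \kappa_i du_i$ factor — I would again swap $p_0$-integration (against $\phi$) with the residue at $r$, noting $\phi(p)$ is a primitive of $\omega_{0,1}=\frac12\Delta y\,dx$ and that $\eta_\sA^p(p_0)$ integrated against $\phi(p_0)$ over the appropriate contour picks up $\Delta\phi(p)=2\phi(p)$ modulo holomorphic contributions; one is left with a sum of residues at $\cR^*$ of a differential that is holomorphic there (since $\Delta\omega_{0,2}$, $\eta_\sA^q$, $du_i$ are all holomorphic at $\cR^*$ when divided by $\omega_{0,1}$ only to first order, which is cancelled by the holomorphicity of the numerator vanishing appropriately), so it vanishes. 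Throughout I would invoke Lemma \ref{lem:reseta} to collapse contour integrals of $\sigma$-anti-invariant integrands to residues at $\cR$ only.

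The main obstacle I expect is the careful bookkeeping at the effective ramification points $\cR^*$: unlike in the unrefined case, $\omega_{\frac12,2}$ genuinely has poles there (the $\sigma$-invariant part is singular at $\cR^*$ by Lemma \ref{lem:LLE1/2,2}), so one cannot simply say ``holomorphic, hence no residue.'' Instead I would argue that after multiplying by $\phi(p)$ — which, being a primitive of $\omega_{0,1}=\frac12\Delta y\,dx$, vanishes to order $3/2$ in the local coordinate at a simple ramification point — the product $\phi(p)\cdot\omega_{\frac12,2}(p,p_1)$ becomes regular enough at each $r\in\cR^*$ that the residue vanishes; this is the same mechanism that makes \eqref{dilaton0,3} work, and I would model the estimate on \cite[Section 7]{EO07}. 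A secondary subtlety is that $\omega_{\frac12,2}$ depends on the choice of representatives $\sA_i$ only through an element of $H^1(\Sigma,\bZ)$; since such holomorphic differentials are residue-free and $\phi$ has no monodromy obstruction around $C_{\frac12,2}^{\mathfrak p}$ (which avoids the poles of $\omega_{0,1}$), the statement is independent of that choice, and I would note this to justify that the left-hand side is well-defined.
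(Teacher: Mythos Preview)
Your overall structure---splitting $\omega_{\frac12,2}$ into the two lines of \eqref{w1/2,2} and treating them separately---matches the paper, and your handling of the second line is essentially the paper's argument: that piece has poles only at $\cR^*$, of order at most two, so the same mechanism as \eqref{dilaton0,3} applies (the cubic vanishing of $\phi$ at effective ramification points kills the double pole). You do not need to swap the $p_0$-integral inside the residue or invoke $\Delta\phi$; it is enough to note the pole order directly.

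The genuine gap is in your treatment of the first line. You propose to ``exchange the order of integration'' and then use the $\omega_{0,2}$-dilaton identity \eqref{dilaton0,2}. But the inner contour $C_+$ (in the $q$-variable) encircles the outer variable $p$ itself, and $C_-$ encircles $\sigma(p)$; the two contours do \emph{not} commute cleanly with $C^{\mathfrak p}_{\frac12,2}$, and the analogue of \eqref{dilaton0,2} you want depends on whether $q$ or $\sigma(q)$ lies inside $C^{\mathfrak p}_{\frac12,2}$, which varies as $q$ moves along $C_\pm$. The paper resolves this by first extracting the residues at $q\in\{p,p_1,\sigma(p),\sigma(p_1)\}$ \emph{before} swapping, obtaining the decomposition \eqref{dilaton12,2}: the separated piece is a sum of two exact $d$-terms which vanish trivially against $\phi$ over a closed contour, while the remaining $\check C_\pm$-integrals now have no $p$-dependence in their contours and can be exchanged with $C^{\mathfrak p}_{\frac12,2}$ (directly for $\check C_+$; with an extra $\Res_{p=q}+\Res_{p=\sigma(q)}$ correction for $\check C_-$, which produces the residue-free differential $\Delta\omega_{0,2}(q,p_1)$ and hence vanishes). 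Your sketch jumps over exactly this separation step, so the swap you describe is not justified as written.
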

\begin{proof}
    It can be shown that the second line in \eqref{w1/2,2} vanishes after taking the contour integral along $C_{\frac12,2}^{\mathfrak{p}}$ exactly by the same argument as the proof of \eqref{dilaton0,3} in \cite{EO07}. This is because the second line in \eqref{w1/2,2} can have poles only at $\cR^*$ of at most order two, which is exactly the same behaviour as $\omega_{0,3}$.

    We now evaluate the the first line of \eqref{w1/2,2}. In order to have a consistent notation with Lemma \ref{prop:dilaton12,2}, let us choose $q$ to be the variable of integration in \eqref{w1/2,2} and set $p_0=p$. We separate the contributions of $p,p_1,\sigma(p),\sigma(p_1)$ in the first line of \eqref{w1/2,2} as follows:
    \begin{align}
       \text{first line of \eqref{w1/2,2}}=&-\frac{\sQ}{4}\left(d_p\frac{\Delta\omega_{0,2}(p,p_1)}{\omega_{0,1}(p)}+d_1\frac{\Delta\omega_{0,2}(p,p_1)}{\omega_{0,1}(p_1)}\right)\nonumber\\
       &-\frac{\sQ}{4\pi i}\left(\oint_{q\in\check C_+}-\oint_{q\in\check C_-}\right)\frac{\Delta\omega_{0,2}(q,p)\Delta\omega_{0,2}(q,p_1)}{4\omega_{0,1}(q)}.\label{dilaton12,2}
    \end{align}
    where $\check C_\pm$ denotes the resulting contours after removing contributions of the four points. It is easy to see that the first line in \eqref{dilaton12,2} vanishes after multiplying $\phi(p)$ and applying the contour integral along $C^{\mathfrak{p}}_{\frac12,2}$.

    We show that the second line in \eqref{dilaton12,2} vanishes. In principle, $\check C_\pm$ contains the pole of $\omega_{0,1}$, but the integrand obviously is regular there, hence we remove those points from $\check C_\pm$ as well. Then, the domain encircled by $\check C_+$ has no overlap with that by $C^{\mathfrak{p}}_{\frac12,2}$, thus we can exchange the order of the contour integrals as (c.f. Appendix \ref{sec:contours})
    \begin{equation}
        \oint_{p\in C^{\mathfrak{p}}_{\frac12,2}}\oint_{q\in\check C_+}=\oint_{q\in\check C_+}\oint_{p\in C^{\mathfrak{p}}_{\frac12,2}}.
    \end{equation}
    Applying the contour integral after multiplying $\phi(p)$, this contribution vanishes --- note that $C^{\mathfrak{p}}_{\frac12,2}$ contains $\sigma(p_1)$ but not $\sigma(q)$.
    
    On the other hand, both $\check C_-$ and $C^{\mathfrak{p}}_{\frac12,2}$ contains ramification points, and one has to take care of residues at $p=q$ and $p=\sigma(q)$ when one exchanges the order of integrals. However, since we have
    \begin{align}
        \Res_{p=q}\phi(p)\frac{\Delta\omega_{0,2}(q,p)\Delta\omega_{0,2}(q,p_1)}{\omega_{0,1}(q)}&=\Delta\omega_{0,2}(q,p_1),\nonumber\\
        \Res_{p=\sigma(q)}\phi(p)\frac{\Delta\omega_{0,2}(q,p)\Delta\omega_{0,2}(q,p_1)}{\omega_{0,1}(q)}&=\Delta\omega_{0,2}(q,p_1),
    \end{align}
and since $\Delta\omega_{0,2}$ has no residues, we conclude that there is no contribution from the $\check C_-$ integral either.
\end{proof}

It turns out that \eqref{dilaton0,2}, \eqref{dilaton0,3}, and \eqref{dilaton12,2} are instances of a more general set of relations between $\omega_{g,n+1}$ and $\omega_{g,n+2}$. We call them the \emph{dilaton equations}, mostly without specifying what $g,n$ are.

\begin{prop}\label{prop:dilaton}
    Given a hyperelliptic refined spectral curve $\cS_{\bm\mu}$ of genus-zero, let $\omega_{g,n+1}$ be multidifferentials constructed by the refined topological recursion on $\cS_{\bm\mu}$. Then, for all $2g,n\in\bZ_{\geq0}$, we have
    \begin{equation}
        (2-2g-n-1)\cdot\omega_{g,n+1}(p_0,J)=\frac{1}{2\pi i}\oint_{C^{\mathfrak{p}}_{g,n+2}}\phi(p)\cdot\omega_{g,n+2}(p,p_0,J).\label{dilaton}
    \end{equation}
\end{prop}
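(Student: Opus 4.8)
The plan is to prove the dilaton equation \eqref{dilaton} by induction on $2g-2+n$, using the refined topological recursion \eqref{RTR} to rewrite $\omega_{g,n+2}(p,p_0,J)$ and then exchanging the order of the contour integrals, so that the $\phi(p)$-integral acts first. The base cases $2g-2+n \in \{-1,0\}$ are exactly \eqref{dilaton0,2}, \eqref{dilaton0,3}, and Proposition \ref{prop:dilaton12,2} together with the analogue for $\omega_{1,1}$ (which one checks directly by the same residue computation, using that $\text{Rec}^{\sQ}_{1,1}$ has poles only at $\cR^*$ and at $\sigma(\widetilde{\cP}_+)$). For the inductive step, fix $2g,n$ with $2g-2+n \geq 1$ and write, using \eqref{RTR},
\begin{equation}
\omega_{g,n+2}(p,p_0,J)=\frac{1}{2\pi i}\left(\oint_{q\in C_+}-\oint_{q\in C_-}\right)\frac{\eta^q_\sA(p)}{4\omega_{0,1}(q)}\,\text{Rec}^{\sQ}_{g,n+2}(q,p_0,J),
\end{equation}
where now $p$ plays the role of the recursion variable. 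Multiply by $\phi(p)$, integrate over $C^{\mathfrak{p}}_{g,n+2}$, and swap the two integrals; since $\Sigma = \bP^1$, $\eta^q_\sA(p)$ is globally meromorphic and this swap is legitimate, picking up residues at $p=q$ and $p=\sigma(q)$ exactly as in the proof of Proposition \ref{prop:dilaton12,2} and as governed by \eqref{contour0}.

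The key computation is then the $p$-integral $\frac{1}{2\pi i}\oint_{C^{\mathfrak{p}}_{g,n+2}}\phi(p)\,\eta^q_\sA(p)/(4\omega_{0,1}(q))$ against the various pieces of $\text{Rec}^{\sQ}_{g,n+2}$. The term $\frac{dx(p)\cdot dx(t)}{(x(p)-x(t))^2}\cdot\omega_{g,n+1}(p,\dots)$ with $t \in \{p_0\}\cup J$ and, separately, the term $\Delta\omega_{0,2}(p,p_0)\cdot\omega_{g,n+1}(p,J)$ coming from $t = p_0$ should, after the $\phi(p)$-integral, reproduce $\omega_{g,n+1}$; the quadratic terms $\omega_{g_1,\bullet}(p,\dots)\omega_{g_2,\bullet}(p,\dots)$, the term $\omega_{g-1,n+3}(p,p,\dots)$, and the $\sQ$-term $\sQ\, dx\, d_p(\omega_{g-\frac12,n+2}(p,\dots)/dx(p))$ should, by the inductive hypothesis applied in the variable $p$, each contribute a copy of $(2-2g_i-n_i-1)\omega_{g_i,n_i+1}$ or the analogous lower term, and these assemble by the standard Euler-characteristic bookkeeping into $(2-2g-(n+1)-1)\omega_{g,n+1}$; the residues at $p=q$, $p=\sigma(q)$ from the swap, together with the unstable pieces, must then conspire to give precisely the extra $+1$ that upgrades $(2-2g-(n+1)-1)$ to $(2-2g-n-1)$, matching the left-hand side of \eqref{dilaton}. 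One must track the $C_+$ versus $C_-$ contours carefully: after the swap, the $q$-integral is back over $C_+-C_-$ with an integrand anti-invariant under $q\mapsto\sigma(q)$ on the pieces that should vanish (the holomorphic-at-$\cR$ part), so those drop out by Lemma \ref{lem:reseta}, exactly as in \eqref{sym0}.

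The main obstacle I expect is the bookkeeping of the residue contributions generated when exchanging the order of integration, i.e. the terms $\Res_{p=q}$ and $\Res_{p=\sigma(q)}$ of $\phi(p)\,\eta^q_\sA(p)\,\text{Rec}^{\sQ}_{g,n+2}(q,\dots)/\omega_{0,1}(q)$. Since $\phi$ is only a primitive of $\omega_{0,1}=\frac12\Delta y\cdot dx$, it is not single-valued in general (it has logarithmic branch points at $\widetilde{\cP}$), so one must check that the statement is independent of the choice of primitive; this is where the hypothesis $p_i\notin\widetilde{\cP}$ and the fact that $C^{\mathfrak{p}}_{g,n+2}$ avoids poles of $\omega_{0,1}$ are essential, and where the $\sQ$-dependent term in $\text{Rec}^{\sQ}$ needs care because $\omega_{\frac12,1}$ itself has simple poles at $\widetilde{\cP}_+$. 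A clean way to organize this is to first establish that all residues involving $\phi$ collapse using $\Res_{p=q}\phi(p)\eta^q_\sA(p) = \phi(q)-\phi(\sigma(q))$ up to exact terms, reducing everything to residue-free combinations where the choice of primitive drops out, and only then match the Euler characteristics. A secondary subtlety is the special treatment of the $n=0$ case (matching what Appendix \ref{sec:wg1} does for existence), since there $J$ is empty and the term $\sum_{t\sqcup I=J}$ is absent; this should be handled exactly as the $\omega_{1,1}$ and $\omega_{\frac12,1}$ base cases but propagated through the induction.
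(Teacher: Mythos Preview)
Your inductive plan has a genuine gap in the very first move. You apply the recursion \eqref{RTR} to $\omega_{g,n+2}(p,p_0,J)$ with $p$ in the \emph{first} slot, so the integration variable is $q$ and the right-hand side is
\[
\frac{1}{2\pi i}\Bigl(\oint_{q\in C_+}-\oint_{q\in C_-}\Bigr)\frac{\eta^q_\sA(p)}{4\omega_{0,1}(q)}\,\text{Rec}^{\sQ}_{g,n+2}(q,p_0,J).
\]
But then $\text{Rec}^{\sQ}_{g,n+2}(q,p_0,J)$ does not depend on $p$ at all: the only $p$-dependence sits in $\eta^q_\sA(p)$. So after the swap, the $\phi(p)$-integral can never touch the lower $\omega_{g',n'+1}$'s, and the inductive hypothesis is simply not available. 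Your subsequent discussion of ``the quadratic terms $\omega_{g_1,\bullet}(p,\dots)\omega_{g_2,\bullet}(p,\dots)$'' and ``$\omega_{g-1,n+3}(p,p,\dots)$'' shows the confusion: those terms live at $q$, not $p$. (There is also the side issue that $C_+$ must contain $p$, so it moves with $p$ and the swap is ill-posed as written.)

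The paper's fix is exactly the missing idea: use the symmetry of $\omega_{g,n+2}$ (Theorem \ref{thm:existence}) to place the dilaton variable $p$ in the \emph{last} slot and apply the recursion in $p_0$, i.e.\ work with $\omega_{g,n+2}(p_0,J,p)$ via the form \eqref{RTR2}. Then every building block in $\text{Rec}^{\sQ}_{g,n+2}(p_0,J,p)$ and in $\hat R^{\sQ}_{g,n+2}(p_0,J,p)$ carries $p$ as an ordinary argument of a lower $\omega$, and the $\phi(p)$-integral invokes the inductive hypothesis directly (equations \eqref{dilatong,n0}, \eqref{dilatong,n1}, \eqref{dilatong,n3}, \eqref{dilatong,n2}). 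The two extra $-\tfrac12\omega_{g,n+1}$ contributions---one from the unstable $\omega_{0,2}$ inside $\text{Rec}$, one from the separated $p$-pole of $\hat R$---are what shift $(2-2g-n)$ to $(2-2g-n-1)$. Your instinct about the bookkeeping of residues at $p=q,\sigma(q)$ is correct, but it only becomes the right computation once $p$ has been moved out of the recursion slot.
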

\begin{proof}
    We have already seen that it holds for $(g,n)=(0,0), (0,1),(\frac12,0)$. We proceed by induction and apply the contour integral along $C^{\mathfrak{p}}_{g,n+2}$ to the formula analogous to \eqref{RTR2} for $\omega_{g,n+2}(p_0,J,p)$, that is, we treat $p$ as the last variable since $\omega_{g,n+2}$ is symmetric differential due to Theorem \ref{thm:existence}. Thanks to the induction ansatz, the first term in \eqref{RTR2} simply gives
    \begin{align}
        &\frac{1}{2\pi i}\oint_{C^{\mathfrak{p}}_{g,n+2}}\phi(p)\cdot\left(\frac{{\rm Rec}_{g,n+2}^{\sQ}(p_0,J,p)}{2\omega_{0,1}(p_0)}\right)\nonumber\\
        &=(2-2g-n)\left(-\frac{{\rm Rec}_{g,n+1}^{\sQ}(p_0,J)}{2\omega_{0,1}(p_0)}\right)-\frac12\omega_{g,n+1}(p_0,J),\label{dilatong,n0}
    \end{align}
    where the second term is the contribution from $\omega_{0,2}$ and the rest gives the first term.

    Let us next evaluate the first term of $\hat R^{\sQ}_{g,n+2}$. Similar to how we proved Lemma \ref{prop:dilaton12,2}, we choose $q$ to be the variable of integration and we separate the contribution of $p$ from others which gives:
    \begin{align}
        \hat R^{\sQ}_{g,n+2}(p_0,J,p)=d_p\left(\frac{\eta^p(p_0)}{2\omega_{0,1}(p)}\omega_{g,n+1}(p,J)\right)+\frac{1}{2\pi i}\oint_{q\in\check C_+}\frac{\eta^q(p_0)}{2\omega_{0,1}(q)}{\rm Rec}_{g,n+2}^{\sQ}(q,J,p),\label{dilatong,n1}
    \end{align}
    where $\check C_+$ is the resulting contour after removing the point $p$. The first term in \eqref{dilatong,n1} gives
    \begin{equation}
       \frac{1}{2\pi i}\oint_{C^{\mathfrak{p}}_{g,n+2}}\phi(p)\cdot d_p\left(\frac{\eta^p(p_0)}{2\omega_{0,1}(p)}\omega_{g,n+1}(p,J)\right)=-\frac12\omega_{g,n+1}(p_0,J),\label{dilatong,n3}
    \end{equation}
    where we used the Riemann bilinear identity (Lemma \ref{lem:RBI}) after integrating by parts. 
    On the other hand, since $\check C_+$ contains $J\cup\widetilde\cP_+$ whereas $C^{\mathfrak{p}}_{g,n+2}$ encircles the $\sigma$-conjugate of those points, we have (c.f. Appendix \ref{sec:appendix})
    \begin{equation}
        \oint_{p\in C^{\mathfrak{p}}_{g,n+2}}\oint_{q\in\check C_+}=\oint_{q\in\check C_+}\left(\oint_{p\in C^{\mathfrak{p}}_{g,n+2}}+2\pi i\Res_{p=\sigma(q)}\right).\label{contour2}
    \end{equation}
    We can now apply a similar computation to \eqref{dilatong,n0}, and obtain:
    \begin{align}
         &\frac{1}{2\pi i}\oint_{p\in C^{\mathfrak{p}}_{g,n+2}}\frac{1}{2\pi i}\oint_{q\in\check C_+}\phi(p)\cdot\frac{\eta^q(p_0)}{2\omega_{0,1}(q)}{\rm Rec}_{g,n+2}^{\sQ}(q,J,p)\nonumber\\
         &=(2-2g-n)\frac{1}{2\pi i}\oint_{q\in\check C_+}\frac{\eta^q(p_0)}{2\omega_{0,1}(q)}{\rm Rec}_{g,n+1}^{\sQ}(q,J)\nonumber\\
         &=(2-2g-n) \hat R^{\sQ}_{g,n+1}(p_0,J)\label{dilatong,n2}
    \end{align}
    where an analogous term to the second term in \eqref{dilatong,n0} disappears because such a term has no poles in $\check C_+$. 

    Combining \eqref{dilatong,n0}, \eqref{dilatong,n3}, and \eqref{dilatong,n2}, we arrive at the proposition.
\end{proof}

Notice that computations shown so far in the proof of Proposition \ref{prop:dilaton} do not depend on properties of $\bP^1$. We simply have not evaluated the second term in $\hat R^{\sQ}_{g,n+2}$ coming from $H_1(\Sigma)$ which is absent when $\Sigma=\bP^1$. We expect that this term also becomes consistent with the induction, hence we make the following conjecture:

\begin{conj}
    {\rm Proposition \ref{prop:dilaton}} extends to any hyperelliptic refined spectral curve.
\end{conj}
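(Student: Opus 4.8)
The plan is to run the same induction on $2g-2+n$ used in the proof of Proposition~\ref{prop:dilaton}, retaining every step that does not use the genus of $\Sigma$, and to supply the one computation that was vacuous there: the effect of the $H_1(\Sigma)$-period term of $\hat R_{g,n+2}^{\sQ}$ on the dilaton operator. Since the statement presupposes existence and symmetry of $\omega_{g,n+2}$, the argument is conditional on Conjecture~\ref{conj:existence}; assuming it, $\omega_{g,n+2}$ is a symmetric stable multidifferential, the recursion formula \eqref{RTR2} holds globally on $\Sigma$, and the base cases $(g,n)=(0,0),(0,1),(\tfrac12,0)$ are \eqref{dilaton0,2}, \eqref{dilaton0,3}, \eqref{dilaton12,2}.

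First I would observe that the bulk of the proof of Proposition~\ref{prop:dilaton} carries over verbatim. Applying $\tfrac{1}{2\pi i}\oint_{C^{\mathfrak{p}}_{g,n+2}}\phi(p)\,\cdot$ to \eqref{RTR2} for $\omega_{g,n+2}(p_0,J,p)$, with $p$ treated as the last variable (legitimate by symmetry), the first term of \eqref{RTR2} contributes \eqref{dilatong,n0}, and the $\hat C_+$-part of $\hat R_{g,n+2}^{\sQ}$ contributes \eqref{dilatong,n3} together with \eqref{dilatong,n2} after the splitting \eqref{dilatong,n1}; these use only the Riemann bilinear identity (Lemma~\ref{lem:RBI}), integration by parts, the contour exchanges \eqref{contour0}--\eqref{contour2}, and the induction hypothesis, so nothing changes. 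Thus the conjecture reduces to the identity
\begin{align*}
&\frac{1}{2\pi i}\oint_{p\in C^{\mathfrak{p}}_{g,n+2}}\phi(p)\sum_{i=1}^{\tilde g}du_i(p_0)\oint_{p'\in\sA_i}\frac{{\rm Rec}_{g,n+2}^{\sQ}(p',J,p)}{2\omega_{0,1}(p')}\\
&\qquad=(2-2g-n)\sum_{i=1}^{\tilde g}du_i(p_0)\oint_{p'\in\sA_i}\frac{{\rm Rec}_{g,n+1}^{\sQ}(p',J)}{2\omega_{0,1}(p')},
\end{align*}
i.e.\ that the period term of $\hat R_{g,n+2}^{\sQ}$ reduces, under the dilaton operator, to $(2-2g-n)$ times the period term of $\hat R_{g,n+1}^{\sQ}$.

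To establish this I would exchange the order of the $\oint_{p\in C^{\mathfrak{p}}_{g,n+2}}$ and $\oint_{p'\in\sA_i}$ integrations and then reduce the inner $p$-integral $\tfrac{1}{2\pi i}\oint_{C^{\mathfrak{p}}_{g,n+2}}\phi(p)\,{\rm Rec}_{g,n+2}^{\sQ}(p',J,p)$ term by term. The cycle $\sA_i\subset\partial\mathfrak{F}$ can be homotoped away from the $p$-poles of $\omega_{g,n+2}(p',J,p)$ at $\cR^*$, $\sigma(\widetilde{\cP}_+)$ and $\sigma(\{p'\}\cup J)$; but the kernels $\omega_{0,2}(p',p)$ and $\frac{dx(p')\,dx(p)}{(x(p')-x(p))^2}$ occurring in ${\rm Rec}_{g,n+2}^{\sQ}(p',J,p)$ (see \eqref{Rec}) carry $p$-poles at $p=\sigma(p')$ and $p\in\{p',\sigma(p')\}$, and the contour $C^{\mathfrak{p}}_{g,n+2}$ is not the optimal dilaton contour for each lower-order sub-differential $\omega_{g'',n''+2}(p',\dots,p)$ appearing in ${\rm Rec}$, so the reduction produces $(2-2g-n)\,{\rm Rec}_{g,n+1}^{\sQ}(p',J)$ up to correction residues at $p=p'$ and $p=\sigma(p')$ of exactly the kind seen in \eqref{contour0}--\eqref{contour2}. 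Invoking the induction hypothesis on each stable factor and \eqref{dilaton0,2}--\eqref{dilaton12,2} on the unstable and low pieces, one should then check that these corrections cancel precisely as the $\omega_{0,2}$-type corrections do in \eqref{dilatong,n0}; dividing by $2\omega_{0,1}(p')$, integrating over $\sA_i$ and summing over $i$ yields the displayed identity and closes the induction.

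The hard part will be the contour-exchange bookkeeping in this last step. Because $\eta^p_\sA$, and hence the building blocks of the recursion, is globally single-valued only on the fundamental domain $\mathfrak{F}$, one must pin down the homotopy class of $\sA_i$ relative to all $p$-singularities of the integrand, in particular relative to the points of $\sigma(\widetilde{\cP}_+)$, which are genuine poles of $\omega_{g,n+2}$ present only in the refined theory and which may be forced close to $\partial\mathfrak{F}$; and one must verify, term by term in ${\rm Rec}_{g,n+2}^{\sQ}$, that the residues at $p=p'$ and $p=\sigma(p')$ match the $\omega_{0,2}$-corrections with the correct signs. This is the same fundamental-domain analysis that makes the genus-zero case of Theorem~\ref{thm:existence} tractable yet obstructs a clean general argument, which is presumably why the statement is left as a conjecture; I would expect a rigorous proof to require exactly that careful treatment of $\partial\mathfrak{F}$.
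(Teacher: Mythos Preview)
The statement is a \emph{conjecture} in the paper; the paper does not prove it. What the paper supplies is a Remark immediately after the conjecture that isolates the single remaining obstacle: one must show
\[
\frac{1}{2\pi i}\oint_{p\in C^{\mathfrak{p}}_{g,n+2}}\phi(p)\sum_{i=1}^{\tilde g}du_i(p_0)\oint_{q\in\sA_i}\frac{{\rm Rec}_{g,n+2}^{\sQ}(q,J,p)}{2\omega_{0,1}(q)}
=(2-2g-n)\sum_{i=1}^{\tilde g}du_i(p_0)\oint_{q\in\sA_i}\frac{{\rm Rec}_{g,n+1}^{\sQ}(q,J)}{2\omega_{0,1}(q)},
\]
and then notes that since $C^{\mathfrak{p}}_{g,n+2}$ can be chosen disjoint from $\sA_i$, after exchanging the two integrals the inner dilaton contour does \emph{not} encircle $\sigma(q)$ (in contrast with \eqref{contour2}), so the reduction to $(2-2g-n)\,{\rm Rec}_{g,n+1}^{\sQ}$ leaves precisely
\[
\oint_{q\in\sA_i}\Res_{p=\sigma(q)}\phi(p)\,\frac{{\rm Rec}_{g,n+2}^{\sQ}(q,J,p)}{2\omega_{0,1}(q)}\stackrel{?}{=}0
\]
to be established. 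The paper says the $\omega_{0,2}(p,q)$ contribution to this residue is easily seen to vanish, while the remaining terms ``seem nontrivial to prove''.

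Your outline is therefore aligned with the paper's: your reduction to the period-term identity is exactly the paper's Remark, and your recognition that everything else in the proof of Proposition~\ref{prop:dilaton} is genus-independent is correct. The gap is that your proposed resolution of the residual identity --- that the correction residues at $p=p'$ and $p=\sigma(p')$ ``cancel precisely as the $\omega_{0,2}$-type corrections do in \eqref{dilatong,n0}'' --- is precisely the point the paper flags as open. Only the $\omega_{0,2}$ piece is easy; the other constituents of ${\rm Rec}_{g,n+2}^{\sQ}$ (the products $\omega_{g_1}\cdot\omega_{g_2}$, the $\omega_{g-1,n+3}$ term, and the $\sQ$-derivative term) produce residues at $p=\sigma(q)$ whose vanishing after the $\sA_i$-integration is not established by any argument you give. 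You yourself acknowledge this in your final paragraph, so your write-up is an honest sketch of where the difficulty lies rather than a proof --- which is consistent with the paper's position that the statement remains conjectural.
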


\begin{rem}
    The remaining task is to show the following equality:
    \begin{align}
        &\frac{1}{2\pi i}\oint_{p\in C^{\mathfrak{p}}_{g,n+2}}\phi(p)\cdot\sum_{i=1}^{\tilde g}u_i(p_0)\cdot\oint_{q\in\sA_i}\frac{{\rm Rec}_{g,n+2}^{\sQ}(q,J,p)}{2\omega_{0,1}(q)}\nonumber\\
        &=(2-2g-n)\sum_{i=1}^{\tilde g}u_i(p_0)\cdot\oint_{q\in\sA_i}\frac{{\rm Rec}_{g,n+1}^{\sQ}(q,J)}{2\omega_{0,1}(q)}.
    \end{align}
    Since the contour $C^{\mathfrak{p}}_{g,n+2}$ can be taken without crossing $\sA_i$-contour (decompose it into smaller contours if necessary), $C^{\mathfrak{p}}_{g,n+2}$ would not contain $\sigma(q)$ after exchanging the contours, unlike \eqref{contour2}. Therefore, what remains to be shown is:
    \begin{equation}
        \oint_{q\in\sA_i}\Res_{p=\sigma(q)}\phi(p)\frac{{\rm Rec}_{g,n+2}^{\sQ}(q,J,p)}{2\omega_{0,1}(q)}=0.
    \end{equation}
    It is easy to see that the contribution of $\omega_{0,2}(p,q)$ from ${\rm Rec}_{g,n+1}^{\sQ}$ vanishes, but the other terms seem nontrivial to prove.  
\end{rem}

\subsubsection{Free energy}

Recall Proposition \ref{prop:dilaton}. Notice that the right-hand side of \eqref{dilaton} makes sense even when $n=-1$ as long as $g>1$, and the left-hand side gives ``$\omega_{g,0}$''. In fact, we rather take it as the defining equation of what we call the free energy:
\begin{defin}\label{def:Fg}
    Given a hyperelliptic refined spectral curve $\cS_{\bm\mu,\bm\kappa}$, let $\omega_{g,n+1}$ be multidifferentials constructed by the refined topological recursion (Definition \ref{def:RTR}). Then, the \emph{free energy of genus} $g$ for $g\in\bZ_{>1}$ is denoted by $F_g$ and defined by
\begin{equation}
    F_g:=\omega_{g,0}:=\frac{1}{2-2g}\frac{1}{2\pi i}\oint_{C_{g,1}^{\mathfrak{p}}}\phi(p)\cdot\omega_{g,1}\label{Fg}
\end{equation}
\end{defin}

Note that as a consequence of Lemma \ref{lem:Qpoly}, the free energy $F_g$ also admits the following expansion:
    \begin{equation}
        F_g=\sum_{k=0}^{2g}\sQ^k F_g^{(k)},
    \end{equation}
    where $F_g^{(k)}$ are independent of $\sQ$, and for all $\ell\in\bZ_{\geq0}$,
    \begin{align}
        \forall g\in\bZ_{\geq0},\quad&\quad F_g^{(2\ell+1)}=0\\
        \forall g\in\bZ_{\geq0}+\frac12,\quad&\quad F_g^{(2\ell)}=0
    \end{align}

One may wonder how to define $F_0$, $F_\frac12$, and $F_1$. Since there is no effect of the refinement on $F_0$, we can take the definition given in \cite[Eq. (4.13)]{EO07}. On the other hand, the definitions of $F_\frac12$ and $F_1$ are in fact unclear, and we leave them to the future work. One possible way to define them is via the so-called variational formula which the author will report in a sequel paper \cite{O23}.

\newpage
\section{$\mathscr{Q}$-top recursion and the Nekrasov-Shatashivili limit}\label{sec:NS}

In this section, we establish a new recursion which we call the $\sQ$-top recursion. It is relevant to the so-called Nekrasov-Shatashivili limit (e.g. \cite{NS09}).

\subsection{Construction}
Since $\omega_{g,n+1}$ constructed by the refined topological recursion only polynomially depends on the refinement parameter $\sQ$ (Lemma \ref{lem:Qpoly}), there is no issue with expanding both sides of the recursion formula \eqref{RTR} in $\sQ$. For each $2g,n\in\bZ_{\geq0}$, we will focus on the $\sQ$-top term $\omega_{g,n+1}^{(2g)}$, which we denote by $\varpi_{g,n+1}:=\omega_{g,n+1}^{(2g)}$ to emphasise. It turns out that $\varpi_{g,n+1}$ can be recursively determined among themselves, without information about $\omega_{g,n+1}^{(k)}$ for $k<2g$. Note that, similar to the previous sections, we will use the letter $\check \varpi_{g,n+1}$ for the $\sQ$-top part of loop equations and $\varpi_{g,n+1}$ for the $\sQ$-top recursion in order to notationally distinguish them. Then, we have:

\begin{prop}
     Given a hyperelliptic refined spectral curve $\cS_{\bm\mu,\bm\kappa}$, assume existence of a solution of the $\sQ$-top part of hyperelliptic refined loop equations. Then, for $2g,n\in\bZ_{\geq0}$ with $2g-2+n$, $\check \varpi_{g,n+1}$ is uniquely constructed by one of the following formulae
     \begin{description}
        \item[recursion 1]
        \begin{equation}
           \check \varpi_{g,n+1}(J_0)=\frac{1}{2 \pi i}\left(\oint_{p\in C_+}-\oint_{p\in C_-}\right)\frac{\eta_\sA^p(p_0)}{4\omega_{0,1}(p)}\check {\rm Rec}_{g,n+1}^{\sQ\text{-{\rm top}}}(p,J)\label{NSR1}
        \end{equation}
        \item[recursion 2]
        \begin{align}
          \check   \varpi_{g,n+1}(J_0)=&-\frac{\check {\rm Rec}_{g,n+1}^{\sQ\text{-{\rm top}}}(p_0,J)}{2\omega_{0,1}(p_0)}+\hat R_{g,n+1}^{\sQ\text{-{\rm top}}}(p_0,J)\label{NSR2}
        \end{align}
    \end{description}
    where $C_\pm$ are the same as those in Theorem \ref{thm:RTR} and
    \begin{align}
    \check  {\rm Rec}_{g,n+1}^{\sQ\text{-{\rm top}}}(p,J):=&\sum^*_{\substack{g_1+g_2=g\\J_1\sqcup J_2=J}}\check \varpi_{g_1,n_1+1}(p,J_1)\cdot\check \varpi_{g_2,n_2+1}(p,J_2)+\sum_{t\sqcup I=J}\frac{dx(p_0)\cdot dx(t)}{(x(p_0)-x(t))^2}\cdot\check \varpi_{g,n}(p,I)\nonumber\\
     &+ dx \cdot d_p\frac{\check \varpi_{g-\frac12,n+1}(p,J)}{dx(p)}\label{RecNS}
\end{align}
and
        \begin{equation}
            \hat R_{g,n+1}^{\sQ\text{-{\rm top}}}(p_0,J)=\frac{1}{2 \pi i}\oint_{p\in \hat{C}_+}\frac{\eta_\sA^p(p_0)}{2\omega_{0,1}(p)}\cdot\check {\rm Rec}_{g,n+1}^{\sQ\text{-{\rm top}}}(p,J)+\sum_{i=1}^{\tilde g}u_i(p_0)\cdot\oint_{p\in\sA_i}\frac{\check {\rm Rec}_{g,n+1}^{\sQ\text{-{\rm top}}}(p,J)}{2\omega_{0,1}(p)}.\label{RNS{g,n+1}}
        \end{equation}
\end{prop}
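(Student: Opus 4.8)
The statement is the exact $\sQ$-top analogue of Theorem~\ref{thm:RTR}, and the plan is to extract it from that theorem by expanding in $\sQ$. First I would observe that, by Lemma~\ref{lem:Qpoly}, every $\check\omega_{g,n+1}$ appearing in a solution of the refined loop equations is a polynomial in $\sQ$ of degree at most $2g$; write $\check\varpi_{g,n+1}$ for the coefficient of $\sQ^{2g}$. One then substitutes these expansions into the definition of $\check R_{g,n+1}^{\sQ}(p_0;J)$ and $\check Q_{g,n+1}^{\sQ}(p_0;J)$ in \eqref{GLE}--\eqref{Rec} and collects the coefficient of $\sQ^{2g}$. The key bookkeeping point is that in the quadratic term $\check\omega_{g_1,n_1+1}\cdot\check\omega_{g_2,n_2+1}$ with $g_1+g_2=g$, the top $\sQ$-degree is $2g_1+2g_2=2g$, attained \emph{only} by the product $\check\varpi_{g_1,n_1+1}\cdot\check\varpi_{g_2,n_2+1}$; this is why the sum in \eqref{RecNS} is starred (it excludes the unstable pieces $\omega_{0,1},\omega_{\frac12,1}$ whose $\sQ$-degree is not twice their genus, or rather it selects precisely the stable-weight contributions — I would spell this convention out). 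Likewise the $\frac{dx\,dx}{(x-x)^2}$-term contributes $\check\varpi_{g,n}$ at top degree, the term $\check\omega_{g-1,n+2}(p_0,p_0,J)$ contributes at $\sQ^{2g-2}$ and so drops out, and the explicit factor of $\sQ$ in front of $dx\cdot d_0(\check\omega_{g-\frac12,n+1}/dx)$ combines with the top degree $2(g-\tfrac12)=2g-1$ of $\check\varpi_{g-\frac12,n+1}$ to land exactly at $\sQ^{2g}$. This is what produces \eqref{RecNS}, and in particular shows the $\sQ$-top recursion is self-closed: only $\check\varpi_{g',n'+1}$ with $g'\le g$, $n'\le n$ enter.

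\textbf{Reducing to Theorem~\ref{thm:RTR}.} Having identified $\check{\rm Rec}_{g,n+1}^{\sQ\text{-top}}(p,J)=\check Q_{g,n+1}^{\sQ\text{-top}}(p,J)-2\omega_{0,1}(p)\check\varpi_{g,n+1}(p,J)$ as the coefficient of $\sQ^{2g}$ in $\check{\rm Rec}_{g,n+1}^{\sQ}(p,J)$, the argument of Theorem~\ref{thm:RTR} applies verbatim at the level of $\sQ$-coefficients. Explicitly: the hypothesis that a solution of the $\sQ$-top part of the refined loop equations exists means there are residue-free, $\cA_i$-normalised multidifferentials $\check\varpi_{g,n+1}\in\Omega_{\rm stab}^{n+1}(\Sigma)^{\mathfrak S_{n+1}}$ such that $\check R_{g,n+1}^{\sQ\text{-top}}(p_0;J):=\check Q_{g,n+1}^{\sQ\text{-top}}(p_0;J)/(2\omega_{0,1}(p_0))$ is holomorphic at $\cR$ and anti-invariant under $\sigma$. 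Then the Riemann bilinear identity (Lemma~\ref{lem:RBI}) gives $\frac{1}{2\pi i}(\oint_{C_+}-\oint_{C_-})\eta^p_\sA(p_0)\cdot\check\varpi_{g,n+1}(p,J)=-2\check\varpi_{g,n+1}(p_0,J)$ using the pole structure and $\cA_i$-normalisation, while Lemma~\ref{lem:reseta} gives $\frac{1}{2\pi i}(\oint_{C_+}-\oint_{C_-})\eta^p_\sA(p_0)\cdot\check R_{g,n+1}^{\sQ\text{-top}}(p,J)=0$ from holomorphicity at $\cR$ and $\sigma$-anti-invariance. Adding these two identities with the appropriate normalisation yields \eqref{NSR1}, and uniqueness follows because the right-hand side of \eqref{NSR1} involves only $\check\varpi_{g',n'+1}$ of lower complexity (here "complexity" is the pair $(g,n)$ with the product/genus-additive structure above). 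Applying the Riemann bilinear identity a second time to \eqref{NSR1}, exactly as in the second half of the proof of Theorem~\ref{thm:RTR}, and evaluating the residue at $p_0$ inside $C_+$, produces \eqref{NSR2} with $\hat R_{g,n+1}^{\sQ\text{-top}}$ as in \eqref{RNS{g,n+1}}; the period-cancellation step uses $\oint_{\cA_i}\Delta\check{\rm Rec}_{g,n+1}^{\sQ\text{-top}}(p,J)/(2\omega_{0,1}(p))=0$, which is again just the $\sQ^{2g}$-coefficient of the corresponding identity in Theorem~\ref{thm:RTR}.

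\textbf{Main obstacle.} There is no deep analytic obstacle — the proof is formally a shadow of Theorem~\ref{thm:RTR}. The one genuinely delicate point, and the step I would be most careful about, is the combinatorial degree count: verifying that the coefficient of $\sQ^{2g}$ in $\check Q_{g,n+1}^{\sQ}$ is \emph{exactly} $\check Q_{g,n+1}^{\sQ\text{-top}}$ as written in \eqref{RecNS}. This requires checking (i) that the starred sum $\sum^*$ correctly records which splittings $(g_1,g_2)$ with $g_1+g_2=g$ actually reach top $\sQ$-degree — in particular that splittings involving $\omega_{0,1}$ (which carries $\sQ^0$ but "genus $0$", hence still top-weight) and $\omega_{\frac12,1}$ (which carries $\sQ^1=\sQ^{2\cdot\frac12}$, again top-weight) are included, so that the recursion genuinely closes among the $\varpi$'s; (ii) that the term $\check\omega_{g-1,n+2}(p_0,p_0,J)$, whose top degree is $\sQ^{2g-2}$, contributes nothing at $\sQ^{2g}$ — this is the structural reason the $\sQ$-top recursion drops the "$\omega_{g-1,n+2}$" term and hence fails to be a recursion in $2g-2+n$, becoming instead recursive in $g$ and $n$ separately, as advertised in the introduction; and (iii) that the explicit $\sQ$ in the last term of \eqref{Rec} shifts $\check\omega_{g-\frac12,n+1}$ (top degree $\sQ^{2g-1}$) up to $\sQ^{2g}$, giving the $dx\cdot d_p(\check\varpi_{g-\frac12,n+1}/dx)$ term without a prefactor of $\sQ$. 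Once this degree bookkeeping is pinned down, the holomorphicity-at-$\cR$ and $\sigma$-anti-invariance constraints on $\check R_{g,n+1}^{\sQ\text{-top}}$ are inherited termwise from those on $\check R_{g,n+1}^{\sQ}$, and the contour-integral manipulations are identical to the proof already given.
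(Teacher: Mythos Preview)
Your proposal is correct and follows essentially the same approach as the paper: both reduce the proposition to the degree-counting claim that the coefficient of $\sQ^{2g}$ in $\check{\rm Rec}_{g,n+1}^{\sQ}$ is exactly $\check{\rm Rec}_{g,n+1}^{\sQ\text{-{\rm top}}}$, after which the contour-integral derivation of Theorem~\ref{thm:RTR} carries over verbatim. Your breakdown of the degree count (points (i)--(iii) in your ``Main obstacle'') is in fact more detailed than what the paper writes out, and your explicit re-invocation of Lemma~\ref{lem:RBI} and Lemma~\ref{lem:reseta} for the $\sQ$-top quantities is the right way to handle the hypothesis that only a $\sQ$-top solution is assumed.

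One small wobble: your first remark that the starred sum ``excludes the unstable pieces $\omega_{0,1},\omega_{\frac12,1}$'' is not quite right --- the star only removes the two terms $(g_1,J_1)=(0,\varnothing)$ and $(g_2,J_2)=(0,\varnothing)$, i.e.\ the $\omega_{0,1}\cdot\check\omega_{g,n+1}$ pieces subtracted when passing from $\check Q$ to $\check{\rm Rec}$. The $\omega_{\frac12,1}$ terms stay in the sum, and indeed $\varpi_{\frac12,1}=\omega_{\frac12,1}$ carries exactly $\sQ^{2\cdot\frac12}$, as you correctly note later. This is purely a matter of reading the $\ast$-convention, not a flaw in the argument.
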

\begin{proof}
    It is sufficient to show that the $\sQ$-top part of the right-hand side of the full recursion formula \eqref{RTR1} and \eqref{RTR2} respectively reduces to the above formulae \eqref{NSR1} and \eqref{NSR2}. We also note that a major difference between $\check {\rm Rec}_{g,n+1}^{\sQ}$ \eqref{Rec} and $\check {\rm Rec}_{g,n+1}^{\sQ\text{-{\rm top}}}$ \eqref{RecNS} is the $\check \omega_{g-1,n+2}$ term.

    For $g=0$ and $g=\frac12$, it is obvious by definition that
    \begin{equation}
        \check {\rm Rec}_{0,n+1}^{\sQ}= \check {\rm Rec}_{0,n+1}^{(0)}=\check {\rm Rec}_{0,n+1}^{\sQ\text{-{\rm top}}},\quad\check {\rm Rec}_{\frac12,n+1}^{\sQ}= \check {\rm Rec}_{0,n+1}^{(1)}=\check {\rm Rec}_{\frac12,n+1}^{\sQ\text{-{\rm top}}},
    \end{equation}
    where $\check {\rm Rec}_{g,n+1}^{(k)}$ denotes the $k$-th order term of the $\sQ$-expansion of $\check {\rm Rec}_{g,n+1}^{\sQ}$. Thus, as a consequence, the recursion formulae \eqref{RTR1} and \eqref{RTR2} respectively coincide with \eqref{NSR1} and \eqref{NSR2}. In particular, we find:
    \begin{equation}
      \check   \omega_{0,n+1}=\check \omega_{0,n+1}^{(0)}= \check \varpi_{0,n+1},\quad \check \omega_{\frac12,n+1}=\check \omega_{\frac12,n+1}^{(1)}= \check \varpi_{\frac12,n+1}.
    \end{equation}
    
    For general $g\in\bZ_{\geq0}$, the $\check \omega_{g-1,n+2}$ term in $ \check {\rm Rec}_{g,n+1}^{\sQ}$ depends on $\sQ$ only up to $\sQ^{2g-2}$. On the other hand, the $\check \omega_{g-\frac12,n+1}$ term contributes to $\check  {\rm Rec}_{g,n+1}^{(2g)}$ due to the explicit $\sQ$-factor in front. Thus, we have:
    \begin{align}
       \check  {\rm Rec}_{g,n+1}^{(2g)}(p,J)=& \sum^*_{\substack{g_1+g_2=g\\J_1\sqcup J_2=J}}\check \omega^{(2g_1)}_{g_1,n_1+1}(p,J_1)\cdot\check \omega^{(2g_2)}_{g_2,n_2+1}(p,J_2)+\sum_{t\sqcup I=J}\frac{dx(p_0)\cdot dx(t)}{(x(p_0)-x(t))^2}\cdot\check \omega^{(2g)}_{g,n}(p,I)\nonumber\\
        &+ dx \cdot d_p\frac{\check \omega_{g-\frac12,n+1}^{(2g-1)}(p,J)}{dx(p)}.\label{NS1}
    \end{align}
    Notice that $\check  {\rm Rec}_{g,n+1}^{(2g)}$ only involves $\check \omega_{h,m+1}^{(2h)}$, the $\sQ$-top part of each differential $\check \omega_{h,m+1}$ for $2h+m<2g+n$. Thus, one can write $ \check {\rm Rec}_{g,n+1}^{(2g)}$ only in terms of $\check \varpi_{h,m+1}$. Denoting by $ \check {\rm Rec}_{g,n+1}^{\sQ\text{-{\rm top}}}:= {\rm Rec}_{g,n+1}^{(2g)}$, we arrive at the formulae.
\end{proof}

We now define the $\sQ$-top recursion.

\begin{defin}\label{def:NS}
    Given a hyperelliptic refined spectral curve $\cS_{\bm\mu,\bm\kappa}$ together with a refinement parameter $\sQ$, consider a fundamental domain $\mathfrak{F}\subset\Sigma$ with $\partial\mathfrak{F}:=\bigcup_{i=1}^{\tilde g}(\sA_i\cup\mathscr{B}_i)$. Let $J_0:=(p_0,..,p_n)\in(\mathfrak{F})^{n+1}$ and assume that $(J_0\cup\widetilde{\cP}_+)\cap(\cR\cup\sigma(J_0\cup\widetilde{\cP}_+))=\varnothing$. Then, the \emph{$\sQ$-top recursion} is a recursive definition of multidifferentials $\varpi_{g,n+1}$ on $(\mathfrak{F})^{n+1}$ for $2g,n\in\bZ_{\geq0}$ such that $\varpi_{0,1}:=\omega_{0,1}$, $\varpi_{0,2}:=\omega_{0,2}$, $\varpi_{\frac12,1}:=\omega_{\frac12,1}$, and for $2g-2+n\geq0$:
     \begin{equation}
            \varpi_{g,n+1}(J_0):=\frac{1}{2 \pi i}\left(\oint_{p\in C_+}-\oint_{p\in C_-}\right)\frac{\eta_\sA^p(p_0)}{4\varpi_{0,1}(p)}{\rm Rec}_{g,n+1}^{\sQ\text{-{\rm top}}}(p,J),\label{NS}
        \end{equation}
        where $C_\pm$ are the same as those in Theorem \ref{thm:RTR}, ${\rm Rec}_{g,n+1}^{\sQ\text{-{\rm top}}}$ is analogously defined as \eqref{RecNS}, and we analytically continue $\varpi_{g,n+1}(J_0)$ to the region $\cR\cup\sigma(J_0\cup\widetilde{\cP}_+)$ whenever it exists. Furthermore, for $2g\in\bZ_{>2}$, the \emph{$\sQ$-top genus-$g$ free energy} is defined as
        \begin{equation}
             F_g^{\sQ\text{-{\rm top}}}:=\varpi_{g,0}:=\frac{1}{2-2g}\frac{1}{2\pi i}\oint_{C_{g,1}^{\mathfrak{p}}}\phi(p)\cdot\varpi_{g,1}
        \end{equation}
        
\end{defin}

\begin{rem}\label{rem:NS1}
    The absence of $\omega_{g-1,n+2}$ in ${\rm Rec}_{g,n+1}^{\sQ\text{-{\rm top}}}$ leads us to a stronger result than one may naively think. Namely, the $\sQ$-top recursion is recursive separately in $g$ and $n$. In particular, for fixed $n\geq0$, one can recursively define $\varpi_{g,n+1}$ for all $g$. Therefore, it is not quite `topological recursion' and we in fact removed `topological' from the name.
\end{rem}

\subsubsection{Nekrasov-Shatashivili Limit}\label{sec:justification}

We justify why the $\sQ$-top recursion (Definition \ref{def:NS}) is related to the so-called Nekrasov-Shatashivili limit in physics literature \cite{NS09}.

Let $\hbar$ be a formal parameter which typically indicates the order of quantum corrections. In physics, gauge theories in the so-called \emph{general $\Omega$-background} come with two quantum parameters $\epsilon_1,\epsilon_2$ which are related to $\hbar$ and our refinement parameter $\sQ$ by:
\begin{equation}
    \epsilon_1\cdot\epsilon_2=-\hbar^2,\quad\epsilon_1+\epsilon_2=\hbar \cdot \sQ,
\end{equation}
or equivalently, if one writes $\sQ=\beta^{\frac12}-\beta^{-\frac12}$,
\begin{equation}
    \epsilon_1=\hbar \cdot\beta^{\frac12},\quad\epsilon_2=-\hbar\cdot\beta^{-\frac12}.
\end{equation}
In this language, the \emph{self-dual limit} and the \emph{Nekrasov-Shatashivili limit} respectively denotes the following double-scaling limit:
\begin{description}
    \item[self-dual limit] $\epsilon_1+\epsilon_2=0$ while keeping $\epsilon_1\cdot\epsilon_2$ fixed.
    \item[Nekrasov-Shatashivili limit] $\epsilon_2=0$ while keeping $\epsilon_1$ fixed (or vice versa).
\end{description}

Intuitively, the self-dual limit corresponds to $\sQ=0$ while $\sQ\to\infty$ in the Nekrasov-Shatashivili limit. Then, since $\omega_{g,n}$ in the refined topological recursion framework polynomially depends on $\sQ$, the Nekrasov-Shatashivili limit does not make sense, at least naively. However, motivated from the matrix model expectation \cite{CE06}, let us considers a formal series $\omega_n$ in $\hbar$ as
\begin{equation}
    \omega_n:=\sum_{g\in\frac12\bZ_{\geq0}}\hbar^{2g}\cdot\omega_{g,n}.
\end{equation}
Then, order by order in $\hbar$, Lemma \ref{lem:Qpoly} implies that the Nekrasov-Shatashivili limit does not blow up, and furthermore, only the $\sQ$-top term of $\omega_{g,n}$ survives. In summary, we have:
\begin{equation}
    \lim_{\epsilon_2=0}^{\epsilon_1={\rm fixed}}\omega_n=\sum_{g\geq0}\epsilon_1^{2g}\cdot\omega_{g,n}^{(2g)}=\sum_{g\geq0}\epsilon_1^{2g}\cdot\varpi_{g,n}
\end{equation}
Therefore, $\varpi_{g,n}$ appears in the asymptotic expansion of $\omega_n$ in the Nekrasov-Shatashivili limit.

\subsection{Existence and Properties}

Since $\varpi_{g,n}$ is a part of the full $\omega_{g,n}$, anything proven for $\omega_{g,n}$ directly applies to $\varpi_{g,n}$. In particular, when $\Sigma=\bP^1$, Proposition \ref{thm:existence} implies that there exists a solution of the $\sQ$-top part of refined loop equations, and $\varpi_{g,n}$ constructed by the $\sQ$-top recursion (Definition \ref{def:NS}) is a symmetric residue-free multidifferential with the desired pole structure. A special property of the $\sQ$-top recursion is that we can prove existence of $\varpi_{g,1}$ for any hyperelliptic curve, not only when $\Sigma=\bP^1$. This is mainly because the recursion for $\varpi_{g,1}$ does not involve $\varpi_{g,n+2}$ as mentioned in Remark \ref{rem:NS1}.

\subsubsection{Linear loop equation}

Lemma \ref{lem:LLE1,1} implies
\begin{equation}
    \varpi_{1,1}(p_0)+\varpi_{1,1}(\sigma(p_0))=-d\frac{\Delta\varpi_{\frac12,1}(p_0)}{2\varpi_{0,1}(p_0)}.\label{NS1,1}
\end{equation}
As a generalisation of \eqref{NS1,1}, we have the following:
\begin{lem}\label{lem:LLENS1}
For $2g\in\bZ_{\geq2}$, we have
\begin{equation}
        \varpi_{g,1}(p_0)+\varpi_{g,1}(\sigma(p_0))=d\left(\sum_{k=1}^{2g-1}\sum_{g_1,...,g_k\in\frac12\bZ_{>0}}^{g_1+\cdots+g_k=g-\frac12}\frac{1}{k}\prod_{i=1}^k\left(-\frac{\Delta\varpi_{g_k,1}(p_0)}{2\varpi_{0,1}(p_0)}\right)\right).
\end{equation}
\end{lem}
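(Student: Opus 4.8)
The plan is to prove the formula by induction on $2g$, using the $\sQ$-top version of Lemma \ref{lem:wdecomp} together with the explicit shape of $\check{\rm Rec}_{g,1}^{\sQ\text{-{\rm top}}}$ from \eqref{RecNS}. First I would record the base cases: for $2g=2$ this is exactly \eqref{NS1,1}, which already follows from Lemma \ref{lem:LLE1,1}; and for $2g=3$ one expands the right-hand side, the only partition of $g-\tfrac12=1$ into positive half-integers being $\{\tfrac12,\tfrac12\}$ with $k=2$ and $\{1\}$ with $k=1$, so the claimed formula reads $\varpi_{3/2,1}(p_0)+\varpi_{3/2,1}(\sigma(p_0))=d\bigl(\tfrac12\bigl(\tfrac{\Delta\varpi_{1/2,1}}{2\varpi_{0,1}}\bigr)^2-\tfrac{\Delta\varpi_{1,1}}{2\varpi_{0,1}}\bigr)$, which I would check directly against \eqref{RecNS}.

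For the inductive step, the key computational input is Lemma \ref{lem:wdecomp}, which gives
\begin{equation}
\varpi_{g,1}(p_0)+\varpi_{g,1}(\sigma(p_0))=-\frac{\Delta_0{\rm Rec}_{g,1}^{\sQ\text{-{\rm top}}}(p_0)}{2\varpi_{0,1}(p_0)}.
\end{equation}
I would substitute \eqref{RecNS} (with $n=0$, so no $\omega_{g,n}$ term and no $J$-sums): the relevant pieces are the quadratic term $\sum^*_{g_1+g_2=g}\varpi_{g_1,1}(p)\,\varpi_{g_2,1}(p)$ and the derivative term $dx\cdot d_p(\varpi_{g-1/2,1}(p)/dx(p))$. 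Taking $\Delta_0$ of the derivative term and dividing by $2\varpi_{0,1}=\Delta y\cdot dx$ produces, after the same algebraic manipulation used to pass from the raw identity to \eqref{LLE1,1}, a term of the form $-\,d\bigl(\Delta\varpi_{g-1/2,1}(p_0)/(2\varpi_{0,1}(p_0))\bigr)$ — this is the $k=1$ contribution. For the quadratic term I would write $\Delta_0\bigl(\varpi_{g_1,1}(p_0)\varpi_{g_2,1}(p_0)\bigr)=\Delta\varpi_{g_1,1}(p_0)\cdot\bigl(\varpi_{g_2,1}(p_0)+\varpi_{g_2,1}(\sigma(p_0))\bigr)+\varpi_{g_1,1}(\sigma(p_0))\cdot\Delta\varpi_{g_2,1}(p_0)$ (and symmetrise), then feed in the inductive hypothesis for $\varpi_{g_2,1}(p_0)+\varpi_{g_2,1}(\sigma(p_0))$, which expresses it as a total derivative of a sum over shorter partitions. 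The point is that $\Delta\varpi_{g_1,1}$ times the derivative of $\prod$ assembles, after dividing by $\varpi_{0,1}$ and recognising $\Delta\varpi_{g_1,1}/(2\varpi_{0,1})\cdot d(\cdots) + (\text{terms from }d\text{ hitting }\varpi_{0,1})$, into the derivative of $\prod$ over the longer partition with multiplicity accounting for $k$.

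The main obstacle I anticipate is the bookkeeping that turns the convolution of the quadratic term with the inductive hypothesis into precisely the combinatorial coefficient $1/k$: one must verify that when a partition of $g-\tfrac12$ into $k$ parts is built by adjoining one part $g_1$ to a partition of $g_2-\tfrac12$ into $k-1$ parts, the factor $\tfrac1{k-1}$ from the hypothesis, combined with summing $g_1$ over the two quadratic slots and the $\tfrac12$ normalisations, rearranges to $\tfrac1k$; this is the discrete analogue of $\frac{d}{dt}\log(1-f)=-\frac{f'}{1-f}$, i.e. the identity $\sum_{k\geq1}\frac1k f^k$ has derivative $\frac{f'}{1-f}=f'(1+f+f^2+\cdots)$, with $f=-\Delta\varpi/(2\varpi_{0,1})$ playing the role of a generating series in the half-integer grading. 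Concretely I would package all the $\varpi_{h,1}$ with $h>0$ into a formal series $\Phi:=\sum_{h\in\frac12\bZ_{>0}}\varpi_{h,1}$ graded by $h$, show that $\Phi(p_0)+\Phi(\sigma(p_0))$ in positive grading $\geq 1$ equals $d\bigl(\sum_{k\geq1}\tfrac1k(-\Delta\Phi/(2\varpi_{0,1}))^k\bigr)$ modulo the relation coming from \eqref{RecNS}, and then read off the coefficient of the grading-$g$ component. The genus-independence of this argument — it never uses $\Sigma=\bP^1$ — is exactly what makes the lemma available for all hyperelliptic curves, as promised in the surrounding discussion; the only subtlety to double-check is that the derivative term in \eqref{RecNS} is written with $d_p(\,\cdot/dx(p))$ rather than $d_0(\,\cdot/dx(p_0))$ as in \eqref{Rec}, so the same manipulation as in the proof of Lemma \ref{lem:LLE1,1} applies verbatim.
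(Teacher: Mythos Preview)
Your approach is the same as the paper's: induction on $g$, invoke Lemma~\ref{lem:wdecomp} to write $\varpi_{g,1}(p_0)+\varpi_{g,1}(\sigma(p_0))=-\Delta_0{\rm Rec}_{g,1}^{\sQ\text{-top}}/(2\varpi_{0,1})$, split $\Delta$ of the quadratic term into $\sI$-- and $\Delta$--pieces, and feed in the induction hypothesis for the $\sI$--pieces. Two small corrections, though. First, the derivative term in \eqref{RecNS} alone does \emph{not} give $-d\bigl(\Delta\varpi_{g-1/2,1}/(2\varpi_{0,1})\bigr)$: you need to pair it with the $g_1=\tfrac12$ contribution from the quadratic sum, since $\sI\varpi_{1/2,1}=-d\Delta y/\Delta y$ supplies exactly the missing piece that turns $dx\cdot d_0(\Delta\varpi_{g-1/2,1}/dx)$ into a total derivative after division by $2\varpi_{0,1}=\Delta y\cdot dx$; the paper singles this out explicitly. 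Second, your decomposition $\Delta_0(ab)=\Delta a\cdot\sI b + a'\cdot\Delta b$ is not an identity; what you want (and what the paper uses) is $\Delta(ab)=\tfrac12\bigl(\Delta a\cdot\sI b+\sI a\cdot\Delta b\bigr)$. With those fixes your sketch goes through. Your generating-function repackaging via $-\log(1-f)$ with $f=-\Delta\Phi/(2\varpi_{0,1})$ is a tidy way to organise the $1/k$ combinatorics; the paper simply declares that step ``purely computational'' and omits it.
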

\begin{proof}
    We have already seen that it holds when $g=1$. Let us also explicitly consider the case for $g=\frac32$ to see how the proof goes. For notation brevity, let us define the operator $\sI$ by $\sI f(p):=f(p)+f(\sigma(p))$, and $\sI$ makes sense no matter if $f$ is a function or a differential.
    Lemma \ref{lem:wdecomp} implies that
    \begin{align}
        \sI \varpi_{\frac32,1}(p)&=-\frac{1}{2\omega_{0,1}(p_0)}\left(2\varpi_{\frac12,1}(p)\cdot\varpi_{1,1}(p)-2\varpi_{\frac12,1}(\sigma(p))\cdot\varpi_{1,1}(\sigma(p))+dx \cdot d_p\frac{\Delta\varpi_{1,1}(p_0)}{dx(p_0)}\right)\nonumber\\
        &=-\frac{1}{2\omega_{0,1}(p_0)}\left(\Delta\varpi_{\frac12,1}(p)\cdot\sI\varpi_{1,1}(p)+\Delta\varpi_{\frac12,1}(p)\cdot\sI\varpi_{1,1}(p)+dx \cdot d_p\frac{\Delta\varpi_{1,1}(p_0)}{dx(p_0)}\right).
    \end{align}
    Recall that
    \begin{equation}
         \sI \varpi_{\frac12,1}(p)=-\frac{d_0\Delta y(p_0)}{\Delta y(p_0)}.
    \end{equation}
    Thus, by using \eqref{NS1,1} we find
    \begin{align}
        \sI \varpi_{\frac32,1}(p)=&-d\frac{\Delta\varpi_{1,1}(p_0)}{2\omega_{0,1}(p_0)}-\frac{\Delta\varpi_{\frac12,1}(p_0)}{2\omega_{0,1}(p_0)}\cdot d\left(-\frac{\Delta\varpi_{\frac12,1}(p_0)}{2\omega_{0,1}(p_0)}\right)\nonumber\\
        =&d\left(-\frac{\Delta\varpi_{1,1}(p_0)}{2\omega_{0,1}(p_0)}+\frac12\left(-\frac{\Delta\varpi_{\frac12,1}(p_0)}{2\omega_{0,1}(p_0)}\right)^2\right).
    \end{align}

    Now we proceed by induction on $2g\in\bZ_{\geq2}$. Assume that Lemma holds up to $\varpi_{g-\frac12,1}$. Then, for $\varpi_{g,1}$,  Lemma \ref{lem:wdecomp} gives:
\begin{align}
        &-2\omega_{0,1}(p_0)\cdot \sI\omega_{g,1}(p_0)\nonumber\\
        &=\sum^*_{g_1+g_2=g}\Big(\varpi_{g_1,1}(p_0)\cdot\varpi_{g_2,1}(p_0)-\varpi_{g_1,1}(\sigma(p_0))\cdot\varpi_{g_2,1}(\sigma(p_0))\Big)+dx(p_0) \cdot d_0\frac{\Delta\varpi_{g-\frac12,1}(p_0)}{dx(p_0)}\nonumber\\
        &=\frac12\sum^*_{g_1+g_2=g}\Big(\sI\varpi_{g_1,1}(p_0)\cdot\Delta\varpi_{g_2,1}(p_0)+\Delta\varpi_{g_1,1}(p_0)\cdot\sI\varpi_{g_2,1}(p)\Big)+dx(p_0) \cdot d_0\frac{\Delta\varpi_{g-\frac12,1}(p_0)}{dx(p_0)}.
    \end{align}
    The contribution from $\sI \varpi_{\frac12,1}$ will be combined with the last term which gives
    \begin{equation}
        d\left(-\frac{\Delta\varpi_{g-\frac12,1}(p_0)}{2\omega_{0,1}(p_0)}\right).
    \end{equation}
    The proof for the remaining terms is purely computational without requiring any clever trick, hence we omit it.
\end{proof}

\subsubsection{Existence}

Since we do not need to prove the symmetry for $\omega_{g,1}$, we can prove existence of a solution of the $\sQ$-top part of refined loop equations with Lemma \ref{lem:LLENS1}.

\begin{prop}\label{prop:NSg1}
    Given a hyperelliptic refined spectral curve $\cS_{\bm\mu,\bm\kappa}$ together with a refinement parameter $\sQ$ and a choice of representatives $\sA_i$ of $\cA_i$, there exists a solution of the $\sQ$-top part of refined loop equation of type $(g,1)$ for all $2g\in\bZ_{>0}$.
\end{prop}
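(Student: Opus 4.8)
The plan is to argue by induction on $2g\in\bZ_{>0}$, mimicking the proof of Proposition \ref{prop:1,1} but with Lemma \ref{lem:LLENS1} playing the role of Lemma \ref{lem:LLE1,1}. The structural point that makes this feasible for an arbitrary hyperelliptic curve, unlike Theorem \ref{thm:existence}, is the one stressed in Remark \ref{rem:NS1}: the $\sQ$-top recursion \eqref{NS} for $\varpi_{g,1}$ involves only the one-point differentials $\varpi_{h,1}$ with $h<g$ (no $\varpi_{g-1,n+2}$ term, no higher-point term), so the argument never has to invoke symmetry of multidifferentials and closes entirely within the family $\{\varpi_{h,1}\}_{h<g}$. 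For the base cases, $2g=1$ is vacuous (type $(\tfrac12,1)$ carries no loop equation since $2g-2+n=-1$) and $2g=2$ is Proposition \ref{prop:1,1} after extracting the $\sQ^2$-coefficient, which for $2g=2$ is the $\sQ$-top coefficient.

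For the inductive step I fix $g$ with $2g\ge 3$ and assume that $\varpi_{h,1}$ solves the $\sQ$-top loop equation of type $(h,1)$ for all $1\le h<g$ (with $\varpi_{0,1}=\omega_{0,1}$, $\varpi_{\frac12,1}=\omega_{\frac12,1}$ given explicitly). First, each such $\varpi_{h,1}$ is then a genuine meromorphic differential on $\Sigma$ with poles only in $\cR^*\cup\sigma(\widetilde\cP_+)$, so ${\rm Rec}_{g,1}^{\sQ\text{-{\rm top}}}(p)$ --- built from products $\varpi_{g_1,1}\cdot\varpi_{g_2,1}$ and $dx\cdot d_p(\varpi_{g-\frac12,1}/dx)$ --- is a well-defined meromorphic differential on $\Sigma$ with poles in $\cR\cup\sigma(\widetilde\cP_+)$, and the $\sQ$-top analogue of the corollary to Lemma \ref{lem:wdecomp} shows that $\varpi_{g,1}$ defined by \eqref{NS} is a differential on $\Sigma$. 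Decomposing $\varpi_{g,1}$ into its $\sigma$-invariant and $\sigma$-anti-invariant parts as in Lemma \ref{lem:wdecomp}, Lemma \ref{lem:LLENS1} identifies twice the invariant part with an explicit exact form $d\Phi_g$; being exact it is residue-free and has vanishing period along every cycle, in particular along each $\cA_i$. Since $\varpi_{g,1}(p)+\varpi_{g,1}(\sigma(p))=d\Phi_g$ has zero $\cA_i$-period, the Riemann bilinear identity (Lemma \ref{lem:RBI}) applies exactly as in the proof of Theorem \ref{thm:RTR} and rewrites \eqref{NS} in the recursion-2 form \eqref{NSR2}, with the error term there equal to $\hat R_{g,1}^{\sQ\text{-{\rm top}}}$. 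As in Remark \ref{rem:cancel}, $\hat R_{g,1}^{\sQ\text{-{\rm top}}}$ cancels the poles of $-{\rm Rec}_{g,1}^{\sQ\text{-{\rm top}}}/(2\omega_{0,1})$ as $p_0$ approaches $\widetilde\cP_+$, so $\varpi_{g,1}\in\Omega_{{\rm stab}}^{1}(\Sigma)$; one then checks, as for $\omega_{1,1}$ in Proposition \ref{prop:1,1} and in \cite[Appendix]{KO22}, that no pole arises at the poles of $\omega_{0,1}$ and that $\varpi_{g,1}$ stays finite at ineffective ramification points, so that the analytic continuation in Definition \ref{def:NS} is legitimate. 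Residue-freeness follows from the invariant part being residue-free together with the fact that anti-invariance forces the residue of the anti-invariant part to vanish at every $\sigma$-fixed point and to be odd under $q\mapsto\sigma(q)$, which, combined with the absence of poles at $\widetilde\cP_+$, kills all residues; finally $\oint_{\cA_i}\varpi_{g,1}=0$ because the anti-invariant part is $\sA_i$-normalised by Property $\bm\eta 2$ (and, being residue-free, has representative-independent $\cA_i$-periods) while the invariant part is exact, and $\check R_{g,1}^{\sQ\text{-{\rm top}}}=\hat R_{g,1}^{\sQ\text{-{\rm top}}}$ is manifestly holomorphic at $\cR$ and anti-invariant under $\sigma$. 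Uniqueness is automatic from the recursion.

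The hard part is not conceptual but book-keeping, and this is precisely why the statement is a theorem here while the full analogue remains a conjecture: because ${\rm Rec}_{g,1}^{\sQ\text{-{\rm top}}}$ is not $\sigma$-invariant (in sharp contrast with the unrefined linear loop equation, Lemma \ref{lem:LLE}), the contour integrals in \eqref{NS} do not localise at $\cR$, and one must carefully track how the exact invariant part supplied by Lemma \ref{lem:LLENS1} conspires with the anti-invariant part to cancel the unwanted poles at $\widetilde\cP_+$ and to rule out spurious residues at simple poles of $\omega_{0,1}$ and at ineffective ramification points. These are exactly the computations already carried out for $\omega_{1,1}$, now propagated through the induction, and the delicate bookkeeping of pole orders there is where the real effort lies.
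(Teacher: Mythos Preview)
Your proposal is correct and follows essentially the same route as the paper: use Lemma~\ref{lem:LLENS1} to show the $\sigma$-invariant part of $\varpi_{g,1}$ is exact (hence residue-free with vanishing $\cA_i$-periods), combine with Lemma~\ref{lem:wdecomp} to get $\oint_{\sA_i}\varpi_{g,1}=0$, pass via the Riemann bilinear identity to the form \eqref{NSR2}, invoke Remark~\ref{rem:cancel} for the pole cancellation at $\widetilde\cP_+$, and deduce residue-freeness from the exactness of the invariant part together with the absence of poles at $\widetilde\cP_+$. Your write-up is more explicit about the induction structure and about the auxiliary checks at ineffective ramification points and at poles of $\omega_{0,1}$ than the paper's own proof, but the argument is the same.
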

\begin{proof}
    Lemma \ref{lem:LLENS1} implies that $\sI\omega_{g,1}$ has no residue, and in particular, we have
\begin{equation}
    \oint_{\sA_i}\sI\varpi_{g,1}=0.
\end{equation}
Furthermore, the second equation in Lemma \ref{lem:wdecomp} implies that $\oint_{\sA_i}\Delta\varpi_{g,1}=0$ due to the property of $\eta_\sA^p$. Therefore, we have
\begin{equation}
    \oint_{\sA_i}\varpi_{g,1}=0.
\end{equation}

Now by using the Riemann bilinear identity, we can bring the recursion formula \eqref{NS} we started with to another recursion formula \eqref{NSR2}. Then, when $p_0\to p\in \widetilde\cP_+$, one finds that the pole coming from the first term (the ${\rm Rec}_{g,1}^{\sQ\text{-{\rm top}}}$ term) is precisely cancelled by the corresponding contribution from $\hat R_{g,1}^{\sQ\text{-{\rm top}}}$ (recall Remark \ref{rem:cancel}). This implies that the pole of $\varpi_{g,1}$ may only lie in  $\cR^*,\sigma(\widetilde\cP_+^{(0)})$. Finally, since $\sI\varpi_{g,1}$ is residue-free and since it has no poles at $\widetilde\cP_+^{(0)}$, $\varpi_{g,1}$ itself cannot have residue anywhere. This also ensures that the integral of $\varpi_{g,1}$ vanishes along any representative in $\cA_i$. 
\end{proof}

\subsubsection{Existence for general $n$}
By repeating the same strategy as in the proof of Lemma \ref{lem:LLENS1}, one can extend the statement to the following form:
\begin{equation}
        \sI_0\varpi_{g,n+1}(p_0,J)=d_0\left(\sum_{k=1}^{2g-1}\sum_{g_1,...,g_k\in\frac12\bZ_{>0}}^{g_1+\cdots+g_k=g-\frac12}\sum_{J_1\sqcup \cdots\sqcup J_k=J}\frac{1}{k}\prod_{i=1}^k\left(-\frac{\Delta_0\varpi_{g_k,n_k+1}(p_0,J_k)}{2\omega_{0,1}(p_0)}\right)\right),\label{LLENSn}
\end{equation}
where $\sI_0$ takes the invariant part with respect to $p_0$. Therefore, the $\sI_0\varpi_{g,n+1}$ is residue-free, and we can similarly apply the arguments of the proof of Proposition \ref{prop:NSg1} to $\varpi_{g,n+1}$. Therefore, the proof of existence boils down to the proof of the symmetry of $\omega_{g,n+1}$ (recall Remark \ref{rem:cancel}).

\subsection{Quantum curves}
Let $p\in\Sigma$ be a point away from $\cR$, We introduce the following derivative notation acting on a function $f$ on $\Sigma$:
\begin{equation}
    \frac{d}{dx(p)}\cdot f(p):=\frac{1}{dx(p)}d f(p).
\end{equation}
This is none other than the contraction of $df$ by a (-1)-form $dx^{-1}$. Then, in the full refined setting, \emph{quantisation of a hyperelliptic curve via refined topological recursion} studies a second-order differential equation of the form:
\begin{equation}
   \left( \left(\hbar\frac{d}{dx(p)}\right)^2-\sum_{k\geq0}\hbar^k\cdot  Q_k(p)\right)\cdot\psi_\gamma(p;\hbar)=0,\label{QC1}
\end{equation}
where $\psi_\gamma(p;\hbar)$ is called the (\emph{perturbatuve}) \emph{wave-function}, given by
\begin{equation}
\psi_\gamma(p;\hbar):=\exp\left(\sum_{g\in\frac12\bZ_{\geq0}}\sum_{n\in\bZ_{\geq1}}\frac{\hbar^{2g-2+n}}{n!\beta^{\frac{n}{2}}}\int_\gamma^{p}\cdots\int_\gamma^{p}\omega_{g,n}\right),\label{wavefull}
\end{equation}
with an appropriate choice of $\gamma$. For example, \cite{KO22} has presented explicit forms of differential operators for a special class of genus-zero curves.

Since $\omega_{g,n}$ are meromorphic differentials on $\Sigma$, order by order in $\hbar$, $Q_k(p)$ is in general a meromorphic function on $\Sigma$. However, typically quantum curve program in the context of topological recursion considers the following stronger question: is it possible to construct an ordinary differential operator defined on the base $\bP^1$ with prescribed pole structures, with appropriate modifications to the wave-function if necessary? If such a differential operator on the base $\bP^1$ exists, it is often called the \emph{quantum curve}.

In the unrefined setting, it is already proved in \cite{Iwa19,EGF19,MO19} that when $\Sigma\neq\bP^1$, one has to in fact consider a transseries-valued wave-function. As a consequence, the resulting quantum curve is not a power series in $\hbar$ but rather it is a transseries in $\hbar$. Although it is difficult to investigate in the full refined framework at the moment of writing, in this section we will prove existence of the $\sQ$-top part of the quantum curve ($\sQ$-top quantum curve for brevity). It is worth emphasising in advance that the $\sQ$-top quantum curve indeed becomes a power series in $\hbar$, unlike transseries in the self-dual limit.

\subsubsection{Existence}

In order to prove existence of the $\sQ$-top quantum curve, let us define a sequence of meromorphic functions $S^{\sQ\text{-{\rm top}}}_k$ on $\Sigma$ for $k\in\bZ_{\geq-1}$ by
\begin{equation}
    S^{\sQ\text{-{\rm top}}}_k(p):=\frac{\varpi_{\frac{k+1}{2},1}(p_0)}{dx(p_0)}.
\end{equation}
We conventionally set $S^{\sQ\text{-{\rm top}}}_{k}=0$ for all $k<-2$. Similarly, we define $Q_0^{\sQ\text{-{\rm top}}}$ by
\begin{equation}
    Q_0^{\sQ\text{-{\rm top}}}(p_0):=Q_0(p_0):=\frac{b(x(p_0))^2-4a(x(p_0))\cdot c(x(p_0))}{4a(x(p_0))^2}
\end{equation}
where $a(x),b(x),c(x)$ appear in the defining equation of the underlying hyperelliptic curve \eqref{P(x,y)=0}. Next, we define $Q_1^{\sQ\text{-{\rm top}}}$ by
\begin{equation}
    Q_1^{\sQ\text{-{\rm top}}}(p_0):=\frac{\varpi_{0,1}(p_0)}{dx(p)\cdot dx(p_0)}\cdot\left(\sum_{p\in\widetilde\cP_+}\mu_p\eta^p_\sA(p_0)+\sum_{i=1}^{\tilde g}\kappa_i\cdot u_i(p_0)\right).
\end{equation}
Finally, we define a sequence of meromorphic functions by $Q^{\sQ\text{-{\rm top}}}_k(p)$ for $k\in\bZ_{\geq2}$ by
\begin{equation}
    Q_k^{\sQ\text{-{\rm top}}}(p_0):=\frac{2\varpi_{0,1}(p_0)\cdot \hat R^{\sQ\text{-{\rm top}}}_{\frac{k}{2},1}(p_0)}{dx(p_0)\cdot dx(p_0)}.
\end{equation}

In this notation, we can rewrite the recursion formula for $\varpi_{g,1}$ \eqref{NSR2} to that for $S^{\sQ\text{-{\rm top}}}_k$:
\begin{lem}\label{lem:WKB}
    Given a hyperelliptic refined spectral curve, the sequence of meromorphic functions $S^{\sQ\text{-{\rm top}}}_k$ satisfies the following relation for all $k\in\bZ_{\geq0}$:
    \begin{align}
        \sum_{i+j=k-2}S^{\sQ\text{-{\rm top}}}_i(p_0)\cdot S^{\sQ\text{-{\rm top}}}_j(p_0)+\frac{d}{d x(p_0)}\cdot S^{\sQ\text{-{\rm top}}}_{k-2}(p_0)=Q^{\sQ\text{-{\rm top}}}_k(p_0),
    \end{align}
    where $Q^{\sQ\text{-{\rm top}}}_k(p_0)$ is invariant under the involution $p_0\mapsto\sigma(p_0)$.
\end{lem}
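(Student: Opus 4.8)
The plan is to translate the already-derived recursion \eqref{NSR2} for $\varpi_{g,1}$ into the language of the functions $S^{\sQ\text{-{\rm top}}}_k$ by dividing through by $dx(p_0)^{\otimes 2}$ and keeping careful track of the $\hbar$-grading implicit in the index shift $g \leftrightarrow k$ via $k = 2g-1$. First I would recall that \eqref{NSR2} reads $\varpi_{g,1}(p_0) = -\check{\rm Rec}^{\sQ\text{-top}}_{g,1}(p_0)/(2\omega_{0,1}(p_0)) + \hat R^{\sQ\text{-top}}_{g,1}(p_0)$, where by \eqref{RecNS} the quantity $\check{\rm Rec}^{\sQ\text{-top}}_{g,1}$ is the sum $\sum^*_{g_1+g_2=g}\varpi_{g_1,1}\cdot\varpi_{g_2,1} + dx\cdot d_p(\varpi_{g-\frac12,1}/dx)$ (the $\omega_{g,n}$-term and the Bergman-type term both being absent when $n=0$). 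Rearranging gives $2\omega_{0,1}(p_0)\varpi_{g,1}(p_0) + \sum^*_{g_1+g_2=g}\varpi_{g_1,1}(p_0)\varpi_{g_2,1}(p_0) + dx(p_0)\cdot d_{p_0}\big(\varpi_{g-\frac12,1}(p_0)/dx(p_0)\big) = 2\omega_{0,1}(p_0)\hat R^{\sQ\text{-top}}_{g,1}(p_0)$. Dividing by $dx(p_0)\cdot dx(p_0)$ and using $\varpi_{0,1}=\omega_{0,1}=\tfrac12\Delta y\,dx$, the first two terms combine: $2\omega_{0,1}\varpi_{g,1}/(dx)^2 + \sum^*\varpi_{g_1,1}\varpi_{g_2,1}/(dx)^2 = \sum_{g_1+g_2=g, \text{ full sum}}S^{\sQ\text{-top}}_{2g_1-1}S^{\sQ\text{-top}}_{2g_2-1}$ once one notes $S^{\sQ\text{-top}}_{-1}=\varpi_{0,1}/dx$, so that including $g_1=0$ or $g_2=0$ exactly reinstates the $2\omega_{0,1}\varpi_{g,1}$ piece. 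Setting $k=2g-1$, the convolution becomes $\sum_{i+j=k-2}S^{\sQ\text{-top}}_i S^{\sQ\text{-top}}_j$, the derivative term becomes $\frac{d}{dx(p_0)}\cdot S^{\sQ\text{-top}}_{k-2}(p_0)$ by definition of the contraction notation, and the right side is $Q^{\sQ\text{-top}}_k(p_0)$ by the very definition of $Q^{\sQ\text{-top}}_k$ for $k\ge 2$.

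Next I would verify the low-index cases $k=0$ and $k=1$ separately, since these involve the unstable data rather than \eqref{NSR2}. For $k=0$: the convolution is empty, $S^{\sQ\text{-top}}_{-2}=0$ kills the derivative term, so the claim is $S^{\sQ\text{-top}}_{-1}(p_0)^2 = Q_0^{\sQ\text{-top}}(p_0)$, i.e. $(\Delta y(p_0)/2)^2 = (b^2-4ac)/(4a^2)$; this is immediate from the quadratic formula applied to $a y^2+by+c=0$, since $\Delta y = y(p_0)-y(\sigma(p_0)) = \pm\sqrt{b^2-4ac}/a$. For $k=1$: the convolution $\sum_{i+j=-1}$ contributes $2S^{\sQ\text{-top}}_{-1}S^{\sQ\text{-top}}_{0}$, $S^{\sQ\text{-top}}_{-1}=0$ is false but $S^{\sQ\text{-top}}_{-2}=0$ kills the derivative, so the claim reads $2S^{\sQ\text{-top}}_{-1}S^{\sQ\text{-top}}_0 = Q_1^{\sQ\text{-top}}$, which unwinds to $2\cdot\frac{\varpi_{0,1}}{dx}\cdot\frac{\varpi_{\frac12,1}}{dx} = Q_1^{\sQ\text{-top}}$; substituting $\varpi_{\frac12,1}=\omega_{\frac12,1}$ from \eqref{w1/2,1} and noting the $-d\Delta y/\Delta y$ piece of $\omega_{\frac12,1}$ is exactly $d S^{\sQ\text{-top}}_{-1}/S^{\sQ\text{-top}}_{-1}\cdot(\ldots)$ — actually this term should be absorbed into a gauge transformation of the wave function and hence does not appear in $Q_1^{\sQ\text{-top}}$ — leaves precisely the stated formula for $Q_1^{\sQ\text{-top}}$. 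I would double-check the treatment of this $d\Delta y/\Delta y$ term carefully, since it is the one place where the normalization conventions (and the split between $Q_1$ and the gauge factor) matter.

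Finally, the invariance statement: $Q^{\sQ\text{-top}}_k(p_0)$ is $\sigma$-invariant. For $k=0$ this is clear since $Q_0$ depends only on $x(p_0)$; for $k=1$ it follows because $\omega_{0,1}$ is $\sigma$-anti-invariant (being $\tfrac12\Delta y\,dx$) and the bracket $\sum_p\mu_p\eta^p_\sA(p_0)+\sum_i\kappa_i du_i(p_0)$ — wait, $du_i$ is $\sigma$-invariant on a hyperelliptic curve while $\eta^p_\sA$ is $\sigma$-anti-invariant, so one must check the combination entering $Q_1^{\sQ\text{-top}}$ is actually $\sigma$-invariant after dividing by $dx\cdot dx$, which it is since $\omega_{0,1}/(dx\,dx)$ is anti-invariant. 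For general $k\ge 2$, the cleanest route is induction: one knows from Proposition \ref{prop:NSg1} (combined with Lemma \ref{lem:LLENS1}) that each $\varpi_{g,1}+\varpi_{g,1}\circ\sigma$ is an exact differential $d(\text{stuff})$ built from lower $\varpi$'s, hence $S^{\sQ\text{-top}}_i + S^{\sQ\text{-top}}_i\circ\sigma = \frac{d}{dx}(\ldots)$; plugging the $\sigma$-image of the whole identity into itself and subtracting shows the anti-invariant parts of the left side cancel in pairs, forcing $\Delta Q^{\sQ\text{-top}}_k = 0$. \textbf{The main obstacle} I anticipate is precisely this bookkeeping of invariant versus anti-invariant pieces in the $k=1$ term and the interplay with the gauge-transformation convention: getting the $d\Delta y/\Delta y$ contribution to land in the gauge factor rather than in $\bar Q^{\sQ\text{-top}}$ requires stating explicitly what "up to gauge transformation" means in Theorem \ref{thm:NS}, and matching it consistently here. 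The rest is routine algebra once the index dictionary $k=2g-1$ and the conventions $S^{\sQ\text{-top}}_{-1}=\varpi_{0,1}/dx$, $S^{\sQ\text{-top}}_{<-2}=0$ are fixed.
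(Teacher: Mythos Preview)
Your overall strategy --- rearranging \eqref{NSR2} into the displayed WKB relation via the index dictionary $k=2g-1$ --- matches the paper's proof for $k\ge 2$, and the invariance of $Q_k^{\sQ\text{-top}}$ for $k\ge 2$ follows more cheaply than your induction: Remark~\ref{rem:cancel} already records that $\hat R^{\sQ\text{-top}}_{g,1}(p_0)$ is $\sigma$-anti-invariant, and since $\omega_{0,1}$ is anti-invariant while $dx\cdot dx$ is invariant, the quotient defining $Q_k^{\sQ\text{-top}}$ is invariant with no further work. (Incidentally, on a hyperelliptic curve the holomorphic differentials $du_i$ are $\sigma$-\emph{anti}-invariant, not invariant, so your ``wait'' for $k=1$ is a false alarm.)

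There is, however, a genuine index slip in your $k=1$ analysis that leads you down the wrong road. For $k=1$ the derivative term in the lemma is $\frac{d}{dx}S^{\sQ\text{-top}}_{k-2}=\frac{d}{dx}S^{\sQ\text{-top}}_{-1}$, which is \emph{not} zero; you have mistakenly substituted $S_{-2}$. Keeping this term, the identity to check is $2S^{\sQ\text{-top}}_{-1}S^{\sQ\text{-top}}_0+\frac{d}{dx}S^{\sQ\text{-top}}_{-1}=Q_1^{\sQ\text{-top}}$, exactly as the paper writes. Now expand $S^{\sQ\text{-top}}_0=\varpi_{\frac12,1}/dx$ using \eqref{w1/2,1}: the $-\tfrac12\frac{d\Delta y}{\Delta y}$ piece contributes $-\frac{1}{2}\frac{d\Delta y}{dx}=-\frac{d}{dx}S^{\sQ\text{-top}}_{-1}$ to $2S^{\sQ\text{-top}}_{-1}S^{\sQ\text{-top}}_0$, and this is exactly cancelled by the derivative term you dropped. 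What remains is precisely the defined $Q_1^{\sQ\text{-top}}$. So no gauge transformation is needed or invoked at this stage --- your ``main obstacle'' dissolves once the index is corrected.
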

\begin{proof}
For $k=0$, recall that by definition $x,y$ satisfies an algebraic equation \eqref{P(x,y)=0}. By shifting $y$ by $-b(x)/2$ and by dividing the whole equation by $a(x)$, we find the equation for $S_{-1}^{\sQ\text{-{\rm top}}}$ as
\begin{equation}
    S_{-1}^{\sQ\text{-{\rm top}}}(p_0)^2 = (\Delta y(p_0))^2=\left(y-\frac{b(x(p))}{2}\right)^2= Q_0^{\sQ\text{-{\rm top}}}(p_0).
\end{equation}
Obviously, $Q_0^{\sQ\text{-{\rm top}}}(p_0)$ is invariant under the involution $\sigma$.

For $k=1$, one can manipulate the definition of $\varpi_{\frac12,1}$ \eqref{w1/2,1} as follows
\begin{equation}
    2S^{\sQ\text{-{\rm top}}}_{-1}(p_0)\cdot S_0^{\sQ\text{-{\rm top}}}(p_0)+\frac{d}{d x(p_0)}\cdot S^{\sQ\text{-{\rm top}}}_{-1}(p_0)=Q_1^{\sQ\text{-{\rm top}}}(p_0).
\end{equation}
$Q_1^{\sQ\text{-{\rm top}}}(p_0)$ is invariant under the involution $\sigma$ by construction (recall that $\eta_\sA^p(p_0)$ and $u_i(p_0)$ are both anti-invariant under the involution $p_0\mapsto\sigma(p_0)$). 

For general $k$, this can be shown by simple manipulation of the $\sQ$-top recursion formula \eqref{NS}. And as explained in \ref{rem:cancel}, $\hat R^{\sQ\text{-{\rm top}}}_{g,1}(p_0)$ is anti-invariant under the involution $\sigma$, implying the invariance of $Q^{\sQ\text{-{\rm top}}}_{k}(p_0)$. 
\end{proof}

Lemma \ref{lem:WKB} shows that $S^{\sQ\text{-{\rm top}}}_k$ satisfies the WKB-type recursion, implying that there exists a second-order differential equation. In order to formally make a statement, let us define the so-called wave function which is only briefly mentioned at the beginning of this section:
\begin{defin}\label{def:wave}
    Let $\epsilon_1$ be a formal parameter, and choose an open contour $\gamma\subset\Sigma$ such that it does not go through poles of $\varpi_{g,1}$ for all $g\in\bZ_{\geq0}$. Let $p,q$ be the end points of $\gamma$ and we set $p,q\not\in\cR^*\cup\sigma(\widetilde{\cP}^{(0)})^*$. Then up to overall constant normalisation, we define the \emph{$\sQ$-top wave function} as a formal function of $p$ by
\begin{equation}
    \psi_\gamma^{\sQ\text{-{\rm top}}}(p;\epsilon_1):=\exp\left(\sum_{g\in\frac12\bZ_{\geq0}}\epsilon_1^{2g-1}\int^{p}_\gamma \varpi_{g,1} \right).
\end{equation}
\end{defin}
For a special class of genus-zero curves, \cite[Section 4.2]{KO22} defines an equivalent wave function which is denote by $\varphi^{\text{NS}}$. Note that a different choice of contour $\gamma$ and a different choice of the other end point $q$ only affect by normalisation of $\psi_\gamma^{\sQ\text{-{\rm top}}}(p;\epsilon)$, hence we have dropped their dependence. Then, we finally arrive at the following statement:
\begin{thm}\label{thm:NS}
    Given a hyperelliptic refined spectral curve $\cS_{\bm\mu,\bm\kappa}$, there exists a second-order differential equation, in the following form:
    \begin{equation}
        \left(\left(\epsilon_1\frac{d}{dx(p)}\right)^2-\sum_{\ell\in\bZ_{\geq0}}\epsilon_1^\ell\cdot  \bar Q_\ell^{\sQ\text{-{\rm top}}}(x(p))\right) \psi^{\sQ\text{-{\rm top}}}_\gamma(p;\epsilon_1)=0,\label{QC}
    \end{equation}
    where $\bar Q_\ell^{\sQ\text{-{\rm top}}}$ is the lift of a rational function on $\bP^1$ to $\Sigma$ satisfying $\bar  Q_\ell^{\sQ\text{-{\rm top}}}(x(p))= Q_\ell^{\sQ\text{-{\rm top}}}(p)$.
\end{thm}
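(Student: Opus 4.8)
The plan is to recognise the WKB-type recursion of Lemma~\ref{lem:WKB} as the order-by-order expansion in $\epsilon_1$ of a Riccati equation, and then to integrate it. First I would introduce the generating series
\begin{equation}
T(p;\epsilon_1):=\sum_{m\geq 0}\epsilon_1^{m}\,S^{\sQ\text{-{\rm top}}}_{m-1}(p)=\sum_{g\in\frac12\bZ_{\geq0}}\epsilon_1^{2g}\,S^{\sQ\text{-{\rm top}}}_{2g-1}(p),
\end{equation}
which is a genuine formal power series in $\epsilon_1$ with coefficients meromorphic on $\Sigma$. Since $\varpi_{g,1}=S^{\sQ\text{-{\rm top}}}_{2g-1}\cdot dx$ by definition, differentiating the wave function of Definition~\ref{def:wave} term by term in $\epsilon_1$ gives at once
\begin{equation}
\epsilon_1\frac{d}{dx(p)}\,\psi^{\sQ\text{-{\rm top}}}_\gamma(p;\epsilon_1)=T(p;\epsilon_1)\cdot\psi^{\sQ\text{-{\rm top}}}_\gamma(p;\epsilon_1).
\end{equation}
Although $\psi^{\sQ\text{-{\rm top}}}_\gamma$ itself is $\exp$ of a series carrying an $\epsilon_1^{-1}$ term and so is not a power series in $\epsilon_1$, the factor $\psi^{\sQ\text{-{\rm top}}}_\gamma$ cancels in every identity below, leaving genuine $\epsilon_1$-power series on both sides, so all the manipulations are legitimate.

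Next I would apply $\epsilon_1\frac{d}{dx(p)}$ once more. By the Leibniz rule,
\begin{equation}
\left(\epsilon_1\frac{d}{dx(p)}\right)^2\psi^{\sQ\text{-{\rm top}}}_\gamma=\left(\epsilon_1\frac{d T}{dx(p)}+T^2\right)\psi^{\sQ\text{-{\rm top}}}_\gamma.
\end{equation}
Expanding in $\epsilon_1$, the coefficient of $\epsilon_1^{k}$ in $T^2$ is $\sum_{i+j=k-2}S^{\sQ\text{-{\rm top}}}_i(p)\,S^{\sQ\text{-{\rm top}}}_j(p)$, and the coefficient of $\epsilon_1^{k}$ in $\epsilon_1\frac{dT}{dx(p)}$ is $\frac{d}{dx(p)}\cdot S^{\sQ\text{-{\rm top}}}_{k-2}(p)$. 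By Lemma~\ref{lem:WKB} their sum equals $Q^{\sQ\text{-{\rm top}}}_k(p)$, hence
\begin{equation}
\left(\left(\epsilon_1\frac{d}{dx(p)}\right)^2-\sum_{k\in\bZ_{\geq0}}\epsilon_1^{k}\,Q^{\sQ\text{-{\rm top}}}_k(p)\right)\psi^{\sQ\text{-{\rm top}}}_\gamma(p;\epsilon_1)=0 .
\end{equation}

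The final step is to descend the coefficients to the base. By Lemma~\ref{lem:WKB} each $Q^{\sQ\text{-{\rm top}}}_k$ is a meromorphic function on $\Sigma$ invariant under the hyperelliptic involution $\sigma$; since $x\colon\Sigma\to\bP^1$ realises $\bP^1$ as the quotient $\Sigma/\sigma$, such a function is the pullback of a meromorphic — hence rational — function on $\bP^1$, which I call $\bar Q^{\sQ\text{-{\rm top}}}_k$, so that $\bar Q^{\sQ\text{-{\rm top}}}_k(x(p))=Q^{\sQ\text{-{\rm top}}}_k(p)$. This is exactly the asserted operator over $\bP^1$; independence of the contour $\gamma$ and of its second end point (already noted after Definition~\ref{def:wave}) shows that these choices do not affect the equation, only the normalisation of $\psi^{\sQ\text{-{\rm top}}}_\gamma$.

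I do not expect a serious obstacle in this argument itself: once Lemma~\ref{lem:WKB} is available — which rests on the $\sQ$-top recursion and on Proposition~\ref{prop:NSg1}, whose proofs are the genuinely substantive content of this section — Theorem~\ref{thm:NS} is a formal computation. The only points deserving care are (i) the base cases $k=0,1$ of the Riccati equation, which are precisely the $k=0,1$ parts of Lemma~\ref{lem:WKB} (coming respectively from the defining equation $P(x,y)=0$ and from the explicit form of $\varpi_{\frac12,1}$ in Definition~\ref{def:unstable}), and (ii) the observation that the factor $\psi^{\sQ\text{-{\rm top}}}_\gamma$, being the exponential of a Laurent-type series in $\epsilon_1$, may be cancelled from the operator identities without disturbing the $\epsilon_1$-adic bookkeeping. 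It is worth emphasising, as the discussion preceding this subsection does, that in contrast to the self-dual (unrefined) case treated in \cite{Iwa19,EGF19,MO19} no transseries is needed here: $\bar Q^{\sQ\text{-{\rm top}}}_\ell$ and $\psi^{\sQ\text{-{\rm top}}}_\gamma$ are honest power series in $\epsilon_1$.
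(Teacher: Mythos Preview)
Your proposal is correct and follows essentially the same approach as the paper: both invoke Lemma~\ref{lem:WKB} and then use the WKB ansatz form of $\psi^{\sQ\text{-{\rm top}}}_\gamma$ to verify the differential equation order by order in $\epsilon_1$, descending to $\bP^1$ via the $\sigma$-invariance of $Q^{\sQ\text{-{\rm top}}}_k$. Your version is simply a more explicit rendering of what the paper compresses into the single phrase ``since the definition of the wave function is in the WKB ansatz form, one can show order by order in $\epsilon_1$\ldots''.
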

\begin{proof}
    For $\ell\in\bZ_{\geq0}$, Lemma \ref{lem:WKB} shows that for $ Q_\ell^{\sQ\text{-{\rm top}}}(p)$ is a meromorphic function in $p$ invariant under the involution $\sigma$ for all $\ell\in\bZ_{\geq0}$. Since $x:\Sigma\to\bP^1$ is a double cover, this implies that there exists a rational function $\bar  Q_\ell^{\sQ\text{-{\rm top}}}(x)$ on the base $\bP^1$ satisfying
    \begin{equation}
       \bar  Q_\ell^{\sQ\text{-{\rm top}}}(x(p))= Q_\ell^{\sQ\text{-{\rm top}}}(p),
    \end{equation}
    where on the left-hand side we are considering the lift of a rational function $\bar Q_\ell^{\sQ\text{-{\rm top}}}(x)$ and treating it as a function on $\Sigma$. Then, since the definition of the wave function is in the WKB ansatz form, one can show order by order in $\epsilon_1$ that $\psi^{\sQ\text{-{\rm top}}}(p;\epsilon_1)$ satisfies the above second-order differential equation.
\end{proof}

\begin{defin}\label{def:QCNS}
Given a hyperelliptic refined spectral curve, consider the $\sQ$-top wave function $\psi^{\sQ\text{-{\rm top}}}_\gamma(p;\epsilon_1)$ and the associated second-order ordinary differential operator on $\Sigma$. We denote by $\hat x:=x$ and $\Delta\hat y:=\epsilon_1\cdot \frac{d}{dx}$. Then, the  \emph{$\sQ$-top quantum curve} $\hat P^{\sQ\text{-{\rm top}}}(\Delta\hat y,\hat x)$ is the projection of the differential operator to the base $\bP^1$. That is, $\hat P^{\sQ\text{-{\rm top}}}(\Delta\hat y,\hat x)$ is given by
   \begin{equation}
   \hat P^{\sQ\text{-{\rm top}}}(\Delta\hat y,\hat x):= \left(\Delta\hat y^2-\sum_{\ell\in\bZ_{\geq0}}\epsilon_1^\ell\cdot \bar  Q_\ell^{\sQ\text{-{\rm top}}}(\hat x)\right).\label{QCNS}
    \end{equation}
\end{defin}

Some literature also impose the condition on the pole structure of $\bar  Q_\ell^{\sQ\text{-{\rm top}}}(\hat x)$ to call \eqref{QCNS} a quantum curve. Here we are relaxing this condition but we will report an observation regarding the pole structure and the deformation of a refined spectral curve in a sequel paper.

\subsubsection{Justification}

Let us start with the wave function \eqref{wavefull} in the full refined setting, and consider the Nekrasov-Shatashivili limit. Recall that $\epsilon_2=0$ corresponds to the limit $\beta\to\infty$, hence $\omega_{g,n}$ for $n>1$ disappear due to the $\beta^{\frac{n}{2}}$ factor in the denominator. Furthermore, since $\omega_{g,n}$ polynomially depends on $\sQ$, up to degree $2g$, it implies that that the nonzero contributions are only coming from the $\sQ$-top term in $\omega_{g,1}^{(2g)}$. Therefore, the $\sQ$-top wave function (Definition \ref{def:wave}) should arise in the Nekrasov-Shatashivili limit of the full wave function \eqref{wavefull}. Putting another way, this suggests that if there exists a quantum curve in the full refined setting, then it should recover the $\sQ$-top quantum curve of Theorem \ref{thm:NS} by taking the Nekrasov-Shatashivili limit.

\subsubsection{Comparisons with the self-dual limit}

As shown in \cite{KO22}, full-refined quantum curves for a special class of genus-zero curves look similar to those in the self-dual sector. We note that $\omega_{g,n}$ in the refined setting has a drastically different pole structure, yet the wave function $\psi(x)$ behave in a similar way. In particular, there are no significant differences between quantum curves in the self-dual limit and those in the Nekrasov-Shatashivili limit (compare \cite{IKT18-2} and \cite{KO22}).

However, one immediately realises that the above similarity between the two limits is just an accident for $\Sigma=\bP^1$. As shown in \cite{Iwa19,EGF19,MO19}, the unrefined wave function $\psi(x)$ does not satisfy a second-order ordinary differential equation, but rather it satisfies a partial differential equation with parameter derivatives. In order to remove such parameter derivatives, a clever trick is to take a discrete Fourier transform of $\psi(x)$ with an appropriate weight, and we denote the resulting transseries by $\Psi(x)$. It has been shown that $\Psi(x)$ formally satisfies a second-order ordinary differential equation, in cost of being a transseries. This means the differential operator itself is also a transseries-valued. We note that this result has been only recently proven, initiated by the work of Iwaki \cite{Iwa19}.

On the other hand, $\sQ$-top quantum curves do not require such a complicated transseries structure. Importantly, it directly gives an ordinary differential operator without derivatives with respect to parameters. We emphasise that Theorem \ref{thm:NS} holds for a hyperelliptic refined spectral curve of any genus. Quantisation in the full refined setting is a completely open problem, and there are many things to be investigated.

\subsubsection{Interpretation about quantisation procedure}\label{sec:interpretation}

Although there are motivations originated from matrix models, the definition of the wave function \eqref{wavefull} in the unrefined setting is somewhat \emph{ad hoc}. The WKB-type recursion for the wave function differs from the Chekhov-Eynard-Orantin recursion. In particular, it is conceptually hard to grasp any feature of the unrefined quantisation. In contrast, there is a clear geometric characteristic how we should interpret $\sQ$-top quantisation.

Let us use the same notation as Definition \ref{def:QCNS}. By an appropriate gauge transformation, any second-order differential equation can be brought into the following form:
\begin{equation}
    \hat P(\Delta\hat y,\hat x)\cdot\Psi:= \left(\Delta\hat y^2-\sum_{\ell\in\bZ_{\geq0}}\epsilon_1^\ell\cdot   Q_\ell(\hat x)\right)\cdot\Psi=0,
\end{equation}
where its \emph{classical limit} gives the degree-two equation in $\Delta y$
\begin{equation}
    P(\Delta y, x)=\Delta y^2-Q_0(x)=0.\label{classical limit}
\end{equation}
Therefore, the geometric feature of $Q_\ell(\hat x)$ characterises quantisation procedure. For example, the simplest, or perhaps the \emph{canonical quantisation} of the hyperelliptic curve \eqref{classical limit} is simply given by replacing $\Delta y$ with $\Delta\hat y$ but without introducing any quantum correction to $Q_0(x)$,
\begin{equation}
    \hat P^{\text{cano}}(\Delta\hat y,\hat x)\cdot\Psi^{\text{cano}}:= \left(\Delta\hat y^2-Q_0(\hat x)\right)\cdot\Psi^{\text{cano}}=0.
\end{equation}

In order to describe the geometric aspect of $\sQ$-top quantisation, let us introduce the notion of Voros coefficients. For a contour $\gamma\subset\Sigma$ (may or may not be closed), the \emph{Voros coefficient} $V_\gamma$ of $\Psi$ along $\gamma$ is defined by
    \begin{equation}
        V_\gamma:=\frac12\int_{\gamma}\Delta d\log\Psi.
    \end{equation}
One might think of Voros coefficients as the $\epsilon_1$-extension of abelian period integrals $\int_\gamma ydx$, hence they are occasionally called \emph{quantum periods}. In general, Voros coefficients are power series in $\epsilon_1$ (strictly speaking starting with $\epsilon_1^{-1}$). Thus, the $\epsilon_1$-expansion of the Voros coefficients $ V_\gamma$ are given by
\begin{equation}
\forall k\in\bZ_{\geq-1}\qquad V_{\gamma,k}=\frac12\int_{\gamma}\Delta S_k \cdot dx,
\end{equation}
where $d\log\Psi=:\sum_{k\geq-1}\epsilon_1^k\cdot S_k \cdot dx$ and $V_\gamma=:\sum_{k\geq-1}\epsilon_1^k\cdot V_{\gamma,k}$

In order to simplify the arguments, let us focus on the case where $Q_0(x)$ is a polynomial of degree $2\tilde g+1$ with non-zero discriminant
\begin{equation}
    Q_0(x)=\sum_{i=0}^{2\tilde g+1} C_i^{(0)} x^i.
\end{equation}
In this setting $\widetilde{\cP}=\varnothing$ and $x=\infty$ is a branch point. Then, the $\sQ$-top quantisation can be interpreted as the $\epsilon_1$-deformation of the $\tilde g$ lowest coefficients $C_0^{(0)},..,C^{(0)}_{\tilde g-1}$ such that if one writes $Q_k(x)=\sum_{i=0}^{\tilde g-1} C_i^{(k)} x^i$, then $C_i^{(1)}$ is a certain linear combination of $\kappa_i$ (depending on the choice of $\cA_i$-cycles), and $C_i^{(k)}$ for $k\geq2$ are determined by the $\cA_i$-cycle normalisation condition $V_{\cA_i,k-1}=0$. Putting another way, there is a unique quantisation procedure that meets the following two conditions --- uniqueness is easy to verify hence we omit the proof:
\begin{description}
    \item[$\bm\sQ$-top1] $Q_k(x)$ is a polynomial in $x$ of degree $\tilde g-1$ for all $k\geq1$
    \item[$\bm\sQ$-top2] Voros coefficients along the $\cA_i$-cycles truncate at the order $\epsilon_1^0$:
    \begin{equation}
        V_{\cA_i}=\frac{V_{\cA_i,-1}}{\epsilon_1}+\kappa_i.
    \end{equation}
\end{description}
Such a quantisation framework is none other than $\sQ$-top quantisation. A similar quantisation procedure is discussed in the context of conformal blocks in e.g. \cite{LN21,LN22}. When $\widetilde{\cP}^{(\infty)}\neq\varnothing$, one can still characterise the $\sQ$-top quantisation by appropriately adjusting the conditions {\bf $\bm\sQ$-top1} and {\bf $\bm\sQ$-top2}.

When $\widetilde{\cP}^{(0)}\neq\varnothing$, then $\omega_{g,1}$ would have poles at $\sigma(\widetilde{\cP}^{(0)})$, hence interpretation of the $\sQ$-top quantisation becomes less straightforward. We will report more about this case in relation to the so-called variational formula and the Nekrasov twisted superpotential in a sequel paper.

\newpage

\newpage
\appendix
\section{Computational Details}\label{sec:appendix}

In this appendix, we provide detailed computations omitted in Section \ref{sec:properties}. Discussions are also omitted in \cite{KO22}, hence they will be useful to understand computations of \cite{KO22}.

\subsection{Contours and residues}\label{sec:contours}
Given a fundamental domain $\mathfrak{F}\subset\Sigma$, consider simply-connected contours $C_1, C_2\subset\mathfrak{F}$ and we compare the difference between $\oint_{p\in C_1}\oint_{q\in C_2}\omega(p,q)$ and $\oint_{q\in C_2}\oint_{p\in C_1}\omega(p,q)$ for some meromorphic bidifferential $\omega(p,q)$. Thanks to the Cauchy's integral theorem, the contour integrals reduce down to a sum of residues of $\omega(p,q)$ contained inside $C_1,C_2$ respectively. Thus, in this section we investigate a relation between  $\Res_{p=a}\Res_{q=b}$ and $\Res_{q=b}\Res_{p=a}$ for some $a,b\in\mathfrak{F}$, where $C_1$ is a closed contour encircling $a$ and $C_2$ encircling $b$.

For when $a\neq b,\sigma(b)$, it is easy to verify that
\begin{equation}
    \Res_{p=a}\Res_{q=b}=\Res_{q=b}\Res_{p=a}.
\end{equation}
This is because one can always take $C_1$ small enough that it does not intersect $C_2$. However, when $a=b \;(\not\in\cR)$, we can have two scenarios: either $C_1$ is contained inside $C_2$, or vise versa. 
Therefore, we have
 \begin{align}
     \Res_{p=a}\Res_{q=a}&=\Res_{q=a}\Res_{p=a}+\Res_{q=a}\Res_{p=q},\\
     \Res_{q=a}\Res_{p=a}&=\Res_{p=a}\Res_{q=a}+\Res_{p=a}\Res_{q=p}.\label{res1}
 \end{align}
When one extends $\{a\}$ in \eqref{res1} to $J_0\cup\widetilde{P}_+$, one arrives at \eqref{contour0}, i.e.
\begin{equation}
   \oint_{q\in C_+^q}\left(\oint_{p\in C_+^p}-2\pi i\Res_{p=q}\right)= \oint_{p\in C_+^p}\oint_{q\in C_+^q},\label{resrec}
\end{equation}
where  $C_+^q$ on the right-hand side contains $q=p$ while $C_+^p$ on the left-hand side contains $p=q$ whose contribution will be canceled by the extra residue term so that it is consistent with \eqref{res1}. This relation \eqref{resrec} is used in the proof of the symmetry of $\omega_{g,n+1}$.

Similarly, when $a=\sigma(b)\not\in\cR$, depending on whether $C_1$ is inside $C_2$ or outside, we have:
\begin{align}
     \Res_{q=\sigma(a)}\Res_{p=a}&=\Res_{p=a}\Res_{q=\sigma(a)}+\Res_{p=a}\Res_{q=\sigma(p)},\\
     \Res_{p=a}\Res_{q=\sigma(a)}&=\Res_{q=\sigma(a)}\Res_{p=a}+\Res_{q=a}\Res_{p=\sigma(q)}.
 \end{align}
 And when $a=b=\cR$, the relation will be slightly different as follows:
\begin{align}
     \Res_{q=r}\Res_{p=r}&=\Res_{p=r}\Res_{q=r}+\Res_{p=a}\Res_{q=p}+\Res_{p=a}\Res_{q=\sigma(p)},\\
     \Res_{p=r}\Res_{q=r}&=\Res_{q=r}\Res_{p=r}+\Res_{q=r}\Res_{p=q}+\Res_{q=r}\Res_{p=\sigma(q)},
 \end{align}
where they agree with \cite[Eq. (A.29)]{EO07}.

\subsubsection{Difficulties in higher genus spectral curves}
The relations shown above hold regardless of the genus of a spectral curve. This is because the discussions so far only consider local situations where $C_1$ and $C_2$ can be decomposed into \emph{non-intersecting} smaller contours.

Note that when the integrand of the double contour integral is meromorphic, i.e. when $\Sigma=\bP^1$, \eqref{contour0} (or \eqref{resrec}) can be also written as
\begin{equation}
     \left(\oint_{p\in C_+^p}-\oint_{p\in C_-^p}\right)\left(\oint_{q\in C_+^q}-\oint_{q\in C_-^q}\right)= \left(\oint_{q\in C_+^q}-\oint_{q\in C_-^q}\right)\left(\oint_{p\in C_+^p}-\oint_{p\in C_-^p}-4\pi i\Res_{p=q}\right),\label{resrec2}
\end{equation}
where $C_{\pm}^p$ on the right-hand side contains $p=q$ and $p=\sigma(q)$ respectively. That is, after exchanging the order of two recursion contours, the only difference is the last term on the right-hand side, i.e. the residue at $p=q$. As one can see from the computations below, the symmetry of $\omega_{g,n+2}$ stands if \eqref{resrec2} holds for a higher genus curve, or if all other contributions after the exchange of the contours vanish.

We note that the derivation of \eqref{resrec2} is the only place we use the property $\Sigma=\bP^1$ in the proof of symmetry. More concretely, we used the property that the sum of residues of a meromorphic differential is zero. Since $\eta^p_{\sA}(p_0)$ is not a globally defined meromorphic function on $\Sigma$ in $p$ when $\Sigma\neq\bP^1$, the above argument cannot be applied. One may still wish to show \eqref{resrec2} by a local contour analysis, e.g. \eqref{res1}. However, recall that the refined topological recursion formula takes the \emph{difference} between the $C_+^p$-integral and the $C_-^p$-integral. Thus, when one applies the recursion formula twice in the proof of symmetry of $\omega_{g,n+2}(p_0,q_0,J)$, for instance, the term in the following form appear
\begin{equation}
    \Res_{p=r}\left(\Res_{q=p}-\Res_{q=\sigma(p)}\right).
\end{equation}
For this combination, the contour with respect to $q$ depends on the location of $p$ and it necessarily intersects with the contour for $p$, hence no local contour analysis seems applicable. As a different approach, one may use the Riemann bilinear identity (Lemma \ref{lem:RBI}) and take account for nontrivial $\cA_i$ and $\cB_i$ integrals. However, it becomes harder and more complex to investigate those contributions, and we hope to solve this issue in the near future.

\subsection{Symmetry}\label{sec:sym}

As explained in Section \ref{sec:properties}, 
 we obtain \eqref{twice} after applying the refined recursion formula twice to $\omega_{g,n+2}(p_0,q_0,J)$ where $\text{Rec}^{\sQ,\text{twice}}_{g,n+2}(p,q,J)$ is given in \eqref{Recpq}. Terms in the same colour in  \eqref{Recpq} are \emph{almost} symmetric in $p$ and $q$. There are only two non-symmetric terms in $\text{Rec}^{\sQ,\text{twice}}_{g,n+2}(p,q,J)$, one is in the second black term and the other is in the second purple term. A clever trick is that we can symmetrise it by adding the following two terms:
\begin{align}
    \text{Rec}^{\sQ,\text{sym}}_{g,n+2}(p,q,J):=&\text{Rec}^{\sQ,\text{twice}}_{g,n+2}(p,q,J)\nonumber\\
    &+\sum_{t\sqcup I=J}\frac{dx(q)\cdot dx(t)}{(x(q)-x(t))^2}\cdot\frac{dx(p)\cdot dx(u)}{(x(p)-x(u))^2}\cdot\omega_{g,n}(p,I)\nonumber\\
    &+\sQ\cdot dx(q)\cdot d_q\cdot\left(\frac{1}{dx(q)}\frac{dx(q)\cdot dx(p)}{(x(q)-x(p))^2}\cdot\omega_{g-1,n+1}(p,J)\right).\label{Recsym}
\end{align}
It is crucial that the added two terms are $\sigma$-invariant quadratic differential in $q$, and they become regular at $\cR^*$ after dividing by $\omega_{0,1}(q)$. Therefore, Lemma \ref{lem:reseta} implies that they give no contributions after taking the contour integrals in $q$. As a consequence, \eqref{twice} holds even if $\text{Rec}^{\sQ,\text{twice}}_{g,n+2}(p,q,J)$ is replaced with $\text{Rec}^{\sQ,\text{sym}}_{g,n+2}(p,q,J)$ where the latter is symmetric in $p\leftrightarrow q$.

If we substitute \eqref{Recsym} to \eqref{twice} and subtract it by $\omega_{g,n+2}(q_0,p_0,J)$, almost all terms trivially cancel because $\text{Rec}^{\sQ,\text{sym}}_{g,n+2}(p,q,J)$ is symmetric in $p\leftrightarrow q$. The nontrivial terms can be assembled as follows:
\begin{align}
    &\omega_{g,n+2}(p_0,q_0,J)-\omega_{g,n+2}(q_0,p_0,J)\nonumber\\
    &=\left(\frac{1}{2\pi i} \oint_{q\in C_+}-\frac{1}{2\pi i} \oint_{q\in C_-}\right)\cdot\Bigg(\frac{\eta^{q}(p_0)\cdot\Delta\omega_{0,2}(q,q_0)-\eta^{q}(q_0)\cdot\Delta\omega_{0,2}(q,p_0)}{4\,\omega_{0,1}(q)}\cdot\omega_{g,n+1}(q,J)\nonumber\\
    &\quad\quad -2\Res_{p=q}\frac{\eta^p(p_0)}{4\omega_{0,1}(p)}\cdot\frac{\eta^q(q_0)}{4\omega_{0,1}(q)}\cdot\text{Rec}^{\sQ,\text{sym}}_{g,n+2}(p,q,J)\Bigg),\label{p0q0diff}
\end{align}
where the last term is precisely the contribution coming from the last term in \eqref{resrec2}.

The remaining task is to simplify the last line in \eqref{p0q0diff}. This is doable mainly because $\omega_{g',n'+2}(p,q,J)$ are regular at $p=q$ for any $2g',n'\in\bZ_{\geq0}$ except $g'=n'=0$. Explicitly, we have:
\begin{align}
    &\Res_{p=q}\frac{\eta^p(p_0)}{4\omega_{0,1}(p)}\cdot\frac{\eta^q(q_0)}{4\omega_{0,1}(q)}\cdot\text{Rec}^{\sQ,\text{sym}}_{g,n+2}(p,q,J)\nonumber\\
    &=\Res_{p=q}\frac{\eta^p(p_0)}{4\omega_{0,1}(p)}\cdot\frac{\eta^q(q_0)}{4\omega_{0,1}(q)}\nonumber\\
    &\cdot\Bigg(B(p,q)\sum_{g_1+g_2=g}^{\Delta}\sum_{J_1\sqcup J_2=J}\bigg(\omega_{g_1,n_1+1}(p,J_1)\cdot\omega_{g_2,n_2+1}(q,J_2)+\omega_{g-1,n+2}(p,q,J)\bigg)\nonumber\\
    &+\sQ\omega_{g-\frac12,n+1}(p,J)\cdot dx(q)\cdot d_q\cdot\left(\frac{B(p,q)}{dx(q)}\right)+\sQ\omega_{g-\frac12,n+1}(q,J)\cdot dx(p)\cdot d_p\cdot\left(\frac{B(p,q)}{dx(p)}\right)\Bigg),\label{A1}
\end{align}
where $\Delta$ in the sum denotes that $\omega_{0,2}$ is replaced with $\Delta\omega_{0,2}$ so that the second term in \eqref{Rec} is included. Evaluating the residue of the second-last line in \eqref{A1}, we have
\begin{align}
    &\frac{\eta^q(q_0)}{4\omega_{0,1}(q)}\cdot d_p\left(\frac{\eta^p(p_0)}{4\omega_{0,1}(p)}\cdot\left(\omega_{g_1,n_1+1}(p,J_1)\cdot\omega_{g_2,n_2+1}(q,J_2)+\omega_{g-1,n+2}(p,q,J)\right)\right)\bigg|_{p=q}\nonumber\\
    &=\frac{\eta^q(q_0)}{4\omega_{0,1}(q)}\cdot d_q\left(\frac{\eta^q(p_0)}{4\omega_{0,1}(q)}\right)\cdot\left(\omega_{g_1,n_1+1}(q,J_1)\cdot\omega_{g_2,n_2+1}(q,J_2)+\omega_{g-1,n+2}(q,q,J)\right)\nonumber\\
    &\;\;\;\;+\frac{\eta^q(q_0)}{4\omega_{0,1}(q)}\cdot \frac{\eta^p(p_0)}{4\omega_{0,1}(p)}\cdot d_p\left(\omega_{g_1,n_1+1}(p,J_1)\cdot\omega_{g_2,n_2+1}(q,J_2)+\omega_{g-1,n+2}(p,q,J)\right)\bigg|_{p=q},\label{A2}
\end{align}
where we are abusing the notation of $d_p$ not acting on functions, but the computations hold. We now introduce another clever trick which is what \cite[Eq. A.21]{EO07} means by integrating \emph{half} by parts. Notice that
\begin{align}
    &\sum_{g_1+g_2=g}^{\Delta}\sum_{J_1\sqcup J_2=J}d_p\left(\omega_{g_1,n_1+1}(p,J_1)\cdot\omega_{g_2,n_2+1}(q,J_2)+\omega_{g-1,n+2}(p,q,J)\right)\bigg|_{p=q}\nonumber\\
    &=\sum_{g_1+g_2=g}^{\Delta}\sum_{J_1\sqcup J_2=J}d_q\left(\omega_{g_1,n_1+1}(p,J_1)\cdot\omega_{g_2,n_2+1}(q,J_2)+\omega_{g-1,n+2}(p,q,J)\right)\bigg|_{p=q}.
\end{align}
Then one can integrate the $d_q$ by parts (with the integration in $q$) of the last line in \eqref{A2}, and we have
\begin{align}
    \sum_{g_1+g_2=g}^{\Delta}\sum_{J_1\sqcup J_2=J}\eqref{A2}=&\sum_{g_1+g_2=g}^{\Delta}\sum_{J_1\sqcup J_2=J}\left(\frac{\eta^q(q_0)}{4\omega_{0,1}(q)}\cdot d_q\left(\frac{\eta^q(p_0)}{4\omega_{0,1}(q)}\right)- d_q\left(\frac{\eta^q(q_0)}{4\omega_{0,1}(q)}\right)\cdot\frac{\eta^q(p_0)}{4\omega_{0,1}(q)}\right)\nonumber\\
    &\cdot\left(\omega_{g_1,n_1+1}(q,J_1)\cdot\omega_{g_2,n_2+1}(q,J_2)+\omega_{g-1,n+2}(q,q,J)\right)\nonumber\\
    =&\sum_{g_1+g_2=g}^{\Delta}\sum_{J_1\sqcup J_2=J}\frac{\eta^{q}(p_0)\cdot\Delta\omega_{0,2}(q,q_0)-\eta^{q}(q_0)\cdot\Delta\omega_{0,2}(q,p_0)}{16\,\omega_{0,1}(q)^2}\nonumber\\
    &\cdot\left(-\omega_{g_1,n_1+1}(q,J_1)\cdot\omega_{g_2,n_2+1}(q,J_2)-\omega_{g-1,n+2}(q,q,J)\right),\label{Bterms}
\end{align}
up to total derivatives in $q$ which disappear when one considers the contour integrals in $q$.

On the other hand, the last line of \eqref{A1} involves trickier integral by parts. After evaluating the residue at $p=q$, the first term in the last line of \eqref{A1} gives
\begin{align}
    &\sQ\cdot\frac{\eta^q(q_0)}{4y(q)}d_q\left(\frac{1}{dx(q)}d_q\left(\frac{\eta^q(p_0)}{4\omega_{0,1}(q)}\cdot\omega_{g-\frac12,n+1}(q,J)\right)\right)\nonumber\\
    &=-\sQ\cdot d_q\left(\frac{\eta^q(q_0)}{4y(q)}\right)\cdot\frac{1}{dx(q)}d_q\left(\frac{\eta^q(p_0)}{4y(q)}\cdot\frac{\omega_{g-\frac12,n+1}(q,J)}{dx(q)}\right),\label{1stlast}
\end{align}
where we integrated the first $d_q$ by parts at the equality, thus \eqref{1stlast} holds up to total derivative terms. As for the second term in the last line of \eqref{A1}, here are the steps of simplification: we first integrate $d_p$ by parts which moves $d_p$, next evaluate the residue at $p=q$ which gives another $d_q$, and then integrate the $d_q$ by parts again. By following these steps, we find:
\begin{align}
   &\sQ\cdot\Res_{p=q}d_p\left(\frac{\eta^p(p_0)}{4y(p)}\right)\cdot\frac{\eta^q(q_0)}{4\omega_{0,1}(q)}\cdot\omega_{g-\frac12,n+1}(q,J)\cdot\frac{B(p,q)}{dx(p)}\nonumber\\
   &=-\sQ\cdot\frac{\eta^q(q_0)}{4y(q)}\cdot\frac{\omega_{g-\frac12,n+1}(q,J)}{dx(q)}\cdot d_q\left(\frac{1}{dx(q)}d_q\left(\frac{\eta^q(p_0)}{4y(q)}\right)\right)\nonumber\\
   &=\sQ\cdot d_q\left(\frac{\eta^q(q_0)}{4y(q)}\cdot\frac{\omega_{g-\frac12,n+1}(q,J)}{dx(q)}\right)\cdot \frac{1}{dx(q)}d_q\left(\frac{\eta^q(p_0)}{4y(q)}\right),\label{2ndlast}
\end{align}
where the last equality holds up to total derivative terms in $q$.

Several terms in \eqref{1stlast} and \eqref{2ndlast} cancel when we add them to each other, and we obtain
\begin{align}
    \eqref{1stlast} + \eqref{2ndlast}=\frac{\eta^{q}(p_0)\cdot\Delta\omega_{0,2}(q,q_0)-\eta^{q}(q_0)\cdot\Delta\omega_{0,2}(q,p_0)}{16\,\omega_{0,1}(q)^2}\cdot (-\sQ)\cdot d_q\left(\frac{\omega_{g-\frac12,n+1}(q,J)}{dx(q)}\right).\label{dBterms}
\end{align}
Combining \eqref{Bterms} and \eqref{dBterms}, we get 
\begin{align}
   \eqref{A1}=& \eqref{Bterms} + \eqref{dBterms}\nonumber\\
   =&-\frac{\eta^{q}(p_0)\cdot\Delta\omega_{0,2}(q,q_0)-\eta^{q}(q_0)\cdot\Delta\omega_{0,2}(q,p_0)}{16\,\omega_{0,1}(q)^2}\cdot \text{Rec}^{\sQ}_{g,n+1}(q,J).
\end{align}
Therefore, we finally arrive at
\begin{align}
    &\omega_{g,n+2}(p_0,q_0,J)-\omega_{g,n+2}(q_0,p_0,J)\nonumber\\
    &=\frac{1}{2\pi i} \oint_{C_{{\rm Rec}}^p}\cdot\frac{\eta^{p}(p_0)\cdot\Delta\omega_{0,2}(p,q_0)-\eta^{p}(q_0)\cdot\Delta\omega_{0,2}(p,p_0)}{4\,\omega_{0,1}(p)}\cdot \hat R_{g,n+1}(p,J).\label{symfinal}
\end{align}
This vanishes by applying Lemma \ref{lem:reseta} and by the fact that the integrand is regular at $\cR$

\subsubsection{Pole structure}\label{sec:residue}
It mostly follows from Remark \ref{rem:cancel}, hence we make a comment about ineffective ramification points. Suppose there exits a solution of refined loop equations of type $(g',n'+1)$ for all $2g'-2+n'\leq\chi$. Since $\omega_{g,n+2}(p_0,q_0,J)$ is singular as $p_0\to\sigma(q_0)$, one may concern that $\omega_{g,n+2}(p,p,J)$ in $\text{Rec}_{g,n+1}^{\sQ}(p,J)$ becomes singular at ineffective ramification points. However, we can easily show by induction that the following differential in $p_0$
\begin{equation}
    \frac{\omega_{g',k+m+1}(p_0,..,p_0,p_1,..,p_m)}{\omega_{0,1}(p_0)^k}
\end{equation}
is holomorphic at ineffective ramification points for all $2g'-2+k+m\leq\chi$. In particular, this implies that
\begin{equation}
    \frac{\omega_{g-1,n+2}(p_0,p_0,J)}{\omega_{0,1}(p_0)}
\end{equation}
in the recursion formula \eqref{RTR2} is regular at ineffective ramification points for $2g-2+n=\chi+1$. This ensures that $\omega_{g,n+1}(p_0,J)$ has no pole in $p_0$ at ineffective ramification points. Furthermore, since $\omega_{0,1}$ only appears in the denominator of the recursion formula \eqref{RTR2}, it is straightforward to show by induction that there is no pole at $\widetilde{\cP}^{(\infty)}$.

\subsubsection{Residues}\label{sec:wg1}
The recursion formula \eqref{RTR2} clearly shows that $\omega_{g,n+2}(p_0,q_0,J)$ has no residues with respect to $q_0$, due to the induction ansatz. Then, since we have already proven that $\omega_{g,n+2}(p_0,q_0,J)$ is symmetric up to $2g-2+n=\chi$, this implies $\omega_{g,n+2}(p_0,q_0,J)$ is residue free in any variable.

Note that the above argument does not work for $\omega_{g,1}$ with $2g-2=\chi+1$, hence we need a different approach. Ideally, we would like to show that there exists a meromorphic function $f_{g,1}$ such that
\begin{equation}
    \omega_{g,1}(p_0)+\omega_{g,1}(\sigma(p_0)=d f_{g,1}(p_0).\label{wg11}
\end{equation}
This is sufficient to show that there is no residue because $\omega_{g,1}(p_0)$ does not have poles at $\widetilde{\cP}_+$. At the moment of writing, we can show an analogue of \eqref{wg11} in the unrefined sector (Lemma \ref{lem:LLE}) and the $\sQ$-top part (Proposition \ref{prop:NSg1}). However, we do not know how to prove \eqref{wg11} with full generality from the refined topological recursion formula \eqref{RTR}, hence we proceed as follows (this is the same trick as \cite{KO22}):
\begin{enumerate}
    \item For $g$ with $2g-2=\chi+1$, we leave the residue-free property of $\omega_{g,1}$ unproven for the moment, and consider $\omega_{g,2}(p_0,q_0)$.
    \item Prove the symmetry $\omega_{g,2}(p_0,q_0)$ and show that it has the expected pole structure. In particular, $\omega_{g,2}$ is residue-free by the same argument as above.
    \item Note that this can be done without showing the residue-free property of $\omega_{g,1}$ -- the recursion formula \eqref{RTR2} contains $\omega_{g,1}(p_0)$ but not $\omega_{g,1}(q_0)$.
    \item Show the dilaton equation \eqref{dilaton} between $\omega_{g,1}(p_0)$ and $\omega_{g,2}(p_0,q_0)$. Again, we do not need the residue-free property of $\omega_{g,1}(p_0)$.
    \item Prove that $\omega_{g,1}(p_0)$ has no residue by using the dilaton equation (c.f. \cite[Proposition 2.27]{KO22}. This shows existence of a solutino of type $(g,1)$.
\end{enumerate}

Let us emphasise once again that the above steps (2)-(4) can be done without knowing the residues of $\omega_{g,1}$, hence there is no breakdown of the logic. On the other hand, this proof is conceptually intertwined, and in some sense, artificial. We believe that a deeper understanding of refined loop equations will lead us to a simpler, more natural proof.

\subsection{Comments on Higher-ramified generalisations}\label{sec:comments}

The purpose of this section is not to prove any statements, but rather to share the author's views towards refinements with the readers.

The global involution $\sigma$ plays an essential role everywhere in the present paper. As a consequence, one may find it difficult to generalise the refined recursion formula beyond hyperelliptic curves. Let us discuss a potential extension for a refinement of the higher-ramified recursion of Bouchard-Eynard \cite{BE12}. 

\subsubsection{Spectral curves and Unstable multidifferentials}
Suppose the underlying curve is given in the form:
\begin{equation}
    y^r-P(x)=0,\label{higherrami1}
\end{equation}
for $r\geq2$ and $P$ is a rational function of $x$. We further assume that $\Sigma=\bP^1$ for simplicity, In this case, there exists a global automorphism $\sigma:\Sigma\to\Sigma$ which keeps $x$ invariant while acts on $y$ by $\sigma:y\mapsto e^{2\pi i/r}y$. The set of $\sigma$-fixed points then coincides with the set of ramification points $\cR$. The definition of $\omega_{0,1}$ is canonical:
\begin{equation}
    \omega_{0,1}(p_0):=y(p_0)dx(p_0).
\end{equation}

Recall that we slightly modify the definition of $\omega_{0,2}$ from the standard choice in Definition \ref{def:unstable}. In particular, we define it in such a way that $\omega_{0,2}(p_0,p_1)$ is regular at $p_0=p_1$ but singular at $p_0=\sigma(p_1)$. In the higher-ramifed setting, we respect this property and propose to define $\omega_{0,2}$ by
\begin{equation}
    \omega_{0,2}(p_0,p_1):=B(p_0,p_1)-\frac{dx(p_0)\cdot dx(p_1)}{(x(p_0)-x(p_1))^2}.
\end{equation}
This definition indeed reduces to \eqref{w02} when $r=2$. Note that $B$ on a higher-ramified curve satisfies the following instead of \eqref{BB}:
\begin{equation}
    \sum_{k=0}^{r-1}B(\sigma^k(p_0),p_1)=\frac{dx(p_0)\cdot dx(p_1)}{(x(p_0)-x(p_1)^2}.
\end{equation}
In particular, $\omega_{0,2}(p_0,p_1)$ is regular at $p_0=p_1$, but has a double pole at $p_0=\sigma^k(p_1)$ for any $k\in\{1,..,r-1\}$. 

As for $\omega_{\frac12,1}$, let $\widetilde\cP$ be the set of zeroes and poles of $ydx$ that are not in $\cR$ which would be decomposed into $\widetilde\cP=\sqcup_{k=0}^{r-1}\sigma^k(\widetilde\cP_0)$ with a certain choice $\widetilde\cP_0\subset\widetilde\cP$. Then, it is natural to propose the definition of $\omega_{\frac12,1}$ as
\begin{equation}
    \omega_{\frac12,1}(p_0):=\frac{r-1}{2}\sQ\left(-\frac{dy(p_0)}{y(p_0)}+\sum_{p\in\widetilde{\cP}_0}\sum_{k=0}^{r-1}\mu_{\sigma^k(p)}\cdot\eta^{{\sigma^k(p)},b}(p_0)\right)
\end{equation}
where $b\in\Sigma$ is any base point, $\mu_p\in\bC$ is a parameter assigned for each $p\in\widetilde{P}$, and we require $\sum_{k=0}^{r-1}\mu_{\sigma^k(p)}=0
$ so that $\omega_{\frac12,1}$ is independent of the base point $b$. When $r=2$, in particular, the definition effectively reduces to the definition \eqref{w1/2,1} by identifying $\mu_p-\mu_{\sigma(p)}$ with $\mu_p$ in a hyperelliptic refined spectral curve $\cS_{\bm\mu,\bm\kappa}$. Hence imposing the condition $\sum_{k=0}^{r-1}\mu_{\sigma^k(p)}=0
$ would not contradict to our main discussions.

Having these mentioned, a plausible definition of a genus-zero refined spectral curve of degree $r$ would be the following:
\begin{defin}
    A \emph{genus-zero refined spectral curve of degree $r$}, which we denote by $\cS_{\bm\mu,\bm\kappa}^{(r)}$,  consists of:
    \begin{itemize}
        \item $\Sigma=\bP^1$
        \item $(x,y)$: two meromorphic functions satisfying an equation of the form \eqref{higherrami1}
        \item $\widetilde\cP_0\subset\widetilde\cP$: a choice of the decomposition $\widetilde\cP=\sqcup_{k=0}^{r-1}\sigma^k(\widetilde\cP_0)$ and the associated parameter $\mu_p\in\bC$ for all $p\in\widetilde\cP$ such that $\sum_{k=0}^{r-1}\mu_{\sigma^k(p)}=0
$.
    \end{itemize}
\end{defin}

\subsubsection{Stable multidifferentials and the Heisenberg vertex operator algebra $\cH(\mathfrak{sl}_r)$.}

As a reasonable generalisation beyond the hyperelliptic case, we aim for constructing stable differentials $\omega_{g,n+1}$ for $2g,n\in\bZ_{\geq0}$ with $2g-2+n\geq0$ with the following properties:
\begin{itemize}
    \item $\omega_{g,n+1}$ is a meromorphic, symmetric, residue-free multidifferential,
    \item $\omega_{g,n+1}$ is regular as $p_0\to p\in J_0\cup\widetilde\cP_0$,
    \item $\omega_{g,n+1}$ may be singular as $p_0\to p\in\cR^*\cup\bigcup_{i=1}^{r-1}\sigma^i(J\cup\widetilde\cP_0)$,
    \item $\omega_{g,n+1}$ solves the ``higher-ramified refined loop equation of type $(g,n+1)$''.
\end{itemize}
The second condition may be considered as a geometric realisation of the notion of ``physical sheet'' in the higher-ramified setting. Note that, for example when $r=3$, if $\omega_{g,2}(p_0,p_1)$ has a pole at $p_0=\sigma(p_1)$, then the symmetry condition implies that $\omega_{g,n+1}$ necessarily has a pole at $p_0=\sigma^2(p_1)$ too. Thus, the above pole structure is in fact a reasonable extension from the hyperelliptic case.

An immediate question is: what are higher-ramified refined loop equations? Although the author does not have a concretely clear understanding at the moment of writing, it is worth sharing a possible approach.

Recall that the hyperelliptic refined topological recursion in the present paper is inspired by a $\beta$-deformation of the Virasoro algebra, or more concretely $\cH(\mathfrak{sl}_2)$ -- the Heisenberg vertex operator algebra of $\mathfrak{sl}_2$. We note that it is \emph{not} $\cH(\mathfrak{gl}_2)$ which is indeed important and is related to the breakdown of the linear loop equation (Lemma \ref{lem:LLE}) in the refined setting. Therefore, it is only natural to expect that the higher-ramified recursion should be compatible with $\cH(\mathfrak{sl}_r)$, but not $\cH(\mathfrak{gl}_r)$, unlike the unrefined Bouchard-Eynard recursion \cite{BE12} which is dual to twisted modules of $\cH(\mathfrak{gl}_r)$ \cite{BBCCN18}.

We focus on the case $r=3$. Let $J_1,J_2,J_3$ be the three Heisenberg fields of $\cH(\mathfrak{gl}_3)$, then for $i\in\{1,2,3\}$ the $\sQ$-deformed $W_i$-fields of $\cH(\mathfrak{gl}_3)$ are respectively given by\footnote{See \cite[Section 4]{BBCC21} which we use almost the same notation, except $(\hbar,\alpha_0)$ in \cite{BBCC21} correspond to $(\hbar^2,\sQ)$ in the present paper.}
\begin{align}
    W_1=&:J_1+J_2+J_3:,\label{W1}\\
    W_2=&:J_1J_2+J_3J_1+J_2J_3+\hbar\sQ \partial(J_2+2J_3):,\\
    W_3=&:J_1J_2J_3+\hbar\sQ(J_1\partial J_3+\partial(J_2J_3))+\hbar^2\sQ^2\partial^2J_3:\label{W3},
\end{align}
where $:\cdots:$ denotes normal orderings. In order to reduce $\cH(\mathfrak{gl}_3)$ to $\cH(\mathfrak{sl}_3)$, let us consider the change of basis as follows:
\begin{align}
    H_0:=\frac{1}{\sqrt{3}}(J_1+J_2+J_3),\quad H_1:=\frac{1}{\sqrt{6}}(2J_1-J_2-J_3),\quad H_2:=\frac{1}{\sqrt{2}}(J_2-J_3).
\end{align}
Note that $H_i$ for $i\in\{0,1,2\}$ commute with each other and are normalised appropriately. Then one finds $W_2(\mathfrak{sl}_3)$ and $W_3(\mathfrak{sl}_3)$ by expressing \eqref{W1}-\eqref{W3} in terms of $H_0,H_1,H_2$ and setting $W_1=H_0=0$:
\begin{align}
    W_2(\mathfrak{sl}_3)&=-\frac12:H_1^2+H_2^2+\sQ(\sqrt{6}H_1+\sqrt{2}H_2):,\\
    W_3(\mathfrak{sl}_3)&=\frac{1}{3\sqrt{6}}:H_1^3-3H_2H_3^2-\sqrt{6}\hbar\sQ(\sqrt{3}H_1+3H_2)\partial H_2-3\hbar^2\sQ^2(\partial^2H_1+\sqrt{3}\partial^2H_2):
\end{align}

The last step is to decode information about refined loop equations from $W_2(\mathfrak{sl}_2)$ and $W_3(\mathfrak{sl}_2)$. This involves a careful consideration because terms like $\omega_{g-2,n+3}(p,p,\sigma(p),J)$ may appear which at first glance diverges according to the prescribed pole structure. We hope to return to further investigation on this approach in the future. It is perhaps simpler to consider a higher-ramified analogue of the $\sQ$-top recursion, because the recursion may not involve such terms (c.f. no $\omega_{g-1,n+2}$ in the $\sQ$-top recursion).

\subsection{New Degrees of Freedom}
Let us discuss interesting new degrees of of the free energy $F_g$ freedom that only appears in the refined setting.

Suppose we shift the primitive $\Phi(p)$ by some function $U(x(p))$ in the definition of the free energy \eqref{Fg}. In the unrefined setting, Lemma \ref{lem:LLE} (linear loop equation) implies that such a shift is irrelevant and $F_g$ receives no contributions from $U(x(p))$. As discussed in Section \ref{sec:existence}, however, there is no refined analogue of Lemma \ref{lem:LLE}, hence $U(x(p))$ will modify $F_g$. On the other hand, the definition of the free energy (Definition \ref{def:Fg}) is consistent with the dilaton equation \eqref{dilaton} (Proposition \ref{prop:dilaton}). Therefore, one should ask: can we find $U(x(p))$ that keeps the dilaton equation unchanged but modifies the free energy $F_g$? If such $U(x(p))$ exists, then it is only natural to define the modified free energy $F_g^{U}$ by
\begin{equation}
    F_g^{U}:=\frac{1}{2-2g}\left(\sum_{r\in\mathcal{R}^*}\underset{p=r}{{\rm Res}}+\sum_{r\in\widetilde\cP_-^0}\underset{p=r}{{\rm Res}}\right)\Big(\Phi(p)+U(x(p))\Big)\cdot\omega_{g,1}(p).\label{FgU}
\end{equation}
Although a full classification is to be investigated, let us give a concrete example to indicate that such $U$ surprisingly exists.

Let us consider the genus-zero degree-two curve given by
\begin{equation}
   P(x,y)= 4x^2y^2-(x^2+4\lambda_{\infty}x+4\lambda_{0}^2)
\end{equation}
which appears in e.g. \cite{MS15,IKT18-2}. Note that $\widetilde{\cP}^{(0)}$ is empty while there are four points in $\widetilde{\cP}^{(\infty)}$, and we consider the corresponding refined spectral curve $\cS_{\bm\mu}$. We explicitly checked that if we define\footnote{This form is inspired by a matrix model investigation e.g. \cite{MS15}.}
\begin{equation}
    U(x):=\alpha \log x,\quad\Phi^U(p):=\Phi(p)+U(x(p)),
\end{equation}
then for a general $\alpha\in\bC$, we have
\begin{equation}
\left(\sum_{r\in\cR}\Res_{p=r}+\Res_{\sigma(p_1)}\right)\Phi^U(p)\cdot\omega_{\frac12,2}(p,p_1)=0.
\end{equation}
Furthermore, repeating the same technique as the proof of Proposition \ref{prop:dilaton}, one can show that the dilaton equation holds even after replacing $\Phi$ with $\Phi^U$.

As a curious observation, $U$ \emph{does} change $F_g^U$. For example we have
\begin{equation}
    F_2^{U}-F_2=\alpha\cdot\mu_{0}\cdot\sQ^2\frac{1-2\mu_{0}^2\cdot\sQ^2}{48\lambda_{0}^3}
\end{equation}
where $\mu_0$ is associated with one of the zeroes of $x$. Note that the difference disappears when $\sQ=0$ as expected.

On the one hand, the canonical definition of $F_g$ (Definition \ref{def:Fg}) requires no additional information than the corresponding refined spectral curve $\cS_{\bm\mu}$, hence it is geometrically the most natural form. However, on the other hand, this feature is peculiar in the refined setting, and one may suspect that existence of such a nontrivial $U$ means something deeper. It is an open question when such $U$ exists and what the role of $F_g^U$ in other subjects in mathematics and physics.

\newpage
\begin{align}
   \text{Rec}^{\sQ,\text{twice}}_{g,n+2}(p,q,J):=&\sum_{g_1+g_2+g_3=g}\sum_{J_1\sqcup J_2\sqcup J_3=J}2\omega_{g_1,n+1}(p,J_1)\cdot2\omega_{g_2,n_2+1}(q,J_2)\cdot\omega_{g_3,n_3+2}(q,p,J_3)\nonumber\\
    &+\textcolor{red}{\sum_{g_1+g_2=g}\sum_{J_1\sqcup J_2=J}2\omega_{g_1,n+1}(p,J_1)\cdot\sum_{t\sqcup I_2= p_\sqcup  J_2}\frac{dx(q)\cdot dx(t)}{(x(q)-x(t))^2}\cdot\omega_{g_2,n_2+1}(q,I_2)}\nonumber\\
    &+\textcolor{cyan}{\sum_{g_1+g_2=g}\sum_{J_1\sqcup J_2=J}2\omega_{g_1,n+1}(p,J_1)\cdot\omega_{g_2-1,n_2+3}(q,q,p,J_2)}\nonumber\\
    &+\textcolor{blue}{\sum_{g_1+g_2=g}\sum_{J_1\sqcup J_2=J}2\omega_{g_1,n+1}(p,J_1)\cdot\sQ \cdot dx(q)\cdot d_q\cdot\left(\frac{\omega_{g_2,n_2+2}(q,p,J_2)}{dx(q)}\right)}\nonumber\\
    &+\textcolor{red}{\sum_{t\sqcup I=J}\frac{dx(p)\cdot dx(t)}{(x(p)-x(t))^2}\cdot\sum_{g_1+g_2=g}\sum_{I_1\sqcup I_2=I}2\omega_{g_2,n_2+1}(q,I_1)\cdot\omega_{g_3,n_3+2}(q,p,I_2)}\nonumber\\
    &+\sum_{t\sqcup I=J}\frac{dx(p)\cdot dx(t)}{(x(p)-x(t))^2}\cdot\sum_{u\sqcup \hat I= p\sqcup I}\frac{dx(q)\cdot dx(u)}{(x(q)-x(u))^2}\cdot\omega_{g,n}(q,\hat I)\nonumber\\
    &+\textcolor{orange}{\sum_{t\sqcup I=J}\frac{dx(p)\cdot dx(t)}{(x(p)-x(t))^2}\cdot\omega_{g-1,n+2}(q,q,p,I)}\nonumber\\
    &+\textcolor{purple}{\sum_{t\sqcup I=J}\frac{dx(p)\cdot dx(t)}{(x(p)-x(t))^2}\cdot\sQ\cdot dx(q)\cdot d_q\left(\frac{\omega_{g-\frac12,n+1}(q,p,I)}{dx(q)}\right)}\nonumber\\
    &+\sum_{g_1+g_2=g}\sum_{J_1\sqcup J_2=J}2\omega_{g_1,n+2}(q,p,J_1)\cdot\omega_{g_2-1,n_2+2}(q,p,J_2)\nonumber\\
    &+\textcolor{cyan}{\sum_{g_1+g_2=g}\sum_{J_1\sqcup J_2=J}2\omega_{g_1,n+1}(q,J_1)\cdot\omega_{g_2-1,n_2+3}(q,p,p,J_2)}\nonumber\\
    &+\textcolor{orange}{\sum_{t\sqcup I= p \sqcup p\sqcup  J}\frac{dx(q)\cdot dx(t)}{(x(q)-x(t))^2}\cdot\omega_{g-1,n+1}(q,I)}\nonumber\\
    &+\omega_{g-2,n+4}(q,q,p,p,J)\nonumber\\
    &+\textcolor{green}{\sQ \cdot dx(q)\cdot d_q\left(\frac{\omega_{g-\frac32,n+3}(q,p,p,J)}{dx(q)}\right)}\nonumber\\
    &+\textcolor{blue}{\sQ\cdot dx(p)\cdot d_p\cdot\left(\sum_{g_1+g_2=g}\sum_{J_1\sqcup J_2=J}\frac{2\omega_{g_1,n_1+1}(q,J_1)\cdot\omega_{g_2-\frac12,n_2+2}(q,p,J_2)}{dx(p)}\right)}\nonumber\\
    &+\textcolor{purple}{\sQ\cdot dx(p)\cdot d_p\cdot\left(\frac{1}{dx(p)}\sum_{t\sqcup I=p\sqcup J}\frac{dx(q)\cdot dx(t)}{(x(q)-x(t))^2}\cdot\omega_{g-1,n+1}(q,I)\right)}\nonumber\\
    &+\textcolor{green}{\sQ \cdot dx(p)\cdot d_p\cdot \left(\frac{\omega_{g-\frac32,n+3}(q,q,p,J)}{dx(p)}\right)}\nonumber\\
    &+\sQ^2\cdot dx(p)\cdot d_p \cdot dx(q)\cdot d_q \left(\frac{\omega_{g-1,n+2}(q,p,J)}{dx(p)}\right)\label{Recpq}
\end{align}

\newpage

\newpage

\printbibliography

\end{document}